\newcommand{\beq}{\begin{equation}}
\newcommand{\eeq}{\end{equation}}
\newcommand{\beqa}{\begin{eqnarray}}
\newcommand{\eeqa}{\end{eqnarray}}
\newcommand{\ba}{\begin{array}}
\newcommand{\ea}{\end{array}}
\newtheorem{proposition}{Proposition}
\newtheorem{theorem}{Theorem}
\newtheorem{remark}{Remark}
\newtheorem{lemma}{Lemma}
\newtheorem{definition}{Definition}
\newtheorem{upperbound}{Upper Bound}
\newtheorem{outerregion}{Outer Region}
\begin{document}

\title{Distributed Compression for the Uplink of a Backhaul-Constrained Coordinated Cellular Network}
\author{Aitor del Coso$^{*}$ and S\'{e}bastien Simoens\thanks{Aitor del Coso is with the Centre Tecnol\`{o}gic de Telecomunicacions de Catalunya (CTTC), Av. Canal Olímpic S/N, 08860, Castelldefels, Spain. E-mail: aitor.delcoso@cttc.es}\thanks{S\'{e}bastien Simoens is with
Motorola Labs PARIS, Parc Les Algorithmes, 91193, Saint-Aubin,
France. E-mail: simoens@motorola.com. }\thanks{This work was
partially supported by the internship program of Motorola Labs,
Paris. Also, it was partially funded by the European Comission
under projects COOPCOM (IST-033533) and NEWCOM++ (IST-216715).} }
\maketitle


\begin{abstract}
We consider a backhaul-constrained coordinated cellular network.
That is, a single-frequency network with $N+1$ multi-antenna base
stations (BSs) that cooperate in order to decode the users' data,
and that are linked by means of a common lossless backhaul, of
limited capacity $\mathrm{R}$. To implement receive cooperation,
we propose distributed compression: $N$ BSs, upon receiving their
signals, compress them using a multi-source lossy compression
code. Then, they send the compressed vectors to a central BS,
which performs users' decoding. Distributed Wyner-Ziv coding is
proposed to be used, and is optimally designed in this work. The
first part of the paper is devoted to a network with a unique
multi-antenna user, that transmits a predefined Gaussian
space-time codeword. For such a scenario, the compression
codebooks at the BSs are optimized, considering the user's
achievable rate as the performance metric. In particular, for $N =
1$ the optimum codebook distribution is derived in closed form,
while for $N>1$ an iterative algorithm is devised. The second part
of the contribution focusses on the multi-user scenario. For it,
the achievable rate region is obtained by means of the optimum
compression codebooks for sum-rate and weighted sum-rate,
respectively. 
\end{abstract}

\vspace{5mm} \textbf{EDICS:} WIN-INFO, MSP-CAPC, MSP-MULT,
WIN-CONT.

\newpage
\section{Introduction}

Inter-cell interference is one of the most limiting factors of
current cellular networks. It can be partially, but not totally,
mitigated resorting to frequency-division multiplexing, sectorized
antennas and fractional frequency reuse \cite{LEE86}. However, a
more spectrally efficient solution has been recently proposed:
coordinated cellular networks \cite{FOSCHINI06}. They consist of
single-frequency networks with base stations (BSs) cooperating in
order to transmit to and receive from the mobile terminals.
Beamforming mechanisms are thus deployed in the downlink, as well
as coherent detection in the uplink, to drastically augment the
system capacity\cite{KARAKAYALI06,SIMEONE07}. Hereafter, we only
focus on the uplink channel.

Preliminary studies on the uplink performance of coordinated
networks consider all BSs connected via a lossless backhaul with
unlimited capacity \cite{SIMEONE_07}\cite{KAMOUN07}. Accordingly,
the capacity region of the network equals that of a MIMO
multi-access channel, with a supra-receiver containing all the
antennas of all cooperative BSs \cite{TELATAR05}. Such an
assumption seems optimistic in short-mid term, as operators are
currently worried about the costs of upgrading their backhaul to
support \textit{e.g.}, HSPA traffic load. To deal with a realistic
backhaul constraint, two approaches have been proposed:
\textit{i}) \textit{distributed decoding}\cite{AKTAS07,GRANT04},
consisting on a demodulating scheme distributely carried out among
BSs, based on local decisions and belief propagation. Decoding
delay appears to be its main problem. \textit{ii})
\textit{Quantization} \cite{MARSCH07}, where BSs quantize their
observations and forward them to decoding unit. Its main
limitation relies on its inability to take profit of signal
correlation between antennas/BSs; thus, introduces redundancy into
the backhaul.


This paper considers a new approach for the network:
\textit{distributed compression}. The cooperative BSs, upon
receiving their signals, distributely compress them using a
multi-source lossy compression code \cite{OOHAMA97}. Then, via the
lossless backhaul, they transmit the compressed signals to the
central unit (also a BS); which decompresses them using its own
received signal as side information, and finally uses them to
estimate the users' messages. Distributed compression has been
already proposed for coordinated networks in
\cite{SANDEROVICH05,SANDEROVICH07,SANDEROVICH08}. However, in
those works, authors consider single-antenna BSs with ergodic
fading. We extend the analysis here to the multiple-antenna case
with time-invariant
fading. 



The compression of signals with side information at the decoder is
introduced by Wyner and Ziv in \cite{WYNER78_I,WYNER78}. They show
that side information at the encoder is useless (\textit{i.e.,}
the rate-distortion tradeoff remains unchanged) to compress a
single, Gaussian, source when it is available at the decoder
\cite[Section 3]{WYNER78}. Unfortunately, when considering
multiple (correlated) signals, independently compressed at
different BSs, and to be recovered at a central unit with side
information, such a statement can not be claimed. Indeed, this is
an open problem, for which it is not even clear when
source-channel separation applies \cite{GASTPAR06_liaison}. To the
best of authors knowledge, the scheme that performs best (in a
rate-distortion sense) for this problem is Distributed Wyner-Ziv
(D-WZ) compression \cite{GASTPAR04}. Such a compression is the
direct extension of Berger-Tung coding to the decoding side
information case \cite{TUNG_THESIS,BERGER07}. In turn, Berger-Tung
compression can be thought as the lossy counterpart of the
Slepian-Wolf lossless coding \cite{SLEPIAN73}. D-WZ coding is thus
the compresssion scheme proposed to be used, and is detailed in
the sequel.

\textbf{Summary of Contributions.} This paper considers a
single-frequency network with $N+1$ multi-antenna BSs. The first
base station, denoted $\mathrm{BS}_0$, is the central unit and
centralizes the users' decoding. The rest,
$\mathrm{BS}_1,\cdots,\mathrm{BS}_N$, are cooperative BSs, which
distributely compress their received signals using a D-WZ code,
and independently transmit them to $\mathrm{BS}_0$ via the common
backhaul of aggregate capacity $\mathrm{R}$. In the network,
\textit{time-invariant}, \textit{frequency-flat} channels are
assumed, as well as transmit and receive channel state information
(CSI) at the users and BSs, respectively.

The first part of the paper is devoted to a network with a single
user, equipped with multiple antennas. It aims at deriving the
optimum compression codebooks at the BSs, for which the user's
transmission rate is maximized. Our contributions are the
following:

\begin{itemize}
\item First, Sec. \ref{sec:compression of vectors} revisits
Wyner-Ziv coding \cite[Section 3]{WYNER78} and Distributed
Wyner-Ziv coding \cite{TUNG_THESIS}, and adapts them to our
compression scenario. 

\item For the single user transmitting a given Gaussian codeword,
Sec. \ref{sec:system model} proves that the optimum compression
codebooks at the BSs are Gaussian distributed. Accordingly, the
compression step is modelled by means of Gaussian "compression"
noise, added by the BSs on their observations before
retransmitting them to the central unit.

\item Considering a unique cooperative BS in the network
(\textit{i.e.,} $N=1$), Sec. \ref{sec:single basestation} derives
in closed form the optimum "compression" noise for which the
user's rate is maximized. We also show that conditional
Karhunen-Lo\`{e}ve transform plus independent Wyner-Ziv coding of
scalar streams is optimal.

\item The compression design is extended in Sec. \ref{sec:multiple
basestations} to arbitrary $N$ BSs. The optimum "compression"
noises (\textit{i.e.,} the optimum codebook distributions) are
obtained by means of an iterative algorithm, constructed using
dual decomposition theory and a non-linear block coordinate
approach \cite{BOYD,BERTSEKAS}. Due to the non-convexity of the
noises optimization, only local convergence is proven.
\end{itemize}

The second part of the paper extends the analysis to a network
where multiple users transmit simultaneously. For it, the
achievable rate region is described resorting to the weighted
sum-rate optimization:
\begin{itemize}
\item First, the sum-rate of the network is derived in Sec.
\ref{sec:multiuser}, adapting previous results a single-user.
Later, the weighted sum-rate, and its associated optimum
compression "noises", are obtained by means of an iterative
algorithm, constructed using dual decomposition and Gradient
Projection \cite{BERTSEKAS}.
\end{itemize}

\textbf{Notation.} $\bm{E}\left\{\cdot\right\}$ denotes
expectation. $\bm{A}^T$, $\bm{A}^\dagger$ and $a^*$ stand for the
transpose of $\bm{A}$, conjugate transpose of $\bm{A}$ and complex
conjugate of $a$, respectively. $[a]^+=\max\left\{a,0\right\}$.
$I\left(\cdot;\cdot\right)$ denotes mutual information,
$H\left(\cdot\right)$ entropy. The derivative of a scalar function
$f\left(\cdot\right)$ with respect to a complex matrix $\bm{X}$ is
defined as in \cite{MATRIXCOOKBOOK}, \textit{i.e.,}
$\left[\frac{\partial f}{\partial
\bm{X}}\right]_{i,j}=\frac{\partial f}{\partial
\left[\bm{X}\right]_{i,j}}$. In such a way, \textit{e.g.,}
$\frac{\partial \mathrm{tr}\left\{\bm{A}\bm{X}\right\}}{\partial
\bm{X}}=\bm{A}^T$. Moreover, we compactly write
$\bm{Y}_{1:N}=\left\{\bm{Y}_{1},\cdots,\bm{Y}_{N}\right\}$,
 $\bm{Y}_{\mathcal{G}}=\left\{\bm{Y}_{i}|i\in\mathcal{G}\right\}$
 and $\bm{Y}_{n}^c=\left\{\bm{Y}_{i}|i\neq n\right\}$. A sequence
 of vectors $\left\{\bm{Y}_i^t\right\}_{t=1}^n$ is compactly denoted by
 $\bm{Y}_i^n$.
Furthermore, to define block-diagonal matrices, we state
$\mathrm{diag}\left(\bm{A}_1,\cdots,\bm{A}_n\right)$, with
$\bm{A}_i$ square matrices. $\mathrm{coh}\left(\cdot\right)$
stands for convex hull. Finally, the covariance of random vector
$\bm{X}$ conditioned on random vector $\bm{Y}$ is denoted by
$\bm{R}_{\bm{X}|\bm{Y}}$ and computed $\bm{R}_{\bm{X}|\bm{Y}} =
\bm{E}\left\{\left(\bm{X}-\bm{E}\left\{\bm{X}|\bm{Y}\right\}\right)\left(\bm{X}-\bm{E}\left\{\bm{X}|\bm{Y}\right\}\right)^\dagger|\bm{Y}\right\}.$

\section{Compression of Vector Sources}\label{sec:compression of
vectors}
%

The aim of compression within coordinated networks is to make the
decoder extract the more mutual information from the reconstructed
signals. Known rate-distortion results apply to this goal as
follows.

\subsection{Single-Source Compression with Decoder Side Information}

Consider Fig. \ref{fig:fig2} with $N=1$. Let $\bm{Y}_1^n$ be a
zero-mean, temporally memoryless, Gaussian vector to be compressed
at $\mathrm{BS}_1$. Assume that it is the observation of the
signal transmitted by user $s$, \textit{i.e.}, $\bm{X}_s^n$.
$\mathrm{BS}_1$ compresses the signal and sends it to
$\mathrm{BS}_0$, which makes use of its side information
$\bm{Y}_0^n$ to decompress it. Finally, once reconstructed the
signal into vector $\hat{\bm{Y}}_1^n$, the decoder uses it to
estimate the message transmitted by the user. Wyner's results
\cite{WYNER78} apply to this problem as follows.


\begin{definition}[Single-source Compression Code]\label{Def_SS} A $\left(n,2^{n\rho}\right)$
compression code with side information at the decoder $\bm{Y}_0$ is
defined by two mappings, $f_n(\cdot)$ and $g_n(\cdot)$ and three
spaces $\mathcal{Y}_1,\hat{\mathcal{Y}}_1$ and $\mathcal{Y}_0$,
where
\begin{eqnarray}
&&f_n:\mathcal{Y}_1^n\rightarrow \left\{1,\cdots,2^{n\rho}\right\}\nonumber\\
&&g_n:\left\{1,\cdots,2^{n\rho}\right\}\times\mathcal{Y}_0^n\rightarrow\hat{\mathcal{Y}}_1^n\nonumber.
\end{eqnarray}
\end{definition}

\begin{proposition}[Wyner-Ziv Coding \cite{WYNER78}] \label{Def_wyner_single_source}Let the random vector $\hat{\bm{Y}}_1$ with conditional
probability $p\left(\hat{\bm{Y}}_1|\bm{Y}_1\right)$ satisfy the
Markov chain
$\bm{Y}_0\rightarrow\bm{Y}_1\rightarrow\hat{\bm{Y}}_1$, and let
$\bm{Y}_0$ and $\bm{Y}_1$ be jointly Gaussian. Then, considering a
sequence of compression codes $\left(n,2^{n\rho}\right)$ with side
information $\bm{Y}_0$ at the decoder:
\begin{eqnarray}\label{eq:wyner_single_source}
\frac{1}{n}I\left(\bm{X}_s^n;
\bm{Y}_0^n,g_n\left(\bm{Y}_0^n,f_n\left(\bm{Y}_1^n\right)\right)\right)=I\left(\bm{X}_s;
\bm{Y}_0,\hat{\bm{Y}}_1\right)
\end{eqnarray} as $n\rightarrow \infty$ if:
\begin{itemize}
\item the compression rate $\rho$ satisfies
\begin{eqnarray}\label{eq:wyner_single_source_rate_constraint}
I\left(\bm{Y}_1;\hat{\bm{Y}}_1|\bm{Y}_0\right)\leq\rho,
\end{eqnarray} \item the compression codebook $\mathfrak{C}$ consists of $2^{n\rho}$
random sequences $\hat{\bm{Y}}_1^n$ drawn \textit{i.i.d.} from
$\prod_{t=0}^np\left(\hat{\bm{Y}}_1\right)$, where
$p\left(\hat{\bm{Y}}_1\right)=\sum_{\bm{Y}_1}p\left({\bm{Y}}_1\right)p\left(\hat{\bm{Y}}_1|\bm{Y}_1\right)$,
\item the encoding $f_n\left(\cdot\right)$ outputs the bin-index
of codewords $\hat{\bm{Y}}_1^n$ that are jointly typical with the
source sequence $\bm{Y}_1^n$. In turn, $g_n\left(\cdot\right)$
outputs the codeword $\hat{\bm{Y}}_1^n$ that, belonging to the bin
selected by the encoder, is jointly typical with ${\bm{Y}}_0^n$.
\end{itemize}
\end{proposition}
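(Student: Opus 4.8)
The plan is to establish the statement in two stages: first to show that the decoder can reconstruct the intended codeword $\hat{\bm{Y}}_1^n$ with vanishing error probability under the stated rate constraint, and then to single-letterize the resulting mutual information about $\bm{X}_s^n$. The first stage is the classical Wyner--Ziv random-coding argument specialized to the present setup, and the second is where the mutual-information (as opposed to distortion) formulation requires extra care.

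For the first stage I would follow the covering-plus-binning construction. Generate $2^{nR_1}$ codewords $\hat{\bm{Y}}_1^n$ drawn i.i.d. from $\prod_t p(\hat{\bm{Y}}_1)$ as prescribed in the statement, and distribute them uniformly into $2^{n\rho}$ bins. The encoder $f_n$ searches for a codeword jointly typical with its source block $\bm{Y}_1^n$ and transmits the index of the containing bin; by the covering lemma such a codeword exists with probability approaching one provided $R_1 > I(\bm{Y}_1;\hat{\bm{Y}}_1)$. The decoder $g_n$, knowing the bin and its side information $\bm{Y}_0^n$, declares the unique codeword in that bin jointly typical with $\bm{Y}_0^n$; a packing/typicality argument shows this decision is correct with high probability as long as the number of codewords per bin stays below $2^{nI(\bm{Y}_0;\hat{\bm{Y}}_1)}$, i.e. $R_1-\rho < I(\bm{Y}_0;\hat{\bm{Y}}_1)$. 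Eliminating $R_1$ and invoking the Markov chain $\bm{Y}_0\rightarrow\bm{Y}_1\rightarrow\hat{\bm{Y}}_1$, which yields $I(\bm{Y}_1;\hat{\bm{Y}}_1)-I(\bm{Y}_0;\hat{\bm{Y}}_1)=I(\bm{Y}_1;\hat{\bm{Y}}_1|\bm{Y}_0)$, the two conditions hold simultaneously precisely when $\rho\geq I(\bm{Y}_1;\hat{\bm{Y}}_1|\bm{Y}_0)$, so the probability $P_e^{(n)}$ of the error event $E$ (no typical codeword found, or a wrong one decoded) tends to zero.

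For the second stage, write $\tilde{\bm{Y}}_1^n=g_n(\bm{Y}_0^n,f_n(\bm{Y}_1^n))$ for the decoder output, which coincides with the encoder-selected codeword on $E^c$. Because the source $(\bm{X}_s^t,\bm{Y}_0^t,\bm{Y}_1^t)$ is i.i.d. and the selected codeword is jointly typical with $\bm{Y}_1^n$, the Markov lemma implies that on $E^c$ the tuple $(\bm{X}_s^n,\bm{Y}_0^n,\tilde{\bm{Y}}_1^n)$ is jointly strongly typical with respect to $p(\bm{X}_s,\bm{Y}_0)p(\hat{\bm{Y}}_1|\bm{Y}_1)$. I would then sandwich $\frac{1}{n}I(\bm{X}_s^n;\bm{Y}_0^n,\tilde{\bm{Y}}_1^n)$ between $I(\bm{X}_s;\bm{Y}_0,\hat{\bm{Y}}_1)\pm\epsilon_n$, using $H(\bm{X}_s^n)=nH(\bm{X}_s)$ together with conditional typicality and continuity of entropy on the typical set to approximate $H(\bm{X}_s^n|\bm{Y}_0^n,\tilde{\bm{Y}}_1^n)$ by $nH(\bm{X}_s|\bm{Y}_0,\hat{\bm{Y}}_1)$, with $\epsilon_n\to 0$ as $n\to\infty$.

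The step I expect to be the main obstacle is controlling the contribution of the error event $E$ to the mutual information. Since $\bm{X}_s$ is a continuous Gaussian source, a crude Fano bound does not apply directly; instead I would exploit that $\tilde{\bm{Y}}_1^n$ takes at most $2^{nR_1}$ values, so that $I(\bm{X}_s^n;\tilde{\bm{Y}}_1^n|\bm{Y}_0^n)\leq H(\tilde{\bm{Y}}_1^n)\leq nR_1$ is finite, and then split the mutual information by conditioning on the indicator $\mathbf{1}_E$. The excess relative to the successful-decoding value is thereby bounded by a binary-entropy term plus at most $P_e^{(n)} n R_1$; after normalization this leaves $P_e^{(n)}R_1\to 0$, so only the single-letter expression $I(\bm{X}_s;\bm{Y}_0,\hat{\bm{Y}}_1)$ survives in the limit.
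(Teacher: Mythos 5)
Your proposal is correct and follows essentially the same route as the paper, whose entire proof is a deferral to the joint-typicality argument of Wyner's Lemma 5 in \cite{WYNER78}: your covering/binning construction, the packing bound exploiting the side information $\bm{Y}_0^n$, the Markov-lemma typicality step, and the control of the error event via the discreteness bound $I\left(\bm{X}_s^n;\tilde{\bm{Y}}_1^n|\bm{Y}_0^n\right)\leq nR_1$ are precisely the ingredients of that classical argument. Since the paper gives no detail beyond the citation, your reconstruction is consistent with, and more explicit than, its proof.
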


\begin{proof}
The proposition is proven in \cite[Lemma 5]{WYNER78} using joint
typicality arguments. 
\end{proof}


\subsection{Multiple-Source Compression with Decoder Side Information}

Consider Fig. \ref{fig:fig2}. Let $\bm{Y}_i^n, \ i=1,\cdots,N$ be
$N$ zero-mean, temporally memoryless, Gaussian vectors to be
compressed independently at $\mathrm{BS}_1,\cdots,\mathrm{BS}_N$,
respectively. Assume that they are the observations at the BSs of
the signal transmitted by user $s$, \textit{i.e.,} $\bm{X}_s^n$.
The compressed vectors are sent to $\mathrm{BS}_0$, which
decompresses them using its side information $\bm{Y}_0^n$ and uses
them to estimate the user's message. Notice that the architecture
in Fig. \ref{fig:fig2} imposes source-channel separation at the
compression step, which is not shown to be optimal. However, it
includes the coding scheme with best known performance:
Distributed Wyner-Ziv coding \cite{GASTPAR04}. It applies to the
setup as follows.

\begin{definition}[Multiple-source Compression Code]\label{Def_MS} A $\left(n,2^{n\rho_1},\cdots,2^{n\rho_N}\right)$
compression code with side information at the decoder $\bm{Y}_0$
is defined by $N+1$ mappings, $f_n^i(\cdot)$, $i=1,\cdots,N$, and
$g_n(\cdot)$, and $2N+1$ spaces
$\mathcal{Y}_i,\hat{\mathcal{Y}}_i$, $i=1,\cdots,N$ and
$\mathcal{Y}_0$, where
\begin{eqnarray}
&&f_n^i:\mathcal{Y}_i^n\rightarrow\left\{1,\cdots,2^{n\rho_i}\right\},\ \ i=1,\cdots,N \nonumber\\
&&g_n:\left\{1,\cdots,2^{n\rho_1}\right\}\times\cdots\times
\left\{1,\cdots,2^{n\rho_N}\right\}\times\mathcal{Y}_0^n\rightarrow\hat{\mathcal{Y}}_1^n\times\cdots\times\hat{\mathcal{Y}}_N^n\nonumber.
\end{eqnarray}
\end{definition}

\begin{proposition}[Distributed Wyner-Ziv Coding \cite{GASTPAR04}] \label{Def_berger_tung} Let the random vectors
$\hat{\bm{Y}}_i$, $i=1,\cdots,N$, have conditional probability
$p\left(\hat{\bm{Y}}_i|\bm{Y}_i\right)$ and satisfy the Markov
chain $\left(\bm{Y}_0,\bm{Y}_i^c,
\hat{\bm{Y}}_i^c\right)\rightarrow\bm{Y}_i\rightarrow\hat{\bm{Y}}_i$.
Let $\bm{Y}_0$ and $\bm{Y}_i$, $i=1,\cdots,N$ be jointly Gaussian.
Then, considering a sequence of compression codes
$\left(n,2^{n\rho_1},\cdots,2^{n\rho_N}\right)$ with side
information $\bm{Y}_0$ at the decoder:
\begin{eqnarray}\label{eq:wyner_multiple_source}
\frac{1}{n}I\left(\bm{X}_s^n;
\bm{Y}_0^n,g_n\left(\bm{Y}_0^n,f_n^1\left(\bm{Y}_1^n\right),\cdots,f_n^N\left(\bm{Y}_N^n\right)\right)\right)=I\left(\bm{X}_s;
\bm{Y}_0,\hat{\bm{Y}}_{1:N}\right)
\end{eqnarray} as $n\rightarrow \infty$ if:
\begin{itemize} \item the
compression rates $\rho_1,\cdots,\rho_N$ satisfy
\begin{eqnarray}\label{eq:wyner_multiple_source_rate_constraint}
I\left(\bm{Y}_\mathcal{G};\hat{\bm{Y}}_\mathcal{G}|\bm{Y}_0,\hat{\bm{Y}}_\mathcal{G}^c\right)\leq\sum_{i\in\mathcal{G}}\rho_i
\qquad \forall \mathcal{G}\subseteq\left\{1,\cdots,N\right\},
\end{eqnarray} \item each compression codebook $\mathfrak{C}_i$, $i=1,\cdots,N$ consists of $2^{n\rho_i}$
random sequences $\hat{\bm{Y}}_i^n$ drawn \textit{i.i.d.} from
$\prod_{t=1}^np\left(\hat{\bm{Y}}_i\right)$, where
$p\left(\hat{\bm{Y}}_i\right)=\sum_{\bm{Y}_i}p\left({\bm{Y}}_i\right)p\left(\hat{\bm{Y}}_i|\bm{Y}_i\right)$.
\item for every $i=1,\cdots,N$, the encoding
$f_n^i\left(\cdot\right)$ outputs the bin-index of codewords
$\hat{\bm{Y}}_i^n$ that are jointly typical with the source
sequence $\bm{Y}_i^n$. In turn, $g_n\left(\cdot\right)$ outputs
the codewords $\hat{\bm{Y}}_i^n$, $i=1,\cdots,N$ that, belonging
to the bins selected by the encoders, are all jointly typical with
${\bm{Y}}_0^n$.
\end{itemize}
\end{proposition}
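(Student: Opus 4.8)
The plan is to establish achievability through the Berger--Tung random binning construction, following \cite{GASTPAR04} but carrying the decoder side information $\bm{Y}_0$ through every step, and then to argue that a correct reconstruction preserves the mutual information with $\bm{X}_s$ as claimed in \eqref{eq:wyner_multiple_source}. Throughout, the target single-letter law is the one in which, given $\bm{Y}_{1:N}$, the reconstructions are produced by the independent test channels $\prod_{i=1}^{N}p(\hat{\bm{Y}}_i|\bm{Y}_i)$; this is exactly the content of the Markov chains $(\bm{Y}_0,\bm{Y}_i^c,\hat{\bm{Y}}_i^c)\rightarrow\bm{Y}_i\rightarrow\hat{\bm{Y}}_i$, and it implies that $\hat{\bm{Y}}_1,\dots,\hat{\bm{Y}}_N$ are conditionally independent given the sources and do not depend on $\bm{Y}_0$.

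First I would fix auxiliary codebook rates $R_i\geq\rho_i$ and, for each $i$, draw $2^{nR_i}$ sequences $\hat{\bm{Y}}_i^n$ i.i.d.\ from $p(\hat{\bm{Y}}_i)$ to form the codebook $\mathfrak{C}_i$, then partition $\mathfrak{C}_i$ uniformly into $2^{n\rho_i}$ bins. Encoder $i$ selects a codeword of $\mathfrak{C}_i$ jointly typical with its observation $\bm{Y}_i^n$ and forwards the index of the containing bin; by the covering lemma this encoding succeeds with vanishing probability of error as soon as $R_i>I(\bm{Y}_i;\hat{\bm{Y}}_i)$. This step is routine and decouples across $i$ since the codebooks are drawn independently.

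The crux of the argument, and the step I expect to be the main obstacle, is the joint decoding error analysis. Given the $N$ bin indices and $\bm{Y}_0^n$, the decoder returns the tuple $(\hat{\bm{Y}}_1^n,\dots,\hat{\bm{Y}}_N^n)$ --- one codeword per received bin --- that is jointly typical with $\bm{Y}_0^n$. An error occurs whenever some competing tuple, differing from the transmitted one exactly in the coordinates of a nonempty $\mathcal{G}\subseteq\{1,\dots,N\}$, is simultaneously jointly typical with $(\bm{Y}_0^n,\hat{\bm{Y}}_{\mathcal{G}^c}^n)$. The indexed bins contain $\prod_{i\in\mathcal{G}}2^{n(R_i-\rho_i)}$ such competitors, and, crucially, because their codewords are drawn \emph{independently} across $i\in\mathcal{G}$, each is jointly typical with probability $\approx 2^{-n\left[\sum_{i\in\mathcal{G}}H(\hat{\bm{Y}}_i)-H(\hat{\bm{Y}}_{\mathcal{G}}|\bm{Y}_0,\hat{\bm{Y}}_{\mathcal{G}^c})\right]}$. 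A union bound over the competitors then makes subset $\mathcal{G}$ harmless provided $\sum_{i\in\mathcal{G}}(R_i-\rho_i)$ stays below that exponent. Setting the covering rates to $R_i=I(\bm{Y}_i;\hat{\bm{Y}}_i)$ and invoking the test-channel factorization $\sum_{i\in\mathcal{G}}H(\hat{\bm{Y}}_i|\bm{Y}_i)=H(\hat{\bm{Y}}_{\mathcal{G}}|\bm{Y}_0,\hat{\bm{Y}}_{\mathcal{G}^c},\bm{Y}_{\mathcal{G}})$ collapses these $2^N-1$ conditions exactly onto the region \eqref{eq:wyner_multiple_source_rate_constraint}; this reduction I would verify but not belabor. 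Controlling all $2^N-1$ subsets simultaneously, with competitors generated independently inside each bin, is what makes this analysis delicate.

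Finally I would convert correct reconstruction into the mutual-information identity \eqref{eq:wyner_multiple_source}. Conditioned on the decoding-success event, whose probability tends to one, the decoder output coincides with the encoder-selected codewords, so $(\bm{X}_s^n,\bm{Y}_0^n,\hat{\bm{Y}}_{1:N}^n)$ is jointly typical for the target law and its empirical joint distribution converges to $p(\bm{X}_s,\bm{Y}_0,\hat{\bm{Y}}_{1:N})$; a continuity argument of the kind used in \cite[Lemma~5]{WYNER78} then gives $\frac{1}{n}I(\bm{X}_s^n;\bm{Y}_0^n,\hat{\bm{Y}}_{1:N}^n)\rightarrow I(\bm{X}_s;\bm{Y}_0,\hat{\bm{Y}}_{1:N})$, the contribution of the vanishing error event being bounded through the finiteness of the per-letter rates. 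The one subtlety I would flag is that the sources here are continuous and Gaussian, so the typicality and mutual-information-stability steps must be carried out with robust typicality or a quantize-then-refine limit rather than naive strong typicality.
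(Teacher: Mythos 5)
Your proof is essentially correct, but it takes a genuinely different route from the paper. You re-derive the achievability result from scratch: Berger--Tung random binning with the side information $\bm{Y}_0$ carried through the joint-typicality decoder, a union bound over all $2^N-1$ error subsets $\mathcal{G}$, and the collapse of the resulting conditions onto (\ref{eq:wyner_multiple_source_rate_constraint}) via the factorization $\sum_{i\in\mathcal{G}}H\bigl(\hat{\bm{Y}}_i|\bm{Y}_i\bigr)=H\bigl(\hat{\bm{Y}}_{\mathcal{G}}|\bm{Y}_0,\hat{\bm{Y}}_{\mathcal{G}}^c,\bm{Y}_{\mathcal{G}}\bigr)$ --- that identity is valid under the stated Markov chains, and your exponent bookkeeping checks out. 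The paper instead gives a reduction argument: it cites the discrete-alphabet result of \cite[Theorem 2]{GASTPAR04}, and handles the Gaussian case by observing that D-WZ coding is Berger--Tung coding with decoder side information \cite{TUNG_THESIS}, which in turn can be implemented as time-sharing of \emph{successive single-source} Wyner--Ziv compressions \cite{BERGER07}, each instance of which is already covered for jointly Gaussian vectors by Wyner's own result (Proposition \ref{Def_wyner_single_source}). The trade-off is instructive. Your direct argument is self-contained, makes the polymatroidal rate region emerge transparently from the packing analysis, and does not lean on the structural equivalences in \cite{TUNG_THESIS,BERGER07}; but the step you only flag at the end --- extending covering, packing, and Markov-lemma arguments to continuous Gaussian alphabets via robust typicality or quantize-then-refine --- is precisely the nontrivial technical content of the Gaussian extension, and leaving it as a remark means your proof is complete only up to that machinery. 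The paper's reduction buys the Gaussian case essentially for free, since the only typicality arguments it ultimately invokes are those Wyner already carried out for a single Gaussian source with Gaussian side information; the price is that successive Wyner--Ziv directly achieves only the corner points of (\ref{eq:wyner_multiple_source_rate_constraint}), with time-sharing filling in the rest, and the whole argument rests on cited equivalences rather than an explicit error analysis.
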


\begin{proof}The proposition is proven for discrete sources and discrete
side information in \cite[Theorem 2]{GASTPAR04}. Also, the
extension to the Gaussian case is conjectured therein. The
conjecture can be proven by noting that D-WZ coding is equivalent
to Berger-Tung coding with side information at the
decoder\cite{TUNG_THESIS}. In turn, Berger-Tung coding can be
implemented through time-sharing of successive Wyner-Ziv
compressions \cite{BERGER07}, for which introducing side
information $\bm{Y}_0$ at the decoder reduces the compression rate
as in (\ref{eq:wyner_multiple_source_rate_constraint}). Due to
space limitations, we limit the proof to this sketch.
\end{proof}

Now, we can present the coordinated cellular network with D-WZ
coding.

\section{System Model}\label{sec:system model}

Let a single source $s$, equipped with $N_t$ antennas, transmit
data to base stations $\mathrm{BS}_0,\cdots,\mathrm{BS}_N$, each
one equipped with $N_i, \ i=1,\cdots,N$ antennas. The BSs, as in
typical 3G networks, are connected (through radio network
controllers) to a common lossless backhaul of aggregate capacity
$\mathrm{R}$, and $\mathrm{BS}_0$ is selected to be the decoding
unit. This user-to-BSs assignment is assumed to be given by upper
layers and out of the scope of the paper\footnote{The derivation
of the optimum set of BSs to decode the user is out of the scope
of our study. We refer the reader to \textit{e.g,} \cite{KAMOUN07}
for assignment algorithms and selection criteria.}.



The source transmits a message
$\omega\in\left\{1,\cdots,2^{nR_s}\right\}$ mapped onto a
zero-mean, Gaussian codeword $\bm{X}_s^n$, drawn \textit{i.i.d.}
from random vector
$\bm{X}_s\sim\mathcal{CN}\left(\bm{0},\bm{Q}\right)$ and not
subject to optimization. The transmitted signal, affected by
\textit{time-invariant}, \textit{memory-less} fading, is received
at the BSs under additive noise:
\begin{eqnarray}\label{signal_model_single_user}
\bm{Y}_i^n= \bm{H}_{s,i}\cdot\bm{X}_s^n + \bm{Z}_i^n, \ \
i=0,\cdots,N
\end{eqnarray}where $\bm{H}_{s,i}$ is the MIMO channel matrix between
user $s$ and $\mathrm{BS}_i$, and
$\bm{Z}_i\sim\mathcal{CN}\left(0,\sigma_r^2\bm{I}\right)$ is AWGN.
Channel coefficients are known at both the BSs and at the user,
while $\mathrm{BS}_0$ has centralized knowledge of all the
channels within the network.

\subsection{Problem Statement}

Base stations $\mathrm{BS}_1,\cdots,\mathrm{BS}_N$, upon receiving
their signals, distributely compress them using a D-WZ compression
code. Later, they transmit the compressed vectors to
$\textrm{BS}_0$, which recovers them and uses them to decode.
Considering so, the user's message can be reliably decoded
\textit{iif} \cite[Theorem 1]{SANDEROVICH05}:
\begin{eqnarray}
R_s &\leq& \lim_{n\rightarrow \infty}\frac{1}{n}I\left(\bm{X}_s^n;
\bm{Y}_0^n,g_n\left(\bm{Y}_0^n,f_n^1\left(\bm{Y}_1^n\right),\cdots,f_n^N\left(\bm{Y}_N^n\right)\right)\right)\\
&=&I\left(\bm{X}_s; \bm{Y}_0, \hat{\bm{Y}}_{1:N}\right).\nonumber
\end{eqnarray}
Second equality follows from (\ref{eq:wyner_multiple_source}) in
Prop. \ref{Def_berger_tung}. However, equality only holds for
compression rates satisfying the set of constraints
(\ref{eq:wyner_multiple_source_rate_constraint}). As mentioned, in
the backhaul there is only an aggregate rate constraint
$\mathrm{R}$, \textit{i.e.,} $\sum_{i\in\mathcal{G}}\rho_i\leq
\mathrm{R}$, $\forall
\mathcal{G}\subseteq\left\{1,\cdots,N\right\}$. Therefore, the set
of constraints (\ref{eq:wyner_multiple_source_rate_constraint})
can be all re-stated as:
\begin{eqnarray}\label{eq:wyner_multiple_source_rate_constraint2}
I\left(\bm{Y}_\mathcal{G};\hat{\bm{Y}}_\mathcal{G}|\bm{Y}_0,\hat{\bm{Y}}_\mathcal{G}^c\right)\leq\mathrm{R}
\qquad \forall \mathcal{G}\subseteq\left\{1,\cdots,N\right\}.
\end{eqnarray}Furthermore, from the Markov chain in Prop.
\ref{Def_berger_tung}, the following inequality holds
\begin{eqnarray}
I\left(\bm{Y}_\mathcal{G};\hat{\bm{Y}}_\mathcal{G}|\bm{Y}_0,\hat{\bm{Y}}_\mathcal{G}^c\right)\leq
I\left(\bm{Y}_{1:N};\hat{\bm{Y}}_{1:N}|\bm{Y}_0\right) \qquad
\forall \mathcal{G}\subseteq\left\{1,\cdots,N\right\}.
\end{eqnarray}Therefore, forcing the constraint
$I\left(\bm{Y}_{1:N};\hat{\bm{Y}}_{1:N}|\bm{Y}_0\right)\leq
\mathrm{R}$ to hold makes all constraints in
(\ref{eq:wyner_multiple_source_rate_constraint2}) to hold too.
Accordingly, the maximum transmission rate $\mathcal{C}$ of user
$s$ is obtained from optimization:
\begin{eqnarray}\label{eq:rate_def}
\mathcal{C}&=&\max_{\prod_{i=1}^Np\left(\hat{\bm{Y}}_{i}|\bm{Y}_{i}\right)}\ I\left(\bm{X}_s;\bm{Y}_0,\hat{\bm{Y}}_{1:N}\right)\\
&& \ \ \  \ \mathrm{s.t.} \ \
I\left(\bm{Y}_{1:N};\hat{\bm{Y}}_{1:N}|\bm{Y}_0\right)\leq
\mathrm{R},\nonumber
\end{eqnarray}%
%

\begin{theorem}\label{prop1}
Let $\bm{X}_s\sim\mathcal{CN}\left(\bm{0},\bm{Q}\right)$.
Optimization (\ref{eq:rate_def}) is solved for Gaussian
conditional distributions
$p\left(\hat{\bm{Y}}_{i}|\bm{Y}_{i}\right),\ i=1,\cdots,N$. Thus,
the compressed vectors can be modelled as $\hat{\bm{Y}}_i =
\bm{Y}_i + \bm{Z}_i^c$, where
$\bm{Z}_i^c\sim\mathcal{CN}\left(0,\bm{\Phi}_i\right)$ is
independent, Gaussian, "compression" noise at $\textrm{BS}_i$.
That is,
\begin{eqnarray}\label{eq:rate_prop1}
\mathcal{C}=\max_{\bm{\Phi}_1,\cdots,\bm{\Phi}_N\succeq0}\
\log\det\left(\bm{I}+\frac{\bm{Q}}{\sigma_r^2}\bm{H}_{s,0}^\dagger\bm{H}_{s,0}+\bm{Q}\sum_{n=1}^N\bm{H}_{s,n}^\dagger\left(\sigma_r^2\bm{I}+\bm{\Phi}_n\right)^{-1}\bm{H}_{s,n}
\right)\\
 \mathrm{s.t.} \ \
\log\det\left(\bm{I}+\textrm{diag}\left(\bm{\Phi}_1^{-1},\cdots,\bm{\Phi}_N^{-1}\right)\bm{R}_{\bm{Y}_{1:N}|\bm{Y}_0}\right)\leq
\mathrm{R}.\qquad \qquad \qquad \ \ \nonumber
\end{eqnarray}where the conditional covariance $\bm{R}_{\bm{Y}_{1:N}|\bm{Y}_0}$ follows
(\ref{app:cond_cova_singleuser_Y1:NYo_final}).
\end{theorem}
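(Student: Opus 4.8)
My plan is to split the theorem into two essentially independent tasks: (i) proving that Gaussian conditional distributions $p(\hat{\bm{Y}}_i|\bm{Y}_i)$ solve \eqref{eq:rate_def}, and (ii) evaluating the two mutual informations in closed form once Gaussianity and the additive model $\hat{\bm{Y}}_i=\bm{Y}_i+\bm{Z}_i^c$ are in hand. For (i) I would first pass to the Lagrangian. Because the backhaul imposes the single aggregate constraint $I(\bm{Y}_{1:N};\hat{\bm{Y}}_{1:N}|\bm{Y}_0)\le\mathrm{R}$, one multiplier $\mu\ge0$ suffices, and I would study
\begin{equation}
J(\mu)=I\left(\bm{X}_s;\bm{Y}_0,\hat{\bm{Y}}_{1:N}\right)-\mu\, I\left(\bm{Y}_{1:N};\hat{\bm{Y}}_{1:N}|\bm{Y}_0\right),
\end{equation}
showing that for every $\mu$ its supremum over feasible test channels is attained by Gaussian ones; the constrained optimum then follows by choosing $\mu$ to meet the constraint with equality (Lagrange duality).

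The preparatory step is to rewrite $J(\mu)$ purely in terms of differential entropies. Using that $I(\bm{X}_s;\bm{Y}_0)$ is constant, the Markov chains $(\bm{Y}_0,\bm{Y}_i^c,\hat{\bm{Y}}_i^c)\to\bm{Y}_i\to\hat{\bm{Y}}_i$, the conditional independence of the $\hat{\bm{Y}}_i$ given either $\bm{X}_s$ or $\bm{Y}_{1:N}$, and the identity $h(\hat{\bm{Y}}_{1:N}|\bm{Y}_0,\bm{X}_s)=h(\hat{\bm{Y}}_{1:N}|\bm{X}_s)$ (given $\bm{X}_s$, the observations decouple and are independent of $\bm{Y}_0$), I would reduce $J(\mu)$ to
\begin{equation}
J(\mu)=\mathrm{const}+(1-\mu)\,h\left(\hat{\bm{Y}}_{1:N}|\bm{Y}_0\right)-\sum_{i=1}^N h\left(\hat{\bm{Y}}_i|\bm{X}_s\right)+\mu\sum_{i=1}^N h\left(\hat{\bm{Y}}_i|\bm{Y}_i\right).
\end{equation}

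The hard part, and the crux of the whole theorem, is that these entropy terms carry opposite signs, so a plain maximum-entropy argument (Gaussian maximizes entropy at fixed covariance) cannot be applied term by term: maximizing $J(\mu)$ wants to enlarge $h(\hat{\bm{Y}}_{1:N}|\bm{Y}_0)$ yet shrink $h(\hat{\bm{Y}}_i|\bm{X}_s)$. To resolve this I would invoke a vector/conditional extremal inequality of the Liu--Viswanath type, which is designed precisely to maximize weighted differences of differential entropies subject to per-source covariance constraints and which certifies Gaussian optimality. The degraded Markov structure $\bm{X}_s\to\bm{Y}_i\to\hat{\bm{Y}}_i$ (each $\hat{\bm{Y}}_i$ a noisier version of the common $\bm{X}_s$) is exactly what makes the inequality applicable after fixing the conditional covariances $\bm{R}_{\bm{Y}_i|\hat{\bm{Y}}_i}$ induced by an arbitrary test channel. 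This yields that the supremum of $J(\mu)$ is reached by jointly Gaussian $(\bm{X}_s,\bm{Y}_{0:N},\hat{\bm{Y}}_{1:N})$, hence by Gaussian test channels. A general Gaussian channel $\hat{\bm{Y}}_i=\bm{A}_i\bm{Y}_i+\bm{Z}_i^c$ then reduces to the additive form $\hat{\bm{Y}}_i=\bm{Y}_i+\bm{Z}_i^c$ with $\bm{Z}_i^c\sim\mathcal{CN}(0,\bm{\Phi}_i)$, since an invertible $\bm{A}_i$ leaves every mutual information unchanged and rank-deficient $\bm{A}_i$ appear only as limits ($\bm{\Phi}_i\to\infty$ along some directions), so the parametrization $\bm{\Phi}_i\succeq0$ is without loss.

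For task (ii) all variables are jointly Gaussian and the computation is routine. Writing $\hat{\bm{Y}}_i=\bm{H}_{s,i}\bm{X}_s+\bm{Z}_i+\bm{Z}_i^c$, the decoder sees $N+1$ independent noisy observations of $\bm{X}_s$ with noise covariances $\sigma_r^2\bm{I}$ (for $\bm{Y}_0$) and $\sigma_r^2\bm{I}+\bm{\Phi}_n$ (for $\hat{\bm{Y}}_n$); the Gaussian formula $I(\bm{X};\bm{Y})=\log\det(\bm{I}+\bm{Q}\bm{H}^\dagger\bm{R}_N^{-1}\bm{H})$, with the effective observation terms added together, gives the stated objective. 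For the constraint I would use $\hat{\bm{Y}}_{1:N}=\bm{Y}_{1:N}+\bm{Z}^c$ with $\bm{Z}^c\sim\mathcal{CN}(0,\mathrm{diag}(\bm{\Phi}_1,\dots,\bm{\Phi}_N))$ and, conditioning on $\bm{Y}_0$, evaluate $I(\bm{Y}_{1:N};\hat{\bm{Y}}_{1:N}|\bm{Y}_0)=h(\hat{\bm{Y}}_{1:N}|\bm{Y}_0)-h(\bm{Z}^c)$, which collapses via Woodbury to $\log\det(\bm{I}+\mathrm{diag}(\bm{\Phi}_1^{-1},\dots,\bm{\Phi}_N^{-1})\bm{R}_{\bm{Y}_{1:N}|\bm{Y}_0})$, with $\bm{R}_{\bm{Y}_{1:N}|\bm{Y}_0}$ the conditional covariance given by the referenced appendix identity. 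The only genuine obstacle is the Gaussian-optimality step; everything after it is bookkeeping.
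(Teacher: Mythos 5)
Your reduction of the problem is sound and, up to the Lagrangian packaging, coincides with the identities the paper itself uses: your entropy expansion of $J(\mu)$ is precisely the $\mu$-weighted combination of the chain-rule decomposition (\ref{proof:prop1_0}) and the constraint expansion (\ref{proof:prop1_2}), both resting on the Markov chain of Proposition \ref{Def_berger_tung} and the conditional independence of the $\hat{\bm{Y}}_i$ across base stations; your single-multiplier sufficiency argument is legitimate even without convexity; and your closed-form evaluations of objective and constraint are exactly (\ref{app:rate_prop1}) and (\ref{app:rate_prop2}). The reduction from a general jointly Gaussian test channel to the additive model $\hat{\bm{Y}}_i=\bm{Y}_i+\bm{Z}_i^c$ also matches the paper's argument.

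The gap is at the step you yourself call the crux: Gaussian optimality. The Liu--Viswanath extremal inequality certifies Gaussian optimality for problems of the form $\max_{p(\bm{x}):\ \mathrm{Cov}(\bm{X})\preceq\bm{S}}\ h(\bm{X}+\bm{Z}_1)-\mu\, h(\bm{X}+\bm{Z}_2)$, i.e., optimization over the law of a covariance-constrained vector observed in fixed additive Gaussian noises. Here the optimization variables are the $N$ conditional laws $p(\hat{\bm{Y}}_i|\bm{Y}_i)$: there is no covariance constraint on $\hat{\bm{Y}}_i$, the term $+\mu\sum_i h(\hat{\bm{Y}}_i|\bm{Y}_i)$ is the entropy of a channel output given its own input (it has no signal-plus-independent-Gaussian-noise structure for the inequality to attach to), and the three entropy terms are conditioned on different variables ($\bm{Y}_0$, $\bm{X}_s$, $\bm{Y}_i$). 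Your fallback plan, fixing the conditional covariances induced by an arbitrary test channel and comparing with a second-moment-matched Gaussian one, runs into exactly the sign obstruction you flagged: maximum-entropy matching moves $h(\hat{\bm{Y}}_i|\bm{Y}_i)$ in the right direction but moves $-\sum_i h(\hat{\bm{Y}}_i|\bm{X}_s)$ in the wrong one. Resolving that tension for test-channel (CEO-type) optimizations is what conditional entropy-power-inequality or Fisher-information arguments are for, and it is a substantive proof in its own right, not an off-the-shelf citation; as written, the hardest step of the theorem is delegated to a result that does not apply. Note that the paper structures the argument so that less must be shown: restricting to the feasible set, it folds the objective into the constraint to get $I(\bm{X}_s;\bm{Y}_0,\hat{\bm{Y}}_{1:N})\le I(\bm{X}_s;\bm{Y}_0)+\mathrm{R}-I(\bm{Z}_{1:N};\hat{\bm{Y}}_{1:N}|\bm{Y}_0,\bm{X}_s)$ as in (\ref{proof:prop1_4}), so Gaussian optimality reduces to the single claim that Gaussian test channels minimize the noise-leakage term $I(\bm{Z}_{1:N};\hat{\bm{Y}}_{1:N}|\bm{Y}_0,\bm{X}_s)$ while meeting the backhaul constraint with equality, at which point the bound is attained. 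Adopting that restructuring (or supplying an EPI-based proof of your extremal step) is what your proposal is missing.
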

\begin{proof} See Appendix \ref{appen:prop1} for the proof.\end{proof}
\begin{remark} The maximization above is not concave in standard form: although the feasible
set is convex, the objective function is not concave on
$\bm{\Phi}_1,\cdots,\bm{\Phi}_N$.
\end{remark}

\subsection{Useful Upper Bounds}

Prior to solving (\ref{eq:rate_prop1}), we present two upper
bounds on it.

\begin{upperbound}\label{upperbound1} The achievable rate $\mathcal{C}$ in
(\ref{eq:rate_prop1}) is upper bounded by
\begin{eqnarray}
\mathcal{C} \leq I\left(\bm{X}_s;\bm{Y}_0,{\bm{Y}}_{1:N}\right)
=
\log\det\left(\bm{I}+\frac{\bm{Q}}{\sigma_r^2}\sum_{n=0}^N\bm{H}_{s,n}^\dagger
\bm{H}_{s,n}\right).
\end{eqnarray}
\end{upperbound}

\begin{upperbound}\label{upperbound2} The achievable rate
$\mathcal{C}$ in (\ref{eq:rate_prop1}) satisfies
\begin{eqnarray}
\mathcal{C} \leq I\left(\bm{X}_s;\bm{Y}_0\right) +
\mathrm{R}
=
\log\det\left(\bm{I}+\frac{1}{\sigma_r^2}\bm{H}_{s,0}\bm{Q}\bm{H}_{s,0}^\dagger
\right) + \mathrm{R}.
\end{eqnarray}
\end{upperbound}

\begin{proof} See Appendix \ref{appen:ub2} for the proof. \end{proof}

\begin{remark} Notice that, independently of the
number of BSs, the achievable rate is bounded above by the
capacity with $\mathrm{BS}_0$ plus the backhaul rate.
\end{remark}

\section{The Two-Base Stations Case }\label{sec:single basestation}

We first solve (\ref{eq:rate_prop1}) for $N=1$. As mentioned, the
objective function, which has to be maximized, is convex on
$\bm{\Phi}_1\succeq 0$. In order to make it concave, we change the
variables $\bm{\Phi}_1 = \bm{A}_1^{-1}$, so that
\begin{eqnarray}\label{eq:rate_singleBS_2}
\mathcal{C}&=&\max_{\bm{A}_1\succeq0}\
\log\det\left(\bm{I}+\frac{\bm{Q}}{\sigma_r^2}\bm{H}_{s,0}^\dagger\bm{H}_{s,0}+\bm{Q}\bm{H}_{s,1}^\dagger\left(\bm{A}_1\sigma_r^2+\bm{I}\right)^{-1}\bm{A}_1\bm{H}_{s,1}
\right)\\
&& \  \mathrm{s.t.} \ \
\log\det\left(\bm{I}+\bm{A}_1\bm{R}_{\bm{Y}_1|\bm{Y}_0}\right)\leq
\mathrm{R}.\nonumber
\end{eqnarray} The objective has turned into concave. However, the constraint now
does not define a convex feasible set. Therefore,
Karush-Kuhn-Tucker (KKT) conditions become
necessary\footnote{Notice that all feasible points are regular.}
but not sufficient for optimality. To solve the problem, we need
to resort to the general sufficiency condition \cite[Proposition
3.3.4]{BERTSEKAS}: first, we derive a matrix $\bm{A}_1^*$ for
which the KKT conditions hold. Later, we demonstrate that the
selected matrix also satisfies the general sufficiency condition,
thus becoming the optimal solution. The optimum compression noise
is finally recovered as
$\bm{\Phi}_1^*=\left(\bm{A}_1^*\right)^{-1}$. This result is
presented in Theorem \ref{Prop2}:

\begin{theorem}\label{Prop2}
Let $\bm{X}_s\sim \mathcal{CN}\left(\bm{0},\bm{Q}\right)$ and the
conditional covariance (see Appendix \ref{cova_single_user}):
\begin{eqnarray}
\bm{R}_{\bm{Y}_1|\bm{Y}_0}
=\bm{H}_{s,1}\left(\bm{I}+\frac{\bm{Q}}{\sigma_r^2}\bm{H}_{s,0}^\dagger\bm{H}_{s,0}\right)^{-1}\bm{Q}\bm{H}_{s,1}^\dagger+\sigma_r^2\bm{I},
\end{eqnarray} with eigen-decomposition
$\bm{R}_{\bm{Y}_1|\bm{Y}_0}=\bm{U}\textrm{diag}\left(s_1,\cdots,s_{N_1}\right)\bm{U}^\dagger$.
The optimum "compression" noise at $\mathrm{BS}_1$ is
$\bm{\Phi}_1^*=\bm{U}\left(\textrm{diag}\left(\eta_1,\cdots,\eta_{N_1}\right)\right)^{-1}\bm{U}^\dagger
$, with
\begin{eqnarray}\label{eq:prop3}
\eta_j=\left[\frac{1}{\lambda}\left(\frac{1}{\sigma_r^2}-\frac{1}{s_j}\right)-\frac{1}{\sigma_r^2}\right]^+,
\end{eqnarray} and $\lambda$ is such that
$\sum_{j=1}^{N_1}\log\left(1+\eta_j s_j\right)=\mathrm{R}$.
\end{theorem}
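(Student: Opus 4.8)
The plan is to maximize the concave objective in (\ref{eq:rate_singleBS_2}) over $\bm{A}_1 \succeq 0$ subject to the single (non-convex) backhaul constraint, following the two-stage programme announced before the theorem: first produce a candidate $\bm{A}_1^*$ satisfying the KKT conditions, then verify the general sufficiency condition of \cite[Proposition 3.3.4]{BERTSEKAS} to promote it to a global optimum. The crucial structural observation that makes the problem tractable is that both the objective and the constraint involve the same matrix $\bm{R}_{\bm{Y}_1|\bm{Y}_0}$ through $\bm{A}_1$, so I would try to diagonalize everything in the eigenbasis $\bm{U}$ of $\bm{R}_{\bm{Y}_1|\bm{Y}_0}$. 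To do this cleanly I would first rewrite the objective. Using the matrix determinant lemma / Sylvester's identity, the term $\bm{Q}\bm{H}_{s,1}^\dagger(\bm{A}_1\sigma_r^2+\bm{I})^{-1}\bm{A}_1\bm{H}_{s,1}$ should combine with the first two terms so that the objective becomes expressible as a determinant involving $\bm{R}_{\bm{Y}_1|\bm{Y}_0}$ and $\bm{A}_1$. Concretely, I expect to be able to write the achievable rate as a constant (the $\bm{Y}_0$-only capacity, Upper Bound \ref{upperbound2} without the $\mathrm{R}$) plus a term of the form $\log\det(\bm{I} + \bm{A}_1 \bm{R}_{\bm{Y}_1|\bm{Y}_0}) - \log\det(\bm{I} + \sigma_r^2 \bm{A}_1)$ or something closely analogous, since $\bm{R}_{\bm{Y}_1|\bm{Y}_0} - \sigma_r^2\bm{I}$ is exactly the ``signal part'' $\bm{H}_{s,1}(\bm{I}+\frac{\bm{Q}}{\sigma_r^2}\bm{H}_{s,0}^\dagger\bm{H}_{s,0})^{-1}\bm{Q}\bm{H}_{s,1}^\dagger$ appearing in the objective's numerator.

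Once the objective is in this determinant form, the Lagrangian is $\log\det(\cdot) - \lambda[\log\det(\bm{I}+\bm{A}_1\bm{R}_{\bm{Y}_1|\bm{Y}_0}) - \mathrm{R}]$. I would then argue that at the optimum $\bm{A}_1$ is simultaneously diagonalizable with $\bm{R}_{\bm{Y}_1|\bm{Y}_0}$, i.e.\ $\bm{A}_1^* = \bm{U}\,\mathrm{diag}(\eta_1,\dots,\eta_{N_1})\,\bm{U}^\dagger$. This reduces the matrix problem to a scalar per-eigenmode problem: maximize $\sum_j[\log(1+\eta_j(s_j-\sigma_r^2)\cdot c) - \cdots] - \lambda\sum_j\log(1+\eta_j s_j)$, and the stationarity condition $\partial/\partial\eta_j = 0$ should yield a waterfilling-type expression. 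Imposing the KKT stationarity and solving the resulting scalar equation for $\eta_j$, together with the nonnegativity projection $[\cdot]^+$ enforcing $\eta_j \geq 0$, is what produces formula (\ref{eq:prop3}); the multiplier $\lambda$ is then pinned down by forcing the backhaul constraint to hold with equality, $\sum_j \log(1+\eta_j s_j) = \mathrm{R}$, which is justified because the objective is strictly increasing in the available backhaul so the constraint is active.

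The main obstacle is the second stage: sufficiency. Because the feasible set is not convex, KKT is only necessary, so having solved the stationarity equations I must still show the candidate is globally optimal rather than a saddle point or local max. The approach here is to invoke the general sufficiency condition, which essentially requires exhibiting a multiplier $\lambda^*\geq 0$ such that the candidate maximizes the Lagrangian $L(\bm{A}_1,\lambda^*)$ globally over the \emph{simpler} set $\{\bm{A}_1 \succeq 0\}$ (dropping the nonconvex constraint into the Lagrangian). The key point I would need to establish is that, for the particular value $\lambda^*$ chosen, the Lagrangian $L(\cdot,\lambda^*)$ is concave in $\bm{A}_1$ on the PSD cone, or at least that the diagonal candidate is its unconstrained maximizer. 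This hinges on $\lambda^*$ lying in a range for which the difference of the two $\log\det$ terms remains concave; I expect the structure $\log(1+\eta_j(s_j-\sigma_r^2)) - \lambda\log(1+\eta_j s_j)$ per mode to be concave in $\eta_j$ for the relevant $\lambda$, and the simultaneous-diagonalization step to follow from a Hadamard/Oppenheim-type majorization argument showing off-diagonal perturbations only decrease the Lagrangian.

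Finally, the verification that the diagonalizing $\bm{U}$ is the eigenbasis of $\bm{R}_{\bm{Y}_1|\bm{Y}_0}$ (and not some other basis) comes from the constraint term: since the constraint $\log\det(\bm{I}+\bm{A}_1\bm{R}_{\bm{Y}_1|\bm{Y}_0})\leq\mathrm{R}$ couples $\bm{A}_1$ to $\bm{R}_{\bm{Y}_1|\bm{Y}_0}$, aligning $\bm{A}_1$ with the eigenvectors of $\bm{R}_{\bm{Y}_1|\bm{Y}_0}$ is what decouples the problem, and I would confirm this alignment is consistent with the stationarity condition of the objective (which also sees $\bm{R}_{\bm{Y}_1|\bm{Y}_0}$ through its signal part). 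Recovering $\bm{\Phi}_1^* = (\bm{A}_1^*)^{-1} = \bm{U}\,\mathrm{diag}(\eta_1,\dots,\eta_{N_1})^{-1}\bm{U}^\dagger$ then completes the statement, and the waterfilling form of $\eta_j$ in (\ref{eq:prop3}) gives the interpretation that weak eigenmodes (small $s_j$, hence $\eta_j = 0$, infinite compression noise) are simply not forwarded over the backhaul.
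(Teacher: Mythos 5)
Your plan follows the paper's own route: expand the objective into $\log\det\left(\bm{I}+\frac{\bm{Q}}{\sigma_r^2}\bm{H}_{s,0}^\dagger\bm{H}_{s,0}\right)+\log\det\left(\bm{I}+\bm{A}_1\bm{R}_{\bm{Y}_1|\bm{Y}_0}\right)-\log\det\left(\bm{I}+\sigma_r^2\bm{A}_1\right)$, produce the aligned waterfilling KKT candidate, and then upgrade it to a global optimum via the general sufficiency condition, using a majorization argument to reduce the Lagrangian maximization over $\left\{\bm{A}_1\succeq0\right\}$ to independent scalar problems in the eigenbasis of $\bm{R}_{\bm{Y}_1|\bm{Y}_0}$ (this is exactly the paper's Lemma \ref{Lemma2}, $\log\det\left(\bm{I}+\bm{A}\bm{B}\right)\leq\log\det\left(\bm{I}+\bm{\Gamma}_A\bm{\Gamma}_B\right)$, proved by log-majorization plus Schur-geometric convexity; note that alignment costs nothing in the second term because $\log\det\left(\bm{I}+\sigma_r^2\bm{A}_1\right)$ depends only on the eigenvalues of $\bm{A}_1$). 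Up to this point your sketch and the paper coincide.

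The genuine gap is how you propose to close the sufficiency step. You rely on the Lagrangian being concave on the PSD cone ``for the relevant $\lambda$,'' so that the stationary diagonal candidate is automatically its global maximizer. This fails for every admissible multiplier: with $\lambda^*\in(0,1)$ (and $\lambda^*>0$ is forced since the backhaul constraint is active), the per-mode function $f_j(\eta)=(1-\lambda^*)\log(1+\eta s_j)-\log(1+\eta\sigma_r^2)$ has second derivative $-\frac{(1-\lambda^*)s_j^2}{(1+\eta s_j)^2}+\frac{\sigma_r^4}{(1+\eta\sigma_r^2)^2}$, which behaves like $\lambda^*/\eta^2>0$ as $\eta\rightarrow\infty$; hence $f_j$ is convex at large $\eta$ and is never concave on all of $[0,\infty)$ --- the paper states this explicitly (``the individual maximizations on $\eta_j$ \ldots are not concave''). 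The paper bridges this with an argument you would still need: $f_j$ has exactly two stationary points, the limit point $\eta\rightarrow\infty$, where $f_j\rightarrow-\infty$ (so it is the infimum, not a maximum), and the finite waterfilling point, where the second derivative is negative; being the unique finite stationary point, it yields the global maximum over $\eta\geq0$ after the $\left[\cdot\right]^+$ projection. Plugging these $\eta_j^*$ back shows the aligned candidate attains the majorization upper bound on $\max_{\bm{A}_1\succeq0}\mathcal{L}\left(\bm{A}_1,\lambda^*\right)$, which is precisely what the sufficiency condition requires; without this stationary-point enumeration (or a substitute), your argument stalls at the exact step that distinguishes a KKT point from a global optimum in this non-convex problem. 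A minor additional slip: your per-mode reduction $\log(1+\eta_j(s_j-\sigma_r^2))-\lambda\log(1+\eta_j s_j)$ is not the correct one; it should be $(1-\lambda)\log(1+\eta_j s_j)-\log(1+\eta_j\sigma_r^2)$, since the constraint term and the ``signal'' determinant share the common factor $\log\det\left(\bm{I}+\bm{A}_1\bm{R}_{\bm{Y}_1|\bm{Y}_0}\right)$.
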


\begin{proof}See Appendix \ref{append:prop2} for the proof\end{proof}


\subsection{Practical Implementation} The optimum
compression in Theorem \ref{Prop2} can be carried out using a
practical Transform Coding (TC) approach. With TC, $\mathrm{BS}_1$
first transforms its received vector using an invertible linear
function and then separately compresses the resulting scalar
streams \cite{GOYAL01}. We show that the conditional
Karhunen-Lo\`{e}ve transform (CKLT) is an optimal linear
transformation \cite{GASTPAR06}. First, let recall that
multiplying a vector by a matrix does not change the mutual
information \cite{COVER}, \textit{i.e.},
$I\left(\bm{X}_s;\bm{Y}_0,\hat{\bm{Y}_1}\right)=I\left(\bm{X}_s;\bm{Y}_0,\bm{U}^\dagger\hat{\bm{Y}_1}\right)$
and
$I\left(\bm{Y}_1;\hat{\bm{Y}_1}|\bm{Y}_0\right)=I\left(\bm{Y}_1;\bm{U}^\dagger\hat{\bm{Y}_1}|\bm{Y}_0\right)
$. From Theorem \ref{Prop2}, the optimum compressed vector
satisfies $\hat{\bm{Y}}_1^* = \bm{Y}_1 + \bm{Z}_c^{*}$,
 with
$\bm{Z}_c^*\sim\mathcal{CN}\left(0,\bm{U}\bm{\eta}^{-1}\bm{U}^\dagger\right)$
and $\bm{R}_{\bm{Y}_1|\bm{Y}_0}=\bm{U}\bm{S}\bm{U}^\dagger$.
Therefore, the following compressed vectors are also optimal
\begin{eqnarray}\label{single_base_station_y_hat}
\hat{\bm{Y}}_1 = \bm{U}^\dagger\bm{Y}_1 +
\bm{U}^\dagger\bm{Z}_c^{*},
\end{eqnarray}where vector $\bm{U}^\dagger\bm{Y}_1$ is referred to as the CKLT of vector $\bm{Y}_1$. Notice now that
$\bm{R}_{\hat{\bm{Y}}_1|\bm{Y}_0}=\bm{R}_{\bm{U}^\dagger\bm{Y}_1|\bm{Y}_0}+\bm{R}_{\bm{U}^\dagger\bm{Z}_c^*}=\bm{S}+\bm{\eta}^{-1}$
is diagonal. Therefore, the elements of the compressed vector
$\hat{\bm{Y}}_1$ are conditionally uncorrelated given $\bm{Y}_0$.
Likewise, so are the elements of vector $\bm{U}^\dagger\bm{Y}_1$.
Due to this uncorrelation, each element $j=1,\cdots,N_1$ of vector
$\bm{U}^\dagger{\bm{Y}}_1$ can be compressed, without loss of
optimality, independently of the compression of the others
elements, at a compression rate
$r_j=\log\left(1+\eta_js_j\right)$, $j=1,\cdots,N_1$
\cite{WYNER78}. From Theorem \ref{Prop2} we validate that
$\sum_{j=1}^{N_1}r_j=\mathrm{R}$. This demonstrates that CKLT plus
independent coding of streams is optimal, not only for minimizing
distortion as shown in \cite{GASTPAR06}, but also for maximizing
the achievable rate of coordinated networks.

\section{The Multiple-Base Stations Case}\label{sec:multiple basestations}

Consider now $\mathrm{BS}_0$ assisted by $N>1$ cooperative BSs.
The achievable rate follows (\ref{eq:rate_prop1}) where, as
previously, the objective function is not concave over
$\bm{\Phi}_n$ , $n=1,\cdots,N$. To make it concave, we change the
variables: $\bm{\Phi}_n=\bm{A}_n^{-1}$, $n=1,\cdots,N$, so that:
\begin{eqnarray}\label{eq:rate_multiple_base_stations}
\mathcal{C}=\max_{\bm{A}_1,\cdots,\bm{A}_N\succeq0}\
\log\det\left(\bm{I}+\frac{\bm{Q}}{\sigma_r^2}\bm{H}_{s,0}^\dagger\bm{H}_{s,0}+\bm{Q}\sum_{n=1}^N\bm{H}_{s,n}^\dagger\left(\bm{A}_n\sigma_r^2+\bm{I}\right)^{-1}\bm{A}_n\bm{H}_{s,n}
\right)\\
\  \ \mathrm{s.t.} \ \
\log\det\left(\bm{I}+\textrm{diag}\left(\bm{A}_1,\cdots,\bm{A}_N\right)\bm{R}_{\bm{Y}_{1:N}|\bm{Y}_0}\right)\leq
\mathrm{R}.\qquad \qquad \qquad \qquad \ \ \ \nonumber
\end{eqnarray}Again, the feasible set does not define a convex set. Our strategy to solve
the optimization is the following: first, we show that the duality
gap for the problem is zero. Later, we propose an iterative
algorithm that solves the dual problem, thus solving the primal
too. An interesting property of the dual problem is that the
coupling constraint in (\ref{eq:rate_multiple_base_stations}) is
decoupled \cite[Chapter 5]{BERTSEKAS}.

\subsection{The dual problem}
Let the Lagrangian of (\ref{eq:rate_multiple_base_stations}) be
defined on $\bm{A}_n\succeq0$, $n=1,\cdots,N$ and $\lambda\geq0$
as:
\begin{eqnarray}\label{eq:f_dual}
\mathcal{L}\left(\bm{A}_1,\cdots,\bm{A}_N,\lambda\right) &=&
\log\det\left(\bm{I}+\frac{\bm{Q}}{\sigma_r^2}\bm{H}_{s,0}^\dagger\bm{H}_{s,0}+\bm{Q}\sum_{n=1}^N\bm{H}_{s,n}^\dagger\left(\bm{A}_n\sigma_r^2+\bm{I}\right)^{-1}\bm{A}_n\bm{H}_{s,n}
\right) \nonumber\\
&& -
\lambda\cdot\left(\log\det\left(\bm{I}+\textrm{diag}\left(\bm{A}_1,\cdots,\bm{A}_N\right)\bm{R}_{\bm{Y}_{1:N}|\bm{Y}_0}\right)-\mathrm{R}\right).
\end{eqnarray} The dual function $g\left(\lambda\right)$
for $\lambda\geq 0$ follows \cite[Section 5.1]{BOYD}:
\begin{eqnarray}\label{eq:g_dual}
g\left(\lambda\right) = \max_{\bm{A}_1,\cdots,\bm{A}_N\succeq0}
\mathcal{L}\left(\bm{A}_1,\cdots,\bm{A}_N,\lambda\right) .
\end{eqnarray}
The solution of the dual problem is then obtained from
\begin{eqnarray}\label{eq:rate_dual}
\mathcal{C}'=\min_{\lambda\geq0}g\left(\lambda\right).
\end{eqnarray}

\begin{lemma} The duality gap for optimization (\ref{eq:rate_multiple_base_stations}) is zero,
\textit{i.e.,} the primal problem
(\ref{eq:rate_multiple_base_stations}) and the dual problem
(\ref{eq:rate_dual}) have the same solution.
\end{lemma}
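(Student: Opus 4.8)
The plan is to establish strong duality through the \emph{time-sharing} argument of Yu and Lui, which guarantees a zero duality gap for a (possibly non-convex) maximization with a single scalar constraint whenever the optimal value is a concave function of the constraint level. Let $\mathcal{C}(\mathrm{R})$ denote the optimal value of (\ref{eq:rate_multiple_base_stations}) regarded as a function of the backhaul budget $\mathrm{R}$. The time-sharing condition asks that, for any two budgets $\mathrm{R}_x,\mathrm{R}_y$ and any $\nu\in[0,1]$, there exist feasible compression noises whose objective is at least $\nu\,\mathcal{C}(\mathrm{R}_x)+(1-\nu)\,\mathcal{C}(\mathrm{R}_y)$ while the consumed backhaul is at most $\nu\,\mathrm{R}_x+(1-\nu)\,\mathrm{R}_y$; equivalently, that $\mathcal{C}(\mathrm{R})$ is concave (and, since extra backhaul never hurts, nondecreasing) in $\mathrm{R}$.

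First I would verify this property by returning to the mutual-information formulation (\ref{eq:rate_def}), which by Theorem \ref{prop1} attains the same optimum as (\ref{eq:rate_multiple_base_stations}). Let $\prod_i p_x(\hat{\bm{Y}}_i|\bm{Y}_i)$ and $\prod_i p_y(\hat{\bm{Y}}_i|\bm{Y}_i)$ be conditional laws achieving $\mathcal{C}(\mathrm{R}_x)$ and $\mathcal{C}(\mathrm{R}_y)$, respectively. Introduce a time-sharing variable $T$, shared by all encoders and the decoder and independent of $(\bm{X}_s,\bm{Y}_{0},\bm{Y}_{1:N})$, with $\Pr\{T=x\}=\nu$, $\Pr\{T=y\}=1-\nu$, and let the compression law conditioned on $T$ equal the corresponding strategy. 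Because $T$ is independent of $\bm{X}_s$, the achievable objective equals the conditional mutual information $I(\bm{X}_s;\bm{Y}_0,\hat{\bm{Y}}_{1:N}|T)=\nu\,\mathcal{C}(\mathrm{R}_x)+(1-\nu)\,\mathcal{C}(\mathrm{R}_y)$, while the backhaul spent equals $I(\bm{Y}_{1:N};\hat{\bm{Y}}_{1:N}|\bm{Y}_0,T)=\nu\,\mathrm{R}_x+(1-\nu)\,\mathrm{R}_y$. This exhibits a feasible point meeting the time-sharing requirement, so $\mathcal{C}(\mathrm{R})$ is concave in $\mathrm{R}$.

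Having concavity, I would conclude the zero gap either by directly invoking the Yu--Lui result, or, self-containedly, via the concave-envelope interpretation of Lagrangian duality \cite{BOYD,BERTSEKAS}. Re-indexing the inner maximization in (\ref{eq:g_dual}) by the constraint value gives $g(\lambda)=\max_{\mathrm{R}'}\,[\mathcal{C}(\mathrm{R}')+\lambda(\mathrm{R}-\mathrm{R}')]$, whence $\mathcal{C}'=\min_{\lambda\ge0}g(\lambda)$ is the value at $\mathrm{R}$ of the least concave, nondecreasing majorant of $\mathcal{C}(\cdot)$; when $\mathcal{C}(\cdot)$ is itself concave and nondecreasing this majorant coincides with $\mathcal{C}$, so $\mathcal{C}'=\mathcal{C}$ and the gap vanishes.

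The main obstacle is reconciling the time-shared strategy, which is a \emph{mixture} and therefore lies outside the Gaussian family parameterized by $\{\bm{A}_n\}$, with the problem (\ref{eq:rate_multiple_base_stations}) whose feasible set is exactly that family; the construction above cannot be carried out inside the parameterized problem. This is resolved precisely by Theorem \ref{prop1}: since the parameterized value function equals the value function of the general problem (\ref{eq:rate_def}) for every $\mathrm{R}$, the concavity established on the latter transfers verbatim to $\mathcal{C}(\mathrm{R})$, and hence governs the duality gap of (\ref{eq:rate_multiple_base_stations}) and the equivalence of (\ref{eq:rate_multiple_base_stations}) with the dual problem (\ref{eq:rate_dual}).
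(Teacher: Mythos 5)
Your route is the paper's route: invoke \cite[Theorem 1]{YU06} to reduce zero duality gap to the time-sharing property, i.e., to concavity of the optimal value $\mathcal{C}(\mathrm{R})$ in the backhaul budget $\mathrm{R}$; your third paragraph (the dual value as the least concave nondecreasing majorant of $\mathcal{C}(\cdot)$) is a correct account of why concavity is exactly what is needed. The genuine gap is in how you establish concavity. The feasible set of (\ref{eq:rate_def}) is the set of product conditional laws $\prod_{i=1}^N p(\hat{\bm{Y}}_i|\bm{Y}_i)$. A strategy driven by a time-sharing variable $T$ shared by all encoders and the decoder is not such a law: it induces the joint conditional law $\sum_t p(t)\prod_i p_t(\hat{\bm{Y}}_i|\bm{Y}_i)$, which is not a product over $i$ (the encoders are coordinated through $T$, i.e., through common randomness), and marginalizing $T$ out destroys the equality between the averaged mutual informations and the mutual informations of the averaged law. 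So what you exhibit is a feasible point of a \emph{common-randomness-enlarged} problem, not of (\ref{eq:rate_def}). The value of that enlarged problem is, by precisely the averaging identities you wrote, the upper concave envelope of $\mathcal{C}(\cdot)$; declaring that it equals $\mathcal{C}(\mathrm{R})$ is the concavity claim itself, so the argument is circular at exactly this point. Theorem \ref{prop1} cannot close the loop: it establishes Gaussian sufficiency \emph{within} the class of product laws and says nothing about laws coordinated by a shared $T$ --- the obstacle is not Gaussian versus non-Gaussian, as your last paragraph has it, but product versus coordinated. Note also that the obvious repair, mixing the two laws without any $T$, fails structurally: the objective $I(\bm{X}_s;\bm{Y}_0,\hat{\bm{Y}}_{1:N})$ is convex, not concave, in the compression law, so the mixed law may fall strictly below the average of the two objectives.

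The paper sidesteps this by arguing operationally rather than in single-letter form: time-sharing of compression \emph{codes} --- running the code designed for $\mathrm{R}_x$ on a fraction $\nu$ of the block and the code for $\mathrm{R}_y$ on the remainder, as in the convexity proof of the rate-distortion function \cite[Lemma 13.4.1]{COVER} --- is a deterministic schedule, needs no common randomness, stays inside the class of codes of Definition \ref{Def_MS}, and consumes backhaul $\nu\mathrm{R}_x+(1-\nu)\mathrm{R}_y$ while supporting user rate $\nu\mathcal{C}(\mathrm{R}_x)+(1-\nu)\mathcal{C}(\mathrm{R}_y)$. Since, for every $\mathrm{R}$, $\mathcal{C}(\mathrm{R})$ characterizes the largest rate decodable by this class of schemes (the ``iif'' of Section \ref{sec:system model} together with Proposition \ref{Def_berger_tung}), concavity of $\mathcal{C}(\cdot)$ follows. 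If you prefer to stay single-letter, the missing --- and substantive --- step is a proof that the common-randomness-enlarged problem has the same value as (\ref{eq:rate_def}); without it, your verification of the time-sharing property does not go through.
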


\begin{proof}
The duality gap for problems of the form of
(\ref{eq:rate_multiple_base_stations}), and satisfying the
time-sharing property, is zero \cite[Theorem 1]{YU06}.
Time-sharing property is defined as follows: let $\mathcal{C}_x,
\mathcal{C}_y, \mathcal{C}_z$ be the solution of
(\ref{eq:rate_multiple_base_stations}) for backhaul rates
$\mathrm{R}_x,\mathrm{R}_y, \mathrm{R}_z$, respectively. Consider
$\mathrm{R}_z = \nu\mathrm{R}_x+ \left(1-\nu\right)\mathrm{R}_y$
for some $0\leq \nu\leq 1$. Then, the property is satisfied if and
only if $\mathcal{C}_z \geq \nu\mathcal{C}_x+
\left(1-\nu\right)\mathcal{C}_y$, $\forall \ \nu \in
\left[0,1\right]$. That is, if the solution of
(\ref{eq:rate_multiple_base_stations}) is concave with respect to
the backhaul rate $\mathrm{R}$. It is well known that time-sharing
of compressions cannot decrease the resulting distortion
\cite[Lemma 13.4.1]{COVER}, neither improve the mutual information
obtained from the reconstructed vectors. Hence, the property holds
for (\ref{eq:rate_multiple_base_stations}), and the duality gap is
zero.
\end{proof}

We then solve the dual problem in order to obtain the solution of
the primal. First, consider maximization (\ref{eq:g_dual}). As
expected, the maximization can not be solved in closed form.
However, as the feasible set (\textit{i.e.},
$\bm{A}_1,\cdots,\bm{A}_N\succeq0$) is the cartesian product of
convex sets, then a block coordinate ascent
algorithm\footnote{Also known as \textit{Non-Linear Gauss-Seidel}
Algorithm \cite[Section II-C]{PALOMAR06}.} can be used to search
for the maximum \cite[Section 2.7]{BERTSEKAS}. The algorithm
iteratively optimizes the function with respect to one $\bm{A}_n$
while keeping the others fixed. It has been previously used to
\textit{e.g.,} solve the sum-rate problem of MIMO multiple access
channels with individual and sum-power constraint
\cite{YU04}\cite{YU03}. We define it for our problem as:
\begin{eqnarray}\label{eq:g_dual_Gaussseidel}
\bm{A}_n^{t+1} = \arg\max_{\bm{A}_n\succeq0}
\mathcal{L}\left(\bm{A}_1^{t+1},\cdots,\bm{A}_{n-1}^{t+1},\bm{A}_{n},\bm{A}_{n+1}^{t},\cdots,\bm{A}_N^{t},\lambda\right),
\end{eqnarray}where $t$ is the iteration index. As shown in
Theorem \ref{Prop3}, the maximization
(\ref{eq:g_dual_Gaussseidel}) is uniquely attained.

\begin{theorem}\label{Prop3} Let the optimization $\bm{A}_n^*=\arg\max_{\bm{A}_n\succeq0}
\mathcal{L}\left(\bm{A}_1,\cdots,\bm{A}_N,\lambda\right)$ and the
conditional covariance matrix (See Appendix
\ref{cova_single_user})
\begin{eqnarray}\label{cond_cova_algo2}
\bm{R}_{{\bm{Y}}_{n}|\bm{Y}_0,\hat{\bm{Y}}_{n}^c}=
\bm{H}_{s,n}\left(\bm{I}+\bm{Q}\left(\frac{1}{\sigma_r^2}\bm{H}_{s,0}^\dagger\bm{H}_{s,0}+\sum_{p\neq
n}
\bm{H}_{s,p}^\dagger\left(\bm{A}_p\sigma_r^2\bm{I}+\bm{I}\right)^{-1}\bm{A}_p\bm{H}_{s,p}\right)\right)^{-1}\bm{Q}\bm{H}_{s,n}^\dagger
+ \sigma_r^2\bm{I}
\end{eqnarray} with eigen-decomposition
$\bm{R}_{{\bm{Y}}_{n}|\bm{Y}_0,\hat{\bm{Y}}_{n}^c}=\bm{U}_{n}\bm{S}\bm{U}_{n}^\dagger$.
The optimization is uniquely attained at
$\bm{A}_n^{*}=\bm{U}_{n}\bm{\eta}\bm{U}_{n}^\dagger$, where
\begin{eqnarray}\label{eq:prop3_eq1}
\eta_j=\left[\frac{1}{\lambda}\left(\frac{1}{\sigma_r^2}-\frac{1}{s_j}\right)-\frac{1}{\sigma_r^2}\right]^+,\
\ j=1,\cdots,N_n.
\end{eqnarray}
\end{theorem}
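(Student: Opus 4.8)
The plan is to exploit the fact that, once the blocks $\bm{A}_m$ with $m\neq n$ are held fixed, the per-block maximization $\bm{A}_n^*=\arg\max_{\bm{A}_n\succeq0}\mathcal{L}$ collapses onto exactly the same mathematical template as the two-base-station problem (\ref{eq:rate_singleBS_2}) already solved in Theorem \ref{Prop2}. The only differences are that the role of the side information is now played by $(\bm{Y}_0,\hat{\bm{Y}}_n^c)$ rather than by $\bm{Y}_0$ alone, and that the multiplier $\lambda$ is \emph{fixed} instead of being tuned to meet the backhaul constraint with equality. Once this reduction is in place, the optimizer and the water-filling levels (\ref{eq:prop3_eq1}) follow from Theorem \ref{Prop2} verbatim, and only uniqueness remains to be argued separately.

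First I would isolate the dependence of $\mathcal{L}$ on $\bm{A}_n$. Writing $\bm{K}_n=\bm{I}+\bm{Q}\big(\frac{1}{\sigma_r^2}\bm{H}_{s,0}^\dagger\bm{H}_{s,0}+\sum_{p\neq n}\bm{H}_{s,p}^\dagger(\bm{A}_p\sigma_r^2+\bm{I})^{-1}\bm{A}_p\bm{H}_{s,p}\big)$ for the part of the first $\log\det$ in (\ref{eq:f_dual}) that does not involve $\bm{A}_n$, the objective factorizes as $\log\det(\bm{K}_n)+\log\det\big(\bm{I}+\bm{K}_n^{-1}\bm{Q}\bm{H}_{s,n}^\dagger(\bm{A}_n\sigma_r^2+\bm{I})^{-1}\bm{A}_n\bm{H}_{s,n}\big)$. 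Applying the identity $\det(\bm{I}+\bm{X}\bm{Y})=\det(\bm{I}+\bm{Y}\bm{X})$ to push $\bm{H}_{s,n}$ to the front, and recognizing from (\ref{cond_cova_algo2}) that $\bm{H}_{s,n}\bm{K}_n^{-1}\bm{Q}\bm{H}_{s,n}^\dagger=\bm{R}_{\bm{Y}_n|\bm{Y}_0,\hat{\bm{Y}}_n^c}-\sigma_r^2\bm{I}$, the $\bm{A}_n$-dependent part of the objective becomes a function of $\bm{A}_n$ and $\bm{R}_{\bm{Y}_n|\bm{Y}_0,\hat{\bm{Y}}_n^c}$ alone, of precisely the form appearing in (\ref{eq:rate_singleBS_2}).

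For the constraint term I would rely on the chain-rule decomposition $I(\bm{Y}_{1:N};\hat{\bm{Y}}_{1:N}|\bm{Y}_0)=I(\bm{Y}_n^c;\hat{\bm{Y}}_n^c|\bm{Y}_0)+I(\bm{Y}_n;\hat{\bm{Y}}_n|\bm{Y}_0,\hat{\bm{Y}}_n^c)$, whose first summand is constant in $\bm{A}_n$; for jointly Gaussian vectors with $\hat{\bm{Y}}_n=\bm{Y}_n+\bm{Z}_n^c$ and $\bm{Z}_n^c\sim\mathcal{CN}(0,\bm{A}_n^{-1})$, the second summand evaluates in closed form to $\log\det(\bm{I}+\bm{A}_n\bm{R}_{\bm{Y}_n|\bm{Y}_0,\hat{\bm{Y}}_n^c})$ (equivalently, this is the Schur complement obtained by block-factorizing $\det(\bm{I}+\textrm{diag}(\bm{A}_1,\cdots,\bm{A}_N)\bm{R}_{\bm{Y}_{1:N}|\bm{Y}_0})$ with respect to the $n$-th block). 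Dropping the additive constants, the per-block Lagrangian is thus identical to the penalized version of (\ref{eq:rate_singleBS_2}) with $\bm{R}_{\bm{Y}_1|\bm{Y}_0}$ replaced by $\bm{R}_{\bm{Y}_n|\bm{Y}_0,\hat{\bm{Y}}_n^c}$. I can therefore transcribe the argument of Theorem \ref{Prop2}: stationarity (KKT) combined with the general sufficiency condition \cite[Proposition 3.3.4]{BERTSEKAS} identifies the maximizer as $\bm{A}_n^*=\bm{U}_n\bm{\eta}\bm{U}_n^\dagger$, with $\bm{U}_n$ diagonalizing $\bm{R}_{\bm{Y}_n|\bm{Y}_0,\hat{\bm{Y}}_n^c}$ and the levels $\eta_j$ given by the water-filling expression (\ref{eq:prop3_eq1}) evaluated at the fixed $\lambda$.

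The hard part is uniqueness, which is genuinely new relative to Theorem \ref{Prop2} and is what guarantees convergence of the block coordinate ascent \cite[Section 2.7]{BERTSEKAS}: because the per-block Lagrangian is a \emph{difference} of two concave $\log\det$ functions, uniqueness cannot be inferred from concavity. My plan is to show that any stationary point must commute with $\bm{R}_{\bm{Y}_n|\bm{Y}_0,\hat{\bm{Y}}_n^c}$—so that every candidate maximizer is simultaneously diagonalizable with it—whereupon the problem separates into scalar subproblems along the eigenbasis $\bm{U}_n$. Each scalar term is strictly unimodal in its eigenvalue $\eta_j\geq0$, with the $[\,\cdot\,]^+$ operation in (\ref{eq:prop3_eq1}) accounting for solutions pinned to the boundary of the positive-semidefinite cone; hence each $\eta_j$, and therefore $\bm{A}_n^*$, is determined uniquely. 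Residual ambiguity of $\bm{U}_n$ under repeated eigenvalues of $\bm{R}_{\bm{Y}_n|\bm{Y}_0,\hat{\bm{Y}}_n^c}$ does not affect $\bm{A}_n^*$, since equal $s_j$ yield equal $\eta_j$ and the associated eigenprojector is unique. Establishing this commutation/alignment step rigorously is where most of the work lies.
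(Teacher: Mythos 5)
Your reduction of the per-block problem is exactly the one the paper uses: the chain rule together with the Markov chain of Proposition \ref{Def_berger_tung} splits the constraint into an $\bm{A}_n$-independent term plus $\log\det\left(\bm{I}+\bm{A}_n\bm{R}_{\bm{Y}_n|\bm{Y}_0,\hat{\bm{Y}}_n^c}\right)$ (the paper's (\ref{constraint_extended})), and your $\bm{K}_n$/Sylvester manipulation of the objective is the paper's (\ref{app:expand_loga_N}); both arrive at the reduced Lagrangian
\begin{equation}
\bar{\mathcal{L}}\left(\bm{A}_n,\lambda\right)=\left(1-\lambda\right)\log\det\left(\bm{I}+\bm{A}_n\bm{R}_{\bm{Y}_n|\bm{Y}_0,\hat{\bm{Y}}_n^c}\right)-\log\det\left(\bm{I}+\bm{A}_n\sigma_r^2\right)+\lambda\mathrm{R},\nonumber
\end{equation}
formally identical to (\ref{lagra85}). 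One minor misstep in how you finish: invoking KKT plus the general sufficiency condition is out of place here, because in evaluating the dual function $g\left(\lambda\right)$ the multiplier $\lambda$ is fixed and there is no constraint left to enforce; what the paper transcribes from Theorem \ref{Prop2} is only the direct maximization of (\ref{lagra85}) over the cone $\bm{A}_n\succeq0$, \textit{i.e.,} steps (\ref{lagra85})--(\ref{app:indi_maxi3}).

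The substantive issue is that you defer precisely the step that carries the proof. The eigenvector alignment and uniqueness, which you propose to obtain from a commutation-of-stationary-points argument and explicitly leave unproven (``where most of the work lies''), are in the paper not obtained that way at all: they come out of Lemma \ref{Lemma2}, already proven in the course of Theorem \ref{Prop2}. That lemma upper-bounds $\bar{\mathcal{L}}$ by a function that is separable in the eigenvalues of $\bm{A}_n$; each scalar subproblem $\left(1-\lambda\right)\log\left(1+\eta_j s_j\right)-\log\left(1+\eta_j\sigma_r^2\right)$ has a unique maximizer on $\eta_j\geq 0$ (the two-stationary-points analysis), and the aligned matrix $\bm{U}_n\bm{\eta}\bm{U}_n^\dagger$ attains the bound, which yields global optimality and unique attainment in one stroke; the same formula also covers $\lambda\geq1$, where all $\eta_j$ collapse to zero and $\bm{A}_n^*=\bm{0}$. (The paper is admittedly terse about the ``only if'' direction of the equality condition in Lemma \ref{Lemma2}, but no machinery beyond that lemma is invoked.) So your plan is not wrong in outline, but as written it replaces an available, already-proven tool with an unproven alignment claim; reusing Lemma \ref{Lemma2} closes your gap with no new work, whereas proving commutation of every stationary point from scratch is exactly the kind of effort the paper's structure is designed to avoid.
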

\begin{proof} See Appendix \ref{append:prop3} for the proof.\end{proof}

Function
$\mathcal{L}\left(\bm{A}_1,\cdots,\bm{A}_N,\lambda\right)$ is
continuously differentiable, and the maximization
(\ref{eq:g_dual_Gaussseidel}) is uniquely attained. Hence, the
limit point of the sequence
$\left\{\bm{A}_1^t,\cdots,\bm{A}_N^t\right\}$ is proven to
converge to a local maximum \cite[Proposition 2.7.1]{BERTSEKAS}.
To demonstrate convergence to the global maximum, it is necessary
to show that the mapping
$T\left(\bm{A}_1,\cdots,\bm{A}_N\right)=\left[\bm{A}_1+\gamma\nabla_{\bm{A}_1}\mathcal{L},\cdots,\bm{A}_N+\gamma\nabla_{\bm{A}_N}\mathcal{L}\right]$
is a block contraction\footnote{See \cite[Section
3.1.2]{BERTSEKAS_2} for the definition of block-contraction.} for
some $\gamma$ \cite[Proposition 3.10]{BERTSEKAS_2}. Unfortunately,
we were not able to demonstrate the contraction property on the
Lagrangian, although simulation results suggest global convergence
of our algorithm always.

Once obtained $g\left(\lambda\right)$ through the Gauss-Seidel
Algorithm\footnote{Assume hereafter that the algorithm has
converged to the global maximum of
$\mathcal{L}\left(\bm{A}_1,\cdots,\bm{A}_N,\lambda\right)$.}, it
remains to minimize it on $\lambda\geq0$. First, recall that
$g\left(\lambda\right)$ is a convex function, defined as the
pointwise maximum of a family of affine functions \cite{BOYD}.
Hence, to minimize it, we may use a subgradient approach as
\textit{e.g.}, that proposed by Yu in \cite{YU03}. The subgradient
search consists on following search direction $-h$ such that
\begin{eqnarray}\label{subgradef}
\frac{g\left(\lambda'\right)-g\left(\lambda\right)}{\lambda'-\lambda}\geq
h \qquad \forall \lambda'.
\end{eqnarray}Such a search is proven to converge to the global minimum for diminishing step-size
rules \cite[Section II-B]{PALOMAR06}. Considering the definition
of $g\left(\lambda\right)$, the following $h$ satisfies
(\ref{subgradef}):
\begin{eqnarray}\label{h_search}
h=\mathrm{R}-\log\det\left(\bm{I}+\textrm{diag}\left(\bm{A}_1,\cdots,\bm{A}_N\right)\bm{R}_{\bm{Y}_{1:N}|\bm{Y}_0}\right).
\end{eqnarray} Therefore, it is used to search for the optimum
$\lambda$ as:
\begin{eqnarray}\label{search}
\textrm{increase} \ \lambda \ \textrm{if} \ \ h\leq
0\ \ \textrm{or} \ \ 
\textrm{decrease} \ \lambda \ \textrm{if} \ \ h\geq 0.
\end{eqnarray}
Consider now $\lambda^0=1$ as the initial value of the Lagrange
multiplier. For such a multiplier, the optimum solution of
(\ref{eq:g_dual}) is
$\left\{\bm{A}_1^*.\cdots,\bm{A}_N^*\right\}=\bm{0}$ and the
subgradient (\ref{h_search}) is $h=\mathrm{R}$ (See Appendix
\ref{landamayor1}). Hence, following (\ref{search}), the optimum
value of $\lambda$ is strictly lower than one. Algorithm
\ref{algo1} takes all this into account in order to solve the dual
problem, hence solving the primal too. As mentioned, we can only
claim convergence of the algorithm to a local maximum.

\begin{algorithm}[h!]
\caption{Multiple-BSs dual problem}\label{algo1}
\begin{algorithmic}[1]
\medskip
\STATE Initialize $\lambda_{\mathrm{min}}=0$ and
$\lambda_{\mathrm{max}}=1$ \REPEAT \STATE $\lambda =
\frac{\lambda_{\mathrm{max}}-\lambda_{\mathrm{min}}}{2}$ \STATE
Obtain $\left\{\bm{A}_1^*,\cdots,\bm{A}_N^*\right\}= \arg\max
\mathcal{L}\left(\bm{A}_1,\cdots,\bm{A}_N,\lambda\right)$ from
Algorithm \ref{algo2} \STATE Evaluate $h $ as in (\ref{h_search}).
\STATE if $h\leq 0$, then $\lambda_{\mathrm{min}}=\lambda$, else
$\lambda_{\mathrm{max}}=\lambda$
\UNTIL{$\lambda_{\mathrm{max}}-\lambda_{\mathrm{min}}\leq
\epsilon$} \STATE
$\left\{\bm{\Phi}_1^*,\cdots,\bm{\Phi}_N^{*}\right\}=\left\{\left(\bm{A}_1^*\right)^{-1},\cdots,\left(\bm{A}_N^{*}\right)^{-1}\right\}$
\end{algorithmic}
\end{algorithm}

\vspace{-5mm}
\begin{algorithm}[h!]
\caption{Non-linear Gauss-Seidel to obtain
$g\left(\lambda\right)$}\label{algo2}
\begin{algorithmic}[1]
\medskip
\STATE Initialize $\bm{A}_n^0=\bm{0}$, $n=1,\cdots,N$ and $t=0$
\REPEAT \FOR{$n$ = 1 to $N$} \STATE Compute
$\bm{R}_{{\bm{Y}}_{n}|\bm{Y}_0,\hat{\bm{Y}}_{n}^c}\left(\bm{A}_1^{t+1},\cdots,\bm{A}_{n-1}^{t+1},\bm{A}_{n+1}^{t},\cdots,\bm{A}_N^t\right)$
from (\ref{cond_cova_algo2}). \STATE Take its eigen-decomposition
$\bm{U}_n\bm{S}\bm{U}_n^\dagger$ and compute $\bm{\eta}$ as in
(\ref{eq:prop3_eq1}). \STATE Update
$\bm{A}_n^{t+1}=\bm{U}_{n}\bm{\eta}\bm{U}_{n}^\dagger$. \ENDFOR
\STATE $t=t+1$ \UNTIL{The sequence converges
$\left\{\bm{A}_1^{t},\cdots,\bm{A}_N^{t}\right\}\rightarrow\left\{\bm{A}_1^*,\cdots,\bm{A}_N^*\right\}$}
\STATE Return $\left\{\bm{A}_1^*,\cdots,\bm{A}_N^*\right\}$
\end{algorithmic}
\end{algorithm}

\subsection{Practical Implementation} In the network, Distributed Wyner-Ziv compression can be practically implemented
using a simple Successive Wyner-Ziv (S-WZ) approach
\cite{BERGER07}\cite[Theorem 3]{DRAPER04}. To describe it, let us
recall that the optimum compression noises
$\bm{\Phi}_1^*,\cdots,\bm{\Phi}_N^*$ are obtained from Algorithm
\ref{algo1}, and let $\pi\left(\cdot\right)$ be a given
permutation on $\left\{1,\cdots,N\right\}$. For such a
permutation, the S-WZ coding is defined as follows:
\begin{itemize}
\item \textit{Parallel Compression}:
$\mathrm{BS}_{\pi\left(1\right)}$ compresses its received vector
using a single-source Wyner-Ziv code with decoder side information
$\bm{Y}_0$ (following Proposition \ref{Def_wyner_single_source}),
at a compression rate
\begin{eqnarray}
\rho_{\pi\left(1\right)}&=&I\left(\bm{Y}_{\pi\left(1\right)};\hat{\bm{Y}}_{\pi\left(1\right)}|\bm{Y}_0\right)\nonumber\\
&=&\log\det\left(\bm{I}+\left(\bm{\Phi}_{\pi\left(1\right)}^*\right)^{-1}\bm{R}_{\bm{Y}_{\pi\left(1\right)}|\bm{Y}_0}\right).
\end{eqnarray} The conditional covariance is calculated in
(\ref{app:cond_cova_singleuser_YnYo_final}). In parallel,
$\mathrm{BS}_{\pi\left(n\right)}$ $n>1$, compresses its signal
using a single-source Wyner-Ziv code with decoder side information
$\left(\bm{Y}_0,\hat{\bm{Y}}_{\pi\left(1:n-1\right)}\right)$, at a
rate
\begin{eqnarray}\label{eq:S-WZrate}
\rho_{\pi\left(n\right)}&=&I\left(\bm{Y}_{\pi\left(n\right)};\hat{\bm{Y}}_{\pi\left(n\right)}|\bm{Y}_0,\hat{\bm{Y}}_{\pi\left(1:n-1\right)}\right)\nonumber\\
&=&\log\det\left(\bm{I}+\left(\bm{\Phi}_{\pi\left(n\right)}^*\right)^{-1}\bm{R}_{\bm{Y}_{\pi\left(1\right)}|\bm{Y}_0,\hat{\bm{Y}}_{\pi\left(1:n-1\right)}}\right).
\end{eqnarray}In this case, the conditional covariance can be calculated
from (\ref{app:cond_cova_singleuser_YnYo_yg_final}). \item
\textit{Successive Decompression}: $\mathrm{BS}_0$ first recovers
the codeword $\hat{\bm{Y}}_{\pi\left(1\right)}$ using side
information $\bm{Y}_0$; later, it successively recovers codewords
$\hat{\bm{Y}}_{\pi\left(n\right)}$, $n>1$, using
$\bm{Y}_0,\hat{\bm{Y}}_{\pi\left(1:n-1\right)}$ as side
information.
\end{itemize}

It is easy to check the optimality of the S-WZ coding:
\begin{eqnarray}\label{eq:S-WZ}
\sum_{n=1}^N\rho_{\pi\left(n\right)} &=& \sum_{n=1}^N
I\left(\bm{Y}_{\pi\left(n\right)};\hat{\bm{Y}}_{\pi\left(n\right)}|\bm{Y}_0,\hat{\bm{Y}}_{\pi\left(1:n-1\right)}\right)\\
&=&\sum_{n=1}^N
I\left(\bm{Y}_{1:N};\hat{\bm{Y}}_{\pi\left(n\right)}|\bm{Y}_0,\hat{\bm{Y}}_{\pi\left(1:n-1\right)}\right)\nonumber\\
&=&I\left(\bm{Y}_{1:N};\hat{\bm{Y}}_{1:N}|\bm{Y}_0\right)\nonumber\\
&=&\mathrm{R}\nonumber.
\end{eqnarray}Second equality comes from the Markov chain
in Proposition \ref{Def_berger_tung}, and third from the chain
rule for mutual information; The fourth follows from the fact that
$\bm{\Phi}_1^*,\cdots,\bm{\Phi}_N^*$ satisfy the constraint
(\ref{eq:rate_prop1}) with equality. Unfortunately, transform
coding is not (generally) optimum for S-WZ with $N>1$, since the
eigenvectors of
$\bm{\Phi}_{\pi\left(n\right)}^*=\bm{U}_n\bm{\eta}^{-1}\bm{U}_n^\dagger$,
and those of
$\bm{R}_{\bm{Y}_{\pi\left(1\right)}|\bm{Y}_0,\hat{\bm{Y}}_{\pi\left(1:n-1\right)}}=\bm{V}_n\bm{S}\bm{V}_n^\dagger$
does necessarily match.

\section{The Multiple User Scenario}\label{sec:multiuser}

In previous sections, we considered a single user within the
network. To complement the analysis, we study hereafter multiple
(\textit{i.e.}, two) senders transmitting simultaneously. The
users, $s_1$ and $s_2$, transmit two independent messages
$\omega_u\in\left\{1,\cdots,2^{nR_{u}}\right\}$, $u=1,2$,  mapped
onto codewords $\bm{X}_{u}^n$, $u=1,2$, respectively. Codewords
are drawn \textit{i.i.d.} from random vectors
$\bm{X}_{u}\sim\mathcal{CN}\left(\bm{0,\bm{Q}_u}\right) $, $u=1,2$
and are not subject to optimization. Hence, now, the BSs receive:
\begin{eqnarray}\label{eq:signal_model}
\bm{Y}_i^n = \sum_{u=1}^2 \bm{H}_{u,i}\bm{X}_{u}^n+ \bm{Z}_i^n,\ \
i=0,\cdots,N.
\end{eqnarray}Here, $\bm{H}_{u,i}$ is the MIMO channel between
user $s_u$ and $\textrm{BS}_i$, and
$\bm{Z}_i\sim\mathcal{CN}\left(0,\sigma_r^2\bm{I}\right)$. As
previously, signals at $\mathrm{BS}_1,\cdots,\mathrm{BS}_N$ are
distributely compressed using a D-WZ code, and later sent to
$\mathrm{BS}_0$, which centralizes decoding. Using standard
arguments, the set $\mathcal{C}$ of transmission rates $R_{u}$,
$u=1,2$ at which messages $\omega_u$, $u=1,2$ can be reliably
decoded is \cite{COVER}\cite{SANDEROVICH08}:

\begin{eqnarray}\label{eq:MAC_capcityregion}
\mathcal{C}=\mathrm{coh}\left(\bigcup_{\underset{I\left(\bm{Y}_{1:N};\hat{\bm{Y}}_{1:N}|\bm{Y}_0\right)\leq
\mathrm{R}}{\prod_{i=1}^Np\left(\hat{\bm{Y}}_{i}|\bm{Y}_{i}\right):}}\left\{\left(R_{1},R_{2}\right):\begin{array}{c}
                                                                                     R_{1} \leq I\left(\bm{X}_{1};\bm{Y}_0,\hat{\bm{Y}}_{1:N}|\bm{X}_{2}\right) \\
                                                                                     R_{2} \leq I\left(\bm{X}_{2};\bm{Y}_0,\hat{\bm{Y}}_{1:N}|\bm{X}_{1}\right)\\
                                                                                     R_{1}+R_{2}\leq I\left(\bm{X}_{1},\bm{X}_{2};\bm{Y}_0,\hat{\bm{Y}}_{1:N}\right)
                                                                                   \end{array}
\right\}\right)
\end{eqnarray}

The union in (\ref{eq:MAC_capcityregion}) is explained by the fact
that compression codebooks might be arbitrary chosen at the BSs.
Notice that the boundary points of the region can be achieved
using superposition coding (SC) at the users, successive
interference cancellation (SIC) at the $\mathrm{BS}_0$, and
(optionally) \textit{time-sharing} (TS). Furthermore, as for the
single-user case, the optimum conditional distributions
$p\left(\hat{\bm{Y}}_{i}|\bm{Y}_{i}\right)$, $i=1,\cdots,N$ at the
boundary of the region can be proven to be
Gaussian\footnote{Recall that
$\bm{X}_{u}\sim\mathcal{CN}\left(\bm{0,\bm{Q}_u}\right) $,
$u=1,2$. We omit the proof due to space limitations.}. Therefore,
the union in (\ref{eq:MAC_capcityregion}) can be restricted to
compressed vectors of the form $\hat{\bm{Y}}_i = \bm{Y}_i +
\bm{Z}_i^c$, where
$\bm{Z}_i^c\sim\mathcal{CN}\left(0,\bm{\Phi}_i\right)$. That is:

\begin{eqnarray}\label{eq:MAC_capcityregion2}
 \mathcal{C}=\mathrm{coh}\left(\bigcup_{\underset{\in
c\left(\mathrm{R}\right)}{\bm{\Phi}_{1},\cdots,\bm{\Phi}_N}}\left\{\begin{array}{c}
 R_{1} \leq \log\det\left(\bm{I}+\frac{\bm{Q}_1}{\sigma_r^2}\bm{H}_{1,0}^\dagger\bm{H}_{1,0}+\bm{Q}_1\sum_{n=1}^N\bm{H}_{1,n}^\dagger\left(\sigma_r^2\bm{I}+\bm{\Phi}_n\right)^{-1}\bm{H}_{1,n}
\right)\\
R_{2}\leq
\log\det\left(\bm{I}+\frac{\bm{Q}_2}{\sigma_r^2}\bm{H}_{2,0}^\dagger\bm{H}_{2,0}+\bm{Q}_2\sum_{n=1}^N\bm{H}_{2,n}^\dagger\left(\sigma_r^2\bm{I}+\bm{\Phi}_n\right)^{-1}\bm{H}_{2,n}
\right)\\
R_{1}+R_{2}\leq
\log\det\left(\bm{I}+\frac{\bm{Q}}{\sigma_r^2}\bm{H}_{s,0}^\dagger\bm{H}_{s,0}+\bm{Q}\sum_{n=1}^N\bm{H}_{s,n}^\dagger\left(\sigma_r^2\bm{I}+\bm{\Phi}_n\right)^{-1}\bm{H}_{s,n}
\right)
                                                                                   \end{array}
\right\}\right)
\end{eqnarray}

Where $c\left(\mathrm{R}\right)=\left\{\bm{\Phi}_{1:N}:
\log\det\left(\bm{I}+\textrm{diag}\left(\bm{\Phi}_1^{-1},\cdots,\bm{\Phi}_N^{-1}\right)\bm{R}_{\bm{Y}_{1:N}|\bm{Y}_0}\right)\leq
\mathrm{R}\right\}$,
$\bm{Q}=\mathrm{diag}\left(\bm{Q}_1,\bm{Q_2}\right)$ and
$\bm{H}_{s,n} = \left[\bm{H}_{1,n},\ \bm{H}_{2,n}\right]$, for
$n=0,\cdots,N$. Covariance $\bm{R}_{\bm{Y}_{1:N}|\bm{Y}_0}$ is
calculated in Appendix \ref{cova_multiple_user}. To evaluate such
a region, we resort to the weighted sum-rate (WSR) optimization
\cite[Sec. III-C]{VERDÚ_CHENG93}. That is, we express
\begin{eqnarray}\label{eq:nyperplanes}
\mathcal{C}=\left\{\left(R_{1},R_{2}\right):\alpha R_{1} +
\left(1-\alpha\right)R_{2} \leq \mathcal{R}\left(\alpha\right),
\forall \alpha\in\left[0,1\right] \right\},
\end{eqnarray}with $\mathcal{R}\left(\alpha\right)$ the maximum
WSR, given weights $\alpha$ and $\left(1-\alpha\right)$ for user
$s_1$ and $s_2$, respectively. Such a WSR is achieved with
equality at the boundary of the region. Thus, it can be attained
considering SIC at $\mathrm{BS}_0$, which consists of first
decoding the user with lowest weight, considering second user as
interference. Later, once decoded the first user, the decoder
substracts its contribution to the received signal, and then
decodes the second user without interference.


\subsection{Useful Outer Regions}
Prior to solving the WSR optimization, we present two outer
regions on (\ref{eq:MAC_capcityregion2}).
\begin{outerregion} Rate region
(\ref{eq:MAC_capcityregion2}) is contained within the region
\begin{eqnarray}\label{eq:MAC_outerbound}
\begin{array}{c}
 R_{1} \leq \log\det\left(\bm{I}+\frac{\bm{Q}_1}{\sigma_r^2}\sum_{n=0}^N\bm{H}_{1,n}^\dagger
\bm{H}_{1,n}\right)\\
R_{2}\leq
\log\det\left(\bm{I}+\frac{\bm{Q}_2}{\sigma_r^2}\sum_{n=0}^N\bm{H}_{2,n}^\dagger
\bm{H}_{2,n}\right)\\
R_{1}+R_{2}\leq
\log\det\left(\bm{I}+\frac{\bm{Q}}{\sigma_r^2}\sum_{n=0}^N\bm{H}_{s,n}^\dagger
\bm{H}_{s,n}\right)
                                                                                   \end{array}
\end{eqnarray}
\end{outerregion}
\begin{remark} It is the capacity region when
$\bm{Y}_i$, $i=1,\cdots,N$ are available at $\mathrm{BS}_0$.
\end{remark}
\begin{outerregion}\label{upperbound4} The sum-rate satisfies
\begin{eqnarray}
R_{1}+R_{2} \leq
\log\det\left(\bm{I}+\frac{1}{\sigma_r^2}\bm{H}_{s,0}\bm{Q}\bm{H}_{s,0}^\dagger
\right) + \mathrm{R}.
\end{eqnarray}
\end{outerregion} \begin{proof} It is equivalent to the proof of upper bound
\ref{upperbound2}.\end{proof}


\subsection{Sum Rate Maximization}\label{sec:sum_rate}

The sum-rate of (\ref{eq:MAC_capcityregion2}) is identical to the
maximum transmission rate of a single user $s$ transmitting a
vector $\bm{X}_s=\left[\bm{X}_{1}^T,\bm{X}_{2}^T\right]^T$, with
equivalent channel $\bm{H}_{s,n} = \left[\bm{H}_{1,n},\
\bm{H}_{2,n}\right]$, $n=0,\cdots,N$. Hence, to maximize it we
resort to Algorithm \ref{algo1}.

\subsection{Weighted Sum Rate Maximization}

Let consider the WSR optimization with $\alpha > \frac{1}{2}$
(\textit{i.e.,} higher priority to user 1, which is decoded last
at the SIC). With such a decoding, the maximum rate of user 1 is
\begin{eqnarray}\label{eq:Rs1}
R_{1} =
I\left(\bm{X}_{1};\bm{Y}_0,\hat{\bm{Y}}_{1:N}|\bm{X}_{2}\right)\qquad \qquad \qquad \qquad \qquad \qquad \qquad \qquad \qquad \ \\
=
\log\det\left(\bm{I}+\frac{\bm{Q}_1}{\sigma_r^2}\bm{H}_{1,0}^\dagger\bm{H}_{1,0}+\bm{Q}_1\sum_{n=1}^N\bm{H}_{1,n}^\dagger\left(\sigma_r^2\bm{I}+\bm{\Phi}_n\right)^{-1}\bm{H}_{1,n}
\right). \nonumber
\end{eqnarray} On the other hand, the rate of
user 2, which is decoded first, follows:
\begin{eqnarray}\label{eq:Rs2}
R_{2} &=&
I\left(\bm{X}_{2};\bm{Y}_0,\hat{\bm{Y}}_{1:N}\right)\\
&=&
I\left(\bm{X}_{1},\bm{X}_{2};\bm{Y}_0,\hat{\bm{Y}}_{1:N}\right)-I\left(\bm{X}_{1};\bm{Y}_0,\hat{\bm{Y}}_{1:N}|\bm{X}_{2}\right)\nonumber\\
&=&\log\det\left(\bm{I}+\frac{\bm{Q}}{\sigma_r^2}\bm{H}_{s,0}^\dagger\bm{H}_{s,0}+\bm{Q}\sum_{n=1}^N\bm{H}_{s,n}^\dagger\left(\sigma_r^2\bm{I}+\bm{\Phi}_n\right)^{-1}\bm{H}_{s,n}
\right)-R_1,\nonumber
\end{eqnarray} where
$\bm{Q}=\mathrm{diag}\left(\bm{Q}_1,\bm{Q_2}\right)$ and
$\bm{H}_{s,n} = \left[\bm{H}_{1,n},\ \bm{H}_{2,n}\right]$. The
WSR, $\alpha R_1+\left(1-\alpha\right)R_2$, which has to be
maximized is convex on $\bm{\Phi}_{1},\cdots,\bm{\Phi}_N$. To make
it concave, we use the change the variables $\bm{\Phi}_n =
\bm{A}_n^{-1}$, $n=1,\cdots,N$. Then, plugging (\ref{eq:Rs1}) and
(\ref{eq:Rs2}) into (\ref{eq:nyperplanes}), the WSR optimization
turns into
\begin{eqnarray}\label{eq:wsr_concave}
\mathcal{R}\left(\alpha\right)&=&\max_{\bm{A}_{1},\cdots,\bm{A}_N}\
\alpha \cdot R_{1}+ \left(1-\alpha\right)\cdot R_{2}\qquad \qquad \ \ \ \ \\
&& \ \mathrm{s.t.} \ \
\log\det\left(\bm{I}+\textrm{diag}\left(\bm{A}_1,\cdots,\bm{A}_N\right)\bm{R}_{\bm{Y}_{1:N}|\bm{Y}_0}\right)\leq
\mathrm{R}\nonumber
\end{eqnarray} As previously, the constraint does not define a convex feasible
set. To solve the optimization, we follow the strategy presented
previously: first, we show that the optimization has zero duality
gap. Later, we propose an iterative algorithm that solves the dual
problem, thus solving the primal too.

\begin{lemma} The duality gap for the WSR optimization (\ref{eq:wsr_concave}) is
zero.
\end{lemma}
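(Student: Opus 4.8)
The plan is to reduce the zero-duality-gap claim for the weighted sum-rate optimization (\ref{eq:wsr_concave}) to the same time-sharing argument already established for the single-user optimization (\ref{eq:rate_multiple_base_stations}), invoking \cite[Theorem 1]{YU06}. That theorem guarantees a vanishing duality gap for optimizations of this structure (a single scalar coupling constraint, here the backhaul rate $\mathrm{R}$) whenever the \emph{time-sharing property} holds, \emph{i.e.,} whenever the optimal value $\mathcal{R}\left(\alpha\right)$ is concave as a function of the constraint level $\mathrm{R}$. So the real work is to verify that time-sharing property for the WSR problem, exactly as was done in the proof of the earlier Lemma.

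\emph{First} I would fix the weight $\alpha\in\left[0,1\right]$ and, for three backhaul budgets $\mathrm{R}_x,\mathrm{R}_y,\mathrm{R}_z$ with $\mathrm{R}_z=\nu\mathrm{R}_x+\left(1-\nu\right)\mathrm{R}_y$, let $\left\{\bm{\Phi}_n^x\right\}$ and $\left\{\bm{\Phi}_n^y\right\}$ denote compression-noise covariances achieving $\mathcal{R}\left(\alpha;\mathrm{R}_x\right)$ and $\mathcal{R}\left(\alpha;\mathrm{R}_y\right)$. \emph{The key step} is to construct a feasible point for the budget $\mathrm{R}_z$ by time-sharing the two compression schemes: use the codebooks associated with $\left\{\bm{\Phi}_n^x\right\}$ for a fraction $\nu$ of the block and those associated with $\left\{\bm{\Phi}_n^y\right\}$ for the remaining fraction $1-\nu$. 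Because each of $R_1$ and $R_2$ in (\ref{eq:Rs1})--(\ref{eq:Rs2}) is itself a mutual-information quantity achievable by the distributed Wyner-Ziv scheme, the standard fact that time-sharing of compression codes cannot reduce the attainable mutual information (cf. \cite[Lemma 13.4.1]{COVER}, already cited in the proof of the single-user Lemma) yields
\begin{eqnarray*}
R_u\left(\mathrm{R}_z\right)\ \geq\ \nu R_u\left(\mathrm{R}_x\right)+\left(1-\nu\right)R_u\left(\mathrm{R}_y\right),\qquad u=1,2,
\end{eqnarray*}
while the time-shared scheme consumes an average backhaul rate of $\nu\mathrm{R}_x+\left(1-\nu\right)\mathrm{R}_y=\mathrm{R}_z$, so it is feasible.

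\emph{Next} I would propagate this to the weighted sum. Since $\alpha$ and $1-\alpha$ are nonnegative, taking the $\alpha$-weighted combination of the two inequalities above gives $\alpha R_1\left(\mathrm{R}_z\right)+\left(1-\alpha\right)R_2\left(\mathrm{R}_z\right)\geq \nu\mathcal{R}\left(\alpha;\mathrm{R}_x\right)+\left(1-\nu\right)\mathcal{R}\left(\alpha;\mathrm{R}_y\right)$; and because $\mathcal{R}\left(\alpha;\mathrm{R}_z\right)$ is the \emph{maximum} over all feasible noises at budget $\mathrm{R}_z$, it dominates the value achieved by our constructed time-shared point. Hence $\mathcal{R}\left(\alpha;\mathrm{R}_z\right)\geq\nu\mathcal{R}\left(\alpha;\mathrm{R}_x\right)+\left(1-\nu\right)\mathcal{R}\left(\alpha;\mathrm{R}_y\right)$, which is precisely concavity in $\mathrm{R}$, \emph{i.e.,} the time-sharing property. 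Applying \cite[Theorem 1]{YU06} then closes the argument and establishes that the duality gap is zero.

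\emph{The main obstacle} I anticipate is the bookkeeping around $R_2$: unlike the pure sum-rate quantity, $R_2$ in (\ref{eq:Rs2}) is written as a \emph{difference} of two mutual informations (the joint rate minus $R_1$), so I cannot naively claim it is a single mutual information that is monotone under time-sharing. The clean way around this is to argue the time-sharing bound directly on the two genuinely information-theoretic quantities $I\left(\bm{X}_1,\bm{X}_2;\bm{Y}_0,\hat{\bm{Y}}_{1:N}\right)$ (the joint rate) and $R_1=I\left(\bm{X}_1;\bm{Y}_0,\hat{\bm{Y}}_{1:N}|\bm{X}_2\right)$, and then recover the concavity of the WSR by writing $\alpha R_1+\left(1-\alpha\right)R_2=\left(2\alpha-1\right)R_1+\left(1-\alpha\right)\left(R_1+R_2\right)$; for $\alpha>\tfrac12$ both coefficients $\left(2\alpha-1\right)$ and $\left(1-\alpha\right)$ are nonnegative, so the WSR is a nonnegative combination of two time-sharing-concave mutual-information functionals and is therefore itself concave in $\mathrm{R}$. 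This sidesteps the sign issue entirely and mirrors the structure of the earlier single-user proof, so no new machinery is needed.
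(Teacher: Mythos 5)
Your proposal is correct and follows essentially the same route as the paper: both establish the time-sharing property (concavity of the optimal WSR in $\mathrm{R}$) and then invoke \cite[Theorem 1]{YU06} to conclude the duality gap is zero. The paper's proof is a one-line appeal to that theorem, so your explicit verification --- in particular the decomposition $\alpha R_1 + \left(1-\alpha\right)R_2 = \left(2\alpha-1\right)R_1 + \left(1-\alpha\right)\left(R_1+R_2\right)$, which handles the fact that $R_2$ is a difference of mutual informations --- simply fills in details the paper leaves implicit.
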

\begin{proof} Applying the time-sharing property in \cite[Theorem 1]{YU06} the zero-duality gap is demonstrated.\end{proof}

Let then solve the dual problem. The Lagrangian for optimization
(\ref{eq:wsr_concave}) is defined as:
\begin{eqnarray}\label{eq:WSRf_dual}
\mathcal{L}_\alpha\left(\bm{A}_1,\cdots,\bm{A}_n,\lambda\right) =
\alpha \cdot R_{1}+ \left(1-\alpha\right)\cdot R_{2} -
\lambda\cdot\left(\log\det\left(\bm{I}+\textrm{diag}\left(\bm{A}_1,\cdots,\bm{A}_N\right)\bm{R}_{\bm{Y}_{1:N}|\bm{Y}_0}\right)-
\mathrm{R}\right)
\end{eqnarray}The first step is to find the dual function \cite[Section 5]{BERTSEKAS}
\begin{eqnarray}\label{eq:WSRg_dual}
g_\alpha\left(\lambda\right)=\max_{\bm{A}_1,\cdots,\bm{A}_n\succeq0}\mathcal{L}_\alpha\left(\bm{A}_1,\cdots,\bm{A}_n,\lambda\right)
\end{eqnarray}

In previous sections, we showed that such an optimization can be
tackled using a block-coordinate algorithm. Unfortunately, now,
the maximization with respect to a single $\bm{A}_n$ cannot be
solved in closed-form, and is not clear to be uniquely attained.
Hence, to solve (\ref{eq:WSRg_dual}), we propose another
algorithm: the \textit{gradient projection method} (GP)
\cite[Section 2.3]{BERTSEKAS}. GP has been used to \textit{e.g.,}
compute transmit covariances for MIMO interference channels, and
the WSR of MIMO broadcast channels \cite[Section
IV-C]{YE03}\cite{LIU07}. It is defined as follows: let
(\ref{eq:WSRg_dual}), and consider the initial point
$\left\{\bm{A}_1^0,\cdots,\bm{A}_n^0\right\}\succeq0$. It
iteratively updates \cite[Section 2.3.1]{BERTSEKAS}:
\begin{eqnarray}\label{eq:GP}
\bm{A}_n^{t+1} =
\bm{A}_n^{t}+\gamma_t\left(\bar{\bm{A}}_n^{t}-\bm{A}_n^{t}\right),\
\ n=1,\cdots,N
\end{eqnarray}where $t$ is the iteration index and $0<\gamma_t\leq1$ is the step size. Also,
\begin{eqnarray}
\bar{\bm{A}}_n^{t} =
\left[\bm{A}_n^{t}+s_t\cdot\nabla_{\bm{A}_n}\mathcal{L}_\alpha\left(\lambda,\bm{A}_1^{t},\cdots,\bm{A}_N^{t}\right)\right]_{\succeq0},\
\ n=1,\cdots,N
\end{eqnarray}with $s_t\geq0$ an scalar and
$\nabla_{\bm{A}_n}\mathcal{L}_\alpha\left(\lambda,\bm{A}_1^{t},\cdots,\bm{A}_N^{t}\right)$
the gradient of $\mathcal{L}_\alpha\left(\cdot\right)$ with
respect to $\bm{A}_n$, evaluated at
$\bm{A}_1^{t},\cdots,\bm{A}_N^{t}$. Finally,
$\left[\cdot\right]_{\succeq0}$ denotes the projection (with
respect to the Frobenius norm) onto the cone of positive
semidefinite matrices. Whenever $\gamma_t$ and $s_t$ are chosen
appropriately, the sequence
$\left\{\bm{A}_1^t,\cdots,\bm{A}_n^t\right\}$ is proven to
converge to a local maximum of (\ref{eq:WSRg_dual})
\cite[Proposition 2.2.1]{BERTSEKAS}. (For global convergence to
hold, the contraction property must be satisfied. Unfortunately,
we were not able to prove this property for our optimization). In
order to make the algorithm work for the problem, we need to:
\textit{i}) compute the projection of a Hermitian matrix $\bm{S}$,
with eigen-decomposition $\bm{S}=\bm{U}\bm{\eta}\bm{U}^\dagger$,
onto the cone of positive semidefinite matrices. It is equal to
\cite[Theorem 2.1]{MALICK06}:
\begin{eqnarray}\label{projection}
\left[\bm{S}\right]_{\succeq0}=
\bm{U}\mathrm{diag}\left(\max\left\{\eta_1,0\right\},\cdots,\max\left\{\eta_m,0\right\}\right)\bm{U}^\dagger.
\end{eqnarray}
\textit{ii}) Obtain the gradient of
$\mathcal{L}_\alpha\left(\cdot\right)$ with respect to a single
$\bm{A}_n$, which is twice the conjugate of the partial derivative
of the function with respect to such a matrix
\cite{MATRIXCOOKBOOK}:
\begin{eqnarray}\label{eq:grad_def}
\nabla_{\bm{A}_n}\mathcal{L}_\alpha\left(\bm{A}_{1:N},\lambda\right)=
2\left(\left[\frac{\partial
\mathcal{L}_\alpha\left(\bm{A}_{1:N},\lambda\right)}{\partial\bm{A}_n}\right]^T\right)^\dagger
\end{eqnarray}

The Lagrangian is defined in (\ref{eq:WSRf_dual}). To obtain its
partial derivative, we make use of (\ref{constraint_extended}):
\begin{eqnarray}\label{eq:grad_constraint}
\left[\frac{\partial
\log\det\left(\bm{I}+\textrm{diag}\left(\bm{A}_1,\cdots,\bm{A}_N\right)\bm{R}_{\bm{Y}_{1:N}|\bm{Y}_0}\right)}{\partial\bm{A}_n}\right]^T
=
\left[\frac{\partial\log\det\left(\bm{I}+\bm{A}_n\bm{R}_{\bm{Y}_{n}|\bm{Y}_0,\hat{\bm{Y}}_{n}^c}\right)}{\partial\bm{A}_n}\right]^T\\
=\bm{R}_{\bm{Y}_{n}|\bm{Y}_0,\hat{\bm{Y}}_{n}^c}\left(\bm{I}+\bm{A}_n\bm{R}_{\bm{Y}_{n}|\bm{Y}_0,\hat{\bm{Y}}_{n}^c}\right)^{-1}.\nonumber
\end{eqnarray}The conditional covariance is computed in Appendix
\ref{cova_multiple_user}. Furthermore, we can also derive that
\begin{eqnarray}\label{eq:grad_RS1}
\frac{\partial R_{1}}{\partial\bm{A}_n} &=& \frac{\partial
I\left(\bm{X}_{1};\bm{Y}_0,\hat{\bm{Y}}_{1:N}|\bm{X}_{2}\right)}{\partial\bm{A}_n}\\
&=& \frac{\partial
I\left(\bm{X}_{1};\hat{\bm{Y}}_{n}|\bm{X}_{2},\bm{Y}_0,\hat{\bm{Y}}_{n}^c\right)}{\partial\bm{A}_n}\nonumber
\end{eqnarray}where second equality follows from the chain rule for
mutual information and noting that
$I\left(\bm{X}_{1};\bm{Y}_0,\hat{\bm{Y}}_{n}^c|\bm{X}_{2}\right)$
does not depend on $\bm{A}_n$. The mutual information above is
evaluated as:
\begin{eqnarray}\label{eq:I_RS1}
I\left(\bm{X}_{1};\hat{\bm{Y}}_{n}|\bm{X}_{2},\bm{Y}_0,\hat{\bm{Y}}_{n}^c\right)
&=&H\left(\hat{\bm{Y}}_{n}|\bm{X}_{2},\bm{Y}_0,\hat{\bm{Y}}_{n}^c\right)-H\left(\hat{\bm{Y}}_{n}|\bm{X}_{1},\bm{X}_{2},\bm{Y}_0,\hat{\bm{Y}}_{n}^c\right)\\
&=&
\log\det\left(\bm{R}_{{\bm{Y}}_{n}|\bm{X}_{2},\bm{Y}_0,\hat{\bm{Y}}_{n}^c}+\bm{\Phi}_n\right)-\log\det\left(\sigma_r^2\bm{I}+\bm{\Phi}_n\right)\nonumber\\
&=&
\log\det\left(\bm{A}_n\bm{R}_{{\bm{Y}}_{n}|\bm{X}_{2},\bm{Y}_0,\hat{\bm{Y}}_{n}^c}+\bm{I}\right)-\log\det\left(\bm{A}_n\sigma_r^2+\bm{I}\right)\nonumber
\end{eqnarray}Last equality follows from
$\bm{\Phi}_n=\bm{A}_n^{-1}$, and
$\bm{R}_{{\bm{Y}}_{n}|\bm{X}_{2},\bm{Y}_0,\hat{\bm{Y}}_{n}^c}$ is
computed in Appendix \ref{cova_multiple_user}. Therefore, the
derivative of $R_{1}$ remains \cite{MATRIXCOOKBOOK}
\begin{eqnarray}\label{eq:grad_RS1_final}
\left[\frac{\partial R_{1}}{\partial\bm{A}_n}\right]^T =
\bm{R}_{{\bm{Y}}_{n}|\bm{X}_{2},\bm{Y}_0,\hat{\bm{Y}}_{n}^c}\left(\bm{A}_n\bm{R}_{{\bm{Y}}_{n}|\bm{X}_{2},\bm{Y}_0,\hat{\bm{Y}}_{n}^c}+\bm{I}\right)^{-1}-\sigma_r^2\left(\bm{A}_n\sigma_r^2+\bm{I}\right)^{-1}.
\end{eqnarray}
Equivalently, we can obtain for the derivative of $R_{2}$ that
\begin{eqnarray}\label{eq:grad_RS2}
\frac{\partial R_{2}}{\partial\bm{A}_n} &=& \frac{\partial
I\left(\bm{X}_{2};\bm{Y}_0,\hat{\bm{Y}}_{1:N}\right)}{\partial\bm{A}_n}\\
&=&\frac{\partial
I\left(\bm{X}_{2};\hat{\bm{Y}}_{n}|\bm{Y}_0,\hat{\bm{Y}}_{n}^c\right)}{\partial\bm{A}_n}\nonumber.
\end{eqnarray} Where we evaluate:
\begin{eqnarray}\label{eq:I_RS2}
I\left(\bm{X}_{2};\hat{\bm{Y}}_{n}|\bm{Y}_0,\hat{\bm{Y}}_{n}^c\right)
&=&H\left(\hat{\bm{Y}}_{n}|\bm{Y}_0,\hat{\bm{Y}}_{n}^c\right)-H\left(\hat{\bm{Y}}_{n}|\bm{X}_{2},\bm{Y}_0,\hat{\bm{Y}}_{n}^c\right)\\
&=&\log\det\left(\bm{A}_n\bm{R}_{{\bm{Y}}_{n}|\bm{Y}_0,\hat{\bm{Y}}_{n}^c}+\bm{I}\right)-\log\det\left(\bm{A}_n\bm{R}_{{\bm{Y}}_{n}|\bm{X}_{2},\bm{Y}_0,\hat{\bm{Y}}_{n}^c}+\bm{I}\right)\nonumber
\end{eqnarray}Conditional covariances are obtained in
Appendix \ref{cova_multiple_user}. The derivative of $R_{2}$ thus
remains:
\begin{eqnarray}\label{eq:grad_RS2_final}
\left[\frac{\partial R_{2}}{\partial\bm{A}_n}\right]^T
=\bm{R}_{{\bm{Y}}_{n}|\bm{Y}_0,\hat{\bm{Y}}_{n}^c}\left(\bm{A}_n\bm{R}_{{\bm{Y}}_{n}|\bm{Y}_0,\hat{\bm{Y}}_{n}^c}+\bm{I}\right)^{-1}-\bm{R}_{{\bm{Y}}_{n}|\bm{X}_{2},\bm{Y}_0,\hat{\bm{Y}}_{n}^c}\left(\bm{A}_n\bm{R}_{{\bm{Y}}_{n}|\bm{X}_{2},\bm{Y}_0,\hat{\bm{Y}}_{n}^c}+\bm{I}\right)^{-1}.\end{eqnarray}

Plugging (\ref{eq:grad_constraint}), (\ref{eq:grad_RS1_final}) and
(\ref{eq:grad_RS2_final}) into (\ref{eq:grad_def}) we obtain the
gradient of the function, which is used in the GP algorithm to
obtain $g_{\alpha}\left(\lambda\right)$. Notice that for $\alpha
\leq \frac{1}{2}$, the roles of users $s_1$ and $s_2$ are
interchanged, being user 1 decoded first. This roles would also
need to be interchanged in the computation of the gradients of
$R_{1}$ and $R_{2}$. Once obtained the dual function, we minimize
it to obtain:
\begin{eqnarray}
\mathcal{R}\left(\alpha\right)=\min_{\lambda\geq0}\
g_\alpha\left(\lambda\right).
\end{eqnarray}To solve this minimization, we use the subgradient
approach as in Section \ref{sec:multiple basestations}. Taking all
this into account we build up Algorithm \ref{algo3}. As for the
previous section, we can only claim local convergence.

\begin{algorithm}[h!]
\caption{Two-user WSR dual problem}\label{algo3}
\begin{algorithmic}[1]
\medskip
\STATE Initialize $\lambda_{\mathrm{min}}=0$ and
$\lambda_{\mathrm{max}}$ \REPEAT \STATE $\lambda =
\frac{\lambda_{\mathrm{max}}-\lambda_{\mathrm{min}}}{2}$ \STATE
Obtain $\left\{\bm{A}_1^*,\cdots,\bm{A}_N^*\right\}=
\arg\max\mathcal{L}_\alpha\left(\bm{A}_1,\cdots,\bm{A}_n,\lambda\right)$
from Algorithm \ref{algo4} \STATE Evaluate $h$ as in
(\ref{h_search}), where $\bm{R}_{\bm{Y}_{1:N}|\bm{Y}_0}$ follows
Appendix \ref{cova_multiple_user}. \STATE if $h\leq 0$, then
$\lambda_{\mathrm{min}}=\lambda$, else
$\lambda_{\mathrm{max}}=\lambda$
\UNTIL{$\lambda_{\mathrm{max}}-\lambda_{\mathrm{min}}\leq
\epsilon$} \STATE $\mathcal{R}\left(\alpha\right)=\alpha
R_{1}\left(\bm{A}_1^*,\cdots,\bm{A}_N^*\right)+\left(1-\alpha\right)R_{2}\left(\bm{A}_1^*,\cdots,\bm{A}_N^*\right)$.

\end{algorithmic}
\end{algorithm}

\begin{algorithm}[h!]
\caption{GP to obtain
$g_{\alpha}\left(\lambda\right)$}\label{algo4}
\begin{algorithmic}[1]
\medskip
\STATE Initialize $\bm{A}_n^0=\bm{0}$, $n=1,\cdots,N$ and $t=0$
\REPEAT \STATE Compute the gradient
$\bm{G}_n^{t}=\nabla_{\bm{A}_n}\mathcal{L}_\alpha\left(\lambda,\bm{A}_1^{t},\cdots,\bm{A}_N^{t}\right)$,
$n=1,\cdots,N$ from (\ref{eq:grad_def}).\STATE  Choose appropriate
$s_t$ \STATE Set $\hat{\bm{A}}_n^{t} =
\bm{A}_n^{t}+s_t\cdot\bm{G}_n^{t}$. Calculate $\hat{\bm{A}}_n^{t}=
\bm{U}_n\bm{\eta}\bm{U}_n^\dagger$. Then, $\bar{\bm{A}}_n^{t}=
\bm{U}_n\max\left\{\bm{\eta},0\right\}\bm{U}_n^\dagger$,
$n=1,\cdots,N$. \STATE Choose appropriate $\gamma_t$ \STATE Update
$\bm{A}_n^{t+1} =
\bm{A}_n^{t}+\gamma_t\left(\bar{\bm{A}}_n^{t}-\bm{A}_n^{t}\right)$,
$n=1,\cdots,N$ \STATE $t=t+1$ \UNTIL{The sequence converges
$\left\{\bm{A}_1^{t},\cdots,\bm{A}_N^{t}\right\}\rightarrow\left\{\bm{A}_1^*,\cdots,\bm{A}_N^*\right\}$}
\STATE Return $\left\{\bm{A}_1^*,\cdots,\bm{A}_N^*\right\}$
\end{algorithmic}
\end{algorithm}
\vspace{-7mm}
%


\section{Numerical Results}\label{sec:numericalresults}

We evaluate the performance of D-WZ coding within a
single-frequency network composed of a central base station
$\mathrm{BS}_0$ plus its first tier of six cells. The radius of
each cell is 700 m, and BSs have all three receive antennas. On
the other hand, users have two antennas, are located at the edge
of the central cell and transmit isotropically, \textit{i.e.},
$\bm{Q}_i=\frac{\mathrm{P}_{TX}}{2}\bm{I}$. Transmitted power is
set to 23 dBm, and wireless channels are simulated taking into
account path loss, log-normal shadowing and Rayleigh fading.
Specifically, fading is assumed \textit{i.i.d.} among antennas,
and shadowing uncorrelated among BSs. Two propagation scenarios
are studied: \textit{i}) Line-of-sight (LOS), with path-loss
exponent $\alpha=2.6$ and shadowing standard deviation $\sigma=4$
dB. \textit{ii}) Non Line-of-sight (N-LOS), with $\alpha=4.05$ and
$\sigma=10$ dB.


Fig. \ref{fig:fig4} plots the cumulative density function (cdf) of
the uplink rate\footnote{The user is assumed to transmit at $1$
Mbaud, \textit{i.e.,} 1 Msymb/s.} for a single-user network,
considering different values of the backhaul rate $\mathrm{R}$.
Particularly, Fig. \ref{fig:fig4a} depicts results for LOS
propagation,  and shows gains up to 6 Mbit/s @ 5\% outage, with
$\mathrm{R} = 15$ Mbit/s. It is clearly shown that BSs cooperation
becomes more remarkable for lower outage probabilities. On the
other hand, Fig. \ref{fig:fig4b} shows results for N-LOS
propagation, where rate gains are reduced. In this case,
cooperation becomes more convenient for higher outages, showing
that @ 50\% outage, three-fold gains arise with 15 Mbit/s of
backhaul.

Fig \ref{fig:fig6} plots the uplink rate of a single-user network
with $\mathrm{R}= 7$ Mbit/s, for different number $N$ of
cooperative BSs. First, Fig. \ref{fig:fig6a} depicts the cdf of
the user's rate under LOS propagation conditions. We notice that @
5\% outage, with only 1 cooperative BS, a rate gain of 2 Mbit/s is
obtained with respect to the non-cooperative case. However, when
increasing the number of cooperative BSs to 6, only an additional
rate gain of 2 Mbit/s is obtained. That is, the impact of
introducing new cooperative BSs in the system diminishes as the
network grows. Again, cooperation is more useful for low outages.
On the other hand, Fig. \ref{fig:fig6b} depicts results for N-LOS
propagation. It can be shown that, @ 50\% outage, the rate is
doubled from 1 cooperative BS to 6 cooperative BS. This fact
highlights the relevant role of macro-diversity on N-LOS
conditions, which are most common ones on urban cellular networks.
Next, Fig. \ref{fig:fig7} compares the rate performance of our
D-WZ approach with respect to that of \textit{Quantization}
\cite{MARSCH07}, assuming LOS propagation. We consider a simple
network with two BSs: $\mathrm{BS}_0$ and $ \mathrm{BS}_1$, and
plot its outage capacity with D-WZ and with uniform quantization,
respectively. Both are normalized with respect to the outage
capacity with infinite backhaul and computed at a probability of
outage of $10^{-2}$. Results show significant gains, of up to
12\%, for low backhaul rates, and hihglights the fact that D-WZ
requires half of backhaul rate than Quantization to converge to
the $\infty$ backhaul capacity.

Fig \ref{fig:fig8} depicts the expected sum-rate\footnote{The
expected sum-rate is obtained by averaging the sum-rate of the
system over the user's channels.}  of the multi-user setup versus
the total number of users. Results are shown for different values
of the backhaul rate. Although the sum-rate analysis (see Sec.
\ref{sec:sum_rate}) was carried out for two users only, the
extension to $U>2$ is straightforward. Fig \ref{fig:fig8a} depicts
the sum-rate for LOS propagation. We first notice that the sum
rate with $\infty$ backhaul capacity (\textit{i.e.,} outer region
1) is far away from the sum-rate with D-WZ compression. This is
explained by means of outer region 2: the sum-rate of the system
is constrained by the available rate at the backhaul network. On
the other hand, for N-LOS propagation (Fig. \ref{fig:fig8b}),
upper bound \ref{upperbound4} is not reached. Indeed, for less
than 5 users, the expected sum-rate with only $\mathrm{R} = 15$
Mbit/s of backhaul is almost identical to that of $\mathrm{R} =
\infty$. Therefore, for practical number of transmitters, the full
rate gain due to macro-diversity is obtained via D-WZ compression.
Finally, Fig. \ref{fig:fig10} and Fig. \ref{fig:fig11} depict the
rate region of a 2-user network, with and without LOS
respectively, for different values of the Backhaul rate
$\mathrm{R}$. It is clearly shown that the region is significantly
enlarged with only 5 Mbit/s of backhaul rate.

\section{Conclusions}

We studied distributed compression for the uplink of a coordinated
cellular network with $N+1$ multi-antenna BSs. Considering a
constrained backhaul of limited capacity $\mathrm{R}$, base
stations $\mathrm{BS}_1,\cdots,\mathrm{BS}_N$ distributely
compress their received signal using a Distributed Wyner-Ziv code.
The compressed vectors are sent to $\mathrm{BS}_0$, which
centralizes user's decoding. Considering single and multiple users
within the network, respectively, the D-WZ scheme has been
optimized using the users' rate as the performance metric.

\appendices

\section{Conditional Covariances}\label{appen:covariances}

We derive here conditional covariances used throughout the paper.
(See supporting material)

\subsection{The single user case}\label{cova_single_user}
\begin{eqnarray}\label{app:cond_cova_singleuser_YnYo_final}
\bm{R}_{\bm{Y}_n|\bm{Y}_0}
=\bm{H}_{s,n}\left(\bm{I}+\frac{\bm{Q}}{\sigma_r^2}\bm{H}_{s,0}^\dagger\bm{H}_{s,0}\right)^{-1}\bm{Q}\bm{H}_{s,n}^\dagger+\sigma_r^2\bm{I},\
\ n=1,\cdots,N.
\end{eqnarray}
\begin{eqnarray}\label{app:cond_cova_singleuser_Y1:NYo_final}
\bm{R}_{\bm{Y}_{1:N}|\bm{Y}_0} =\left[\begin{array}{c}
         \bm{H}_{s,1} \\
         \vdots \\
         \bm{H}_{s,N}
       \end{array}
\right]\left(\bm{I}+\frac{\bm{Q}}{\sigma_r^2}\bm{H}_{s,0}^\dagger\bm{H}_{s,0}\right)^{-1}\bm{Q}\left[\begin{array}{c}
         \bm{H}_{s,1} \\
         \vdots \\
         \bm{H}_{s,N}
       \end{array}
\right]^\dagger+\sigma_r^2\bm{I}.
\end{eqnarray}
\begin{eqnarray}\label{app:cond_cova_singleuser_YnYo_ync_final}
\bm{R}_{\bm{Y}_n|\bm{Y}_0,\hat{\bm{Y}}_n^c}
=\bm{H}_{s,n}\left(\bm{I}+\frac{\bm{Q}}{\sigma_r^2}\bm{H}_{s,0}^\dagger\bm{H}_{s,0}+\sum_{j\neq
n}\bm{Q}\bm{H}_{s,j}^\dagger\left(\sigma_r^2\bm{I}+\bm{\Phi}_j\right)^{-1}\bm{H}_{s,j}\right)^{-1}\bm{Q}\bm{H}_{s,n}^\dagger+\sigma_r^2\bm{I}.
\end{eqnarray}
\begin{eqnarray}\label{app:cond_cova_singleuser_YnYo_yg_final}
\bm{R}_{\bm{Y}_n|\bm{Y}_0,\hat{\bm{Y}}_\mathcal{G}}
=\bm{H}_{s,n}\left(\bm{I}+\frac{\bm{Q}}{\sigma_r^2}\bm{H}_{s,0}^\dagger\bm{H}_{s,0}+\sum_{j\in
\mathcal{G}}\bm{Q}\bm{H}_{s,j}^\dagger\left(\sigma_r^2\bm{I}+\bm{\Phi}_j\right)^{-1}\bm{H}_{s,j}\right)^{-1}\bm{Q}\bm{H}_{s,n}^\dagger+\sigma_r^2\bm{I}.
\end{eqnarray}
\subsection{The multiuser case}\label{cova_multiple_user}

Define $\bm{H}_{s,n}=\left[\bm{H}_{1,n},\bm{H}_{2,n}\right]$ and
$\bm{Q}=\mathrm{diag}\left(\bm{Q}_1,\bm{Q}_2\right)$. Then,
Conditional covariances $\bm{R}_{\bm{Y}_n|\bm{Y}_0}$,
$\bm{R}_{\bm{Y}_{1:N}|\bm{Y}_0}$
$\bm{R}_{\bm{Y}_n|\bm{Y}_0,\hat{\bm{Y}}_n^c}$ and
$\bm{R}_{\bm{Y}_n|\bm{Y}_0,\hat{\bm{Y}}_\mathcal{G}}$ follow
Subsection \ref{cova_single_user}. Furthermore, let
$i,j\in\left\{1,2\right\}$ with $j\neq i$, then:
\begin{eqnarray}\label{app:cond_cova_multiuser_YnXsiYo_ync_final}
\bm{R}_{\bm{Y}_n|\bm{X}_{i},\bm{Y}_0,\hat{\bm{Y}}_n^c} =
\bm{H}_{j,n}\left(\bm{I}+\frac{\bm{Q}_j}{\sigma_r^2}\bm{H}_{j,0}^\dagger\bm{H}_{j,0}+\sum_{p\neq
n}\bm{Q}_j\bm{H}_{j,p}^\dagger\left(\sigma_r^2\bm{I}+\bm{\Phi}_p\right)^{-1}\bm{H}_{j,p}\right)^{-1}\bm{Q}_j\bm{H}_{j,n}^\dagger+\sigma_r^2\bm{I}
\end{eqnarray}

\section{Proof of Proposition \ref{prop1}} \label{appen:prop1}
Let the chain rule for mutual information:
\begin{eqnarray}\label{proof:prop1_0}
I\left(\bm{X}_s;\bm{Y}_0,\hat{\bm{Y}}_{1:N}\right) =
I\left(\bm{X}_s;\bm{Y}_0\right)+I\left(\bm{X}_s;\hat{\bm{Y}}_{1:N}|\bm{Y}_0\right).
\end{eqnarray}Also, let expand the constraint to obtain:
\begin{eqnarray}\label{proof:prop1_1}
I\left(\bm{Y}_{1:N};\hat{\bm{Y}}_{1:N}|\bm{Y}_0\right) &=&
H\left(\hat{\bm{Y}}_{1:N}|\bm{Y}_0\right) -
H\left(\hat{\bm{Y}}_{1:N}|\bm{Y}_0,\bm{Y}_{1:N}\right)\nonumber\\
&=& I\left(\bm{X}_s;\hat{\bm{Y}}_{1:N}|\bm{Y}_0\right) +
H\left(\hat{\bm{Y}}_{1:N}|\bm{Y}_0,\bm{X}_s\right) -
H\left(\hat{\bm{Y}}_{1:N}|\bm{Y}_0,\bm{Y}_{1:N}\right).
\end{eqnarray}
Given the Markov chain in Theorem \ref{Def_berger_tung}:
$H\left(\hat{\bm{Y}}_{1:N}|\bm{Y}_0,\bm{Y}_{1:N}\right)=H\left(\hat{\bm{Y}}_{1:N}|\bm{Y}_0,\bm{X}_s,\bm{Y}_{1:N}\right)$,
which plugged into (\ref{proof:prop1_1}):
\begin{eqnarray}\label{proof:prop1_2}
I\left(\bm{Y}_{1:N};\hat{\bm{Y}}_{1:N}|\bm{Y}_0\right) =
I\left(\bm{X}_s;\hat{\bm{Y}}_{1:N}|\bm{Y}_0\right) +
I\left(\bm{Y}_{1:N};\hat{\bm{Y}}_{1:N}|\bm{Y}_0,\bm{X}_s\right).
\end{eqnarray}
Let now $\mathcal{P}$ be the feasible set of conditional
probabilities
$\prod_{i=1}^Np\left(\hat{\bm{Y}}_{i}|\bm{Y}_{i}\right)$,
\textit{i.e.}, the set for which
$I\left(\bm{Y}_{1:N};\hat{\bm{Y}}_{1:N}|\bm{Y}_0\right)\leq
\mathrm{R}$. Hence, making use of (\ref{proof:prop1_2}), the
feasible set satisfies:
\begin{eqnarray}\label{proof:prop1_3}
I\left(\bm{X}_s;\hat{\bm{Y}}_{1:N}|\bm{Y}_0\right) \leq \mathrm{R} -
I\left(\bm{Y}_{1:N};\hat{\bm{Y}}_{1:N}|\bm{Y}_0,\bm{X}_s\right).
\end{eqnarray} Introducing (\ref{proof:prop1_3}) into
(\ref{proof:prop1_0}), we derive that for the feasible set:
\begin{eqnarray}\label{proof:prop1_4}
I\left(\bm{X}_s;\bm{Y}_0,\hat{\bm{Y}}_{1:N}\right) \leq
I\left(\bm{X}_s;\bm{Y}_0\right)+\mathrm{R}-I\left(\bm{Y}_{1:N};\hat{\bm{Y}}_{1:N}|\bm{Y}_0,\bm{X}_s\right).
\end{eqnarray}Now, notice that
$I\left(\bm{Y}_{1:N};\hat{\bm{Y}}_{1:N}|\bm{Y}_0,\bm{X}_s\right)=I\left(\bm{Z}_{1:N};\hat{\bm{Y}}_{1:N}|\bm{Y}_0,\bm{X}_s\right)$
where $\bm{Z}_i$ is the AWGN at the $\textrm{BS}_i$. This mutual
information is minimized in $\mathcal{P}$ for
$p\left(\hat{\bm{Y}}_{1:N}\right)$ Gaussian. Therefore,
$I\left(\bm{X}_s;\bm{Y}_0,\hat{\bm{Y}}_{1:N}\right)$ in
(\ref{proof:prop1_4}) is maximum in $\mathcal{P}$ for Gaussian
distributed vectors $\hat{\bm{Y}}_{1:N}$, specifically those
satisfying
$I\left(\bm{Y}_{1:N};\hat{\bm{Y}}_{1:N}|\bm{Y}_0\right)=
\mathrm{R}$ (\textit{i.e.,} those for which equality holds in
(\ref{proof:prop1_4}) and (\ref{proof:prop1_3})). As mentioned,
the received vectors $\bm{Y}_i$ are also Gaussian. Therefore, at
the optimum, $\hat{\bm{Y}}_i$ and $\bm{Y}_i$ are jointly Gaussian,
so we can write $\hat{\bm{Y}}_i = \bm{M}\bm{Y}_i +
\hat{\bm{Z}}_i^c$ with $\bm{M}$ a constant matrix and
$\hat{\bm{Z}}_i^c$ an independent Gaussian vector. However, as the
multiplication by a matrix does not affect mutual information, we
can state that vectors $\hat{\bm{Y}}_i = \bm{Y}_i + {\bm{Z}}_i^c$
are also optimal, with
${\bm{Z}}_c^i\sim\mathcal{CN}\left(\bm{0},\bm{\Phi}_i\right)$.
Using this relationship, we evaluate
\begin{eqnarray}\label{app:rate_prop1}
I\left(\bm{X}_s;\bm{Y}_0,\hat{\bm{Y}}_{1:N}\right)=
\log\det\left(\bm{I}+\frac{\bm{Q}}{\sigma_r^2}\bm{H}_{s,0}^\dagger\bm{H}_{s,0}+\bm{Q}\sum_{n=1}^N\bm{H}_{s,n}^\dagger\left(\sigma_r^2\bm{I}+\bm{\Phi}_n\right)^{-1}\bm{H}_{s,n}
\right)
\end{eqnarray} Furthermore, we can also obtain:
\begin{eqnarray}\label{app:rate_prop2}
I\left(\bm{Y}_{1:N};\hat{\bm{Y}}_{1:N}|\bm{Y}_0\right) &=&
H\left(\hat{\bm{Y}}_{1:N}|\bm{Y}_0\right)-
H\left(\hat{\bm{Y}}_{1:N}|\bm{Y}_{1:N},\bm{Y}_0\right)\\
&=&
\log\det\left(\bm{I}+\textrm{diag}\left(\bm{\Phi}_1^{-1},\cdots,\bm{\Phi}_N^{-1}\right)\bm{R}_{\bm{Y}_{1:N}|\bm{Y}_0}\right).
\nonumber
\end{eqnarray}
\section{Proof of Upper Bound 2}\label{appen:ub2}
To prove the statement, we first rewrite the objective and
constraint of (\ref{eq:rate_def}) as (\ref{proof:prop1_0}) and
(\ref{proof:prop1_2}), respectively. At the optimum point of
maximization (\ref{eq:rate_def}), the constraint is satisfied.
Therefore,
$I\left(\bm{Y}_{1:N};\hat{\bm{Y}}_{1:N}|\bm{Y}_0\right)\leq
\mathrm{R}$, which plugged into (\ref{proof:prop1_2}) obtains
\begin{eqnarray}\label{proof:ub2_2}
I\left(\bm{X}_s;\hat{\bm{Y}}_{1:N}|\bm{Y}_0\right) \leq \mathrm{R} -
I\left(\bm{Y}_{1:N};\hat{\bm{Y}}_{1:N}|\bm{Y}_0,\bm{X}_s\right),
\end{eqnarray} which in turn introduced into (\ref{proof:prop1_0}) allows
to bound
\begin{eqnarray}\label{proof:ub2_3}
I\left(\bm{X}_s;\bm{Y}_0,\hat{\bm{Y}}_{1:N}\right) \leq
I\left(\bm{X}_s;\bm{Y}_0\right)+\mathrm{R} -
I\left(\bm{Y}_{1:N};\hat{\bm{Y}}_{1:N}|\bm{Y}_0,\bm{X}_s\right)
\end{eqnarray} Since $I\left(\bm{Y}_{1:N};\hat{\bm{Y}}_{1:N}|\bm{Y}_0,\bm{X}_s\right)\geq
0$ by definition, we can state that $
I\left(\bm{X}_s;\bm{Y}_0,\hat{\bm{Y}}_{1:N}\right) \leq
I\left(\bm{X}_s;\bm{Y}_0\right)+\mathrm{R}$.

\section{Proof of Proposition \ref{Prop2}}\label{append:prop2}

In this Appendix, we solve the non-convex optimization
(\ref{eq:rate_singleBS_2}). Let us first expand:
\begin{eqnarray}\label{app:expad_logarithm}
&&\log\det\left(\bm{I}+\frac{\bm{Q}}{\sigma_r^2}\bm{H}_{s,0}^\dagger\bm{H}_{s,0}+\bm{Q}\bm{H}_{s,1}^\dagger\left(\bm{A}_1\sigma_r^2+\bm{I}\right)^{-1}\bm{A}_1\bm{H}_{s,1}
\right) \nonumber\\&&\ \ =
\log\det\left(\bm{I}+\frac{\bm{Q}}{\sigma_r^2}\bm{H}_{s,0}^\dagger\bm{H}_{s,0}\right)
+
\log\det\left(\bm{I}+\left(\bm{A}_1\sigma_r^2+\bm{I}\right)^{-1}\bm{A}_1\left(\bm{R}_{\bm{Y}_1|\bm{Y}_0}-\sigma_r^2\bm{I}\right)\right)\nonumber\\
&&\ \
=\log\det\left(\bm{I}+\frac{\bm{Q}}{\sigma_r^2}\bm{H}_{s,0}^\dagger\bm{H}_{s,0}\right)
+
\log\det\left(\bm{I}+\bm{A}_1\bm{R}_{\bm{Y}_1|\bm{Y}_0}\right)-\log\det\left(\bm{I}+\bm{A}_1\sigma_r^2\right).
\ \
\end{eqnarray}First equality follows from the value of $\bm{R}_{\bm{Y}_1|\bm{Y}_0}$ in (\ref{app:cond_cova_singleuser_YnYo_final}).
Notice that
$\log\det\left(\bm{I}+\frac{\bm{Q}}{\sigma_r^2}\bm{H}_{s,0}^\dagger\bm{H}_{s,0}\right)$
does not depend on $\bm{A}_1$. Therefore, the Lagrangian for the
problem can be written as
\begin{eqnarray}\label{lagrangian}
\mathcal{L}\left(\bm{A}_1,\lambda,\bm{\Phi}\right)=\left(1-\lambda\right)\log\det\left(\bm{I}+\bm{A}_1\bm{R}_{\bm{Y}_1|\bm{Y}_0}\right)-\log\det\left(\bm{I}+\bm{A}_1\sigma_r^2\right)
+ \lambda \mathrm{R} -
\mathrm{tr}\left\{\bm{\Phi}\bm{A}_1\right\}\nonumber ,
\end{eqnarray} where $\lambda$ is the Lagrange multiplier for the explicit constraint and $\bm{\Phi}\preceq0$ for the semidefinite positiveness constraint. The derivative of the Lagrangian with respect to $\bm{A}_1$ thus reads \cite{MATRIXCOOKBOOK}:
\begin{eqnarray}\label{eq:der_lagrangian}
&&\left[\frac{\partial\mathcal{L}}{\partial\bm{A}_1}\right]^T=\left(1-\lambda\right)\bm{R}_{\bm{Y}_1|\bm{Y}_0}\left(\bm{I}+\bm{A}_1\bm{R}_{\bm{Y}_1|\bm{Y}_0}\right)^{-1}-\sigma_r^2\left(\bm{I}+\bm{A}_1\sigma_r^2\right)^{-1}
-\bm{\Phi}.
\end{eqnarray}
Accordingly, the KKT conditions for the problem, which are
necessary but not sufficient, are:
\begin{eqnarray}\label{eq:prop2_kkts}
&&\textit{i})\ \
\left[\frac{\partial\mathcal{L}}{\partial\bm{A}_1}\right]^T=
\bm{0}\\
&&\textit{ii})\ \
\lambda\left(\log\det\left(\bm{I}+\bm{A}_1\bm{R}_{\bm{Y}_1|\bm{Y}_0}\right)-\mathrm{R}\right)=0
\nonumber\\
&&\textit{iii})\ \
\mathrm{tr}\left\{\bm{\Phi}\bm{A}_1\right\}=0.\nonumber
\end{eqnarray}Let now the eigen-decomposition $\bm{R}_{\bm{Y}_1|\bm{Y}_0}=\bm{U}\bm{S}\bm{U}^\dagger$. Then, it can be readily shown that matrix $\bm{A}_1^*=\bm{U}\textrm{diag}\left(\eta_1,\cdots,\eta_{N_1}\right)\bm{U}^\dagger$, with
\begin{eqnarray}\label{app:noiseallocation}
\eta_j=\left[\frac{1}{\lambda^*}\left(\frac{1}{\sigma_r^2}-\frac{1}{s_j}\right)-\frac{1}{\sigma_r^2}\right]^+,
\end{eqnarray} satisfies the KKT conditions, with multiplier $\lambda^*$ such that
$\sum_{j=1}^{N_1}\log\left(1+\eta_j s_j\right)=\mathrm{R}$
(therefore, $\lambda^*<1$), and multiplier $\bm{\Phi}^*\preceq 0$
computed from (\ref{eq:der_lagrangian}). Let now show that
$\bm{A}_1^*$ satisfies also the general sufficiency condition for
optimality, which is presented in the next Lemma.

\begin{lemma}\cite[Proposition 3.3.4]{BERTSEKAS} Let the differentiable maximization (\ref{eq:rate_singleBS_2}).
Consider a pair $\left(\bm{A}_1^*,\lambda^*\right)$ for which
$\lambda^*\left(\log\det\left(\bm{I}+\bm{A}_1^*\bm{R}_{\bm{Y}_1|\bm{Y}_0}\right)-
\mathrm{R}\right)=0$. Then, $\bm{A}_1^*$ is the global maximum of
(\ref{eq:rate_singleBS_2}) if:
\begin{eqnarray}\label{lagrangian_lemma} \bm{A}_1^*\in
\arg\max_{\bm{A}_1\succeq0}\mathcal{L}\left(\bm{A}_1,\lambda^*\right),
\end{eqnarray}where the Lagrangian\footnote{Notice that the semi-definite
multiplier $\bm{\Phi}$ has been removed of the Lagrangian by
constraining the maximization (\ref{lagrangian_lemma}) to the set
$\bm{A}_1\succeq0$.} has been defined in (\ref{lagrangian}).
\end{lemma}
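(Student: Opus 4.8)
The plan is to prove this as the classical Lagrangian (weak-duality) sufficiency result, whose entire point is that it requires \emph{no} convexity of the feasible region defined by $\log\det(\bm{I}+\bm{A}_1\bm{R}_{\bm{Y}_1|\bm{Y}_0})\leq\mathrm{R}$---it asks only that $\bm{A}_1^*$ maximize the Lagrangian over the convex cone $\bm{A}_1\succeq0$. I would first record the two properties supplied by the KKT construction: $\lambda^*\geq0$, and complementary slackness $\lambda^*(\log\det(\bm{I}+\bm{A}_1^*\bm{R}_{\bm{Y}_1|\bm{Y}_0})-\mathrm{R})=0$ with $\bm{A}_1^*$ feasible (the constraint binding, since $\lambda^*>0$).

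Writing $f(\bm{A}_1)$ for the objective of (\ref{eq:rate_singleBS_2}) and $\mathcal{L}(\bm{A}_1,\lambda^*)=f(\bm{A}_1)-\lambda^*(\log\det(\bm{I}+\bm{A}_1\bm{R}_{\bm{Y}_1|\bm{Y}_0})-\mathrm{R})$ as in (\ref{lagrangian}) (with the semidefinite multiplier dropped), the proof is a three-link chain. For an \emph{arbitrary} feasible $\bm{A}_1\succeq0$, the identity $f(\bm{A}_1)=\mathcal{L}(\bm{A}_1,\lambda^*)+\lambda^*(\log\det(\bm{I}+\bm{A}_1\bm{R}_{\bm{Y}_1|\bm{Y}_0})-\mathrm{R})$ combined with $\lambda^*\geq0$ and feasibility gives $f(\bm{A}_1)\leq\mathcal{L}(\bm{A}_1,\lambda^*)$. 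The hypothesis (\ref{lagrangian_lemma}) then gives $\mathcal{L}(\bm{A}_1,\lambda^*)\leq\mathcal{L}(\bm{A}_1^*,\lambda^*)$, and complementary slackness collapses the last quantity to $\mathcal{L}(\bm{A}_1^*,\lambda^*)=f(\bm{A}_1^*)$. Chaining the three relations gives $f(\bm{A}_1)\leq f(\bm{A}_1^*)$ for every feasible $\bm{A}_1$, i.e. global optimality.

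The Lemma is therefore immediate; the genuine work lies in \emph{discharging its hypothesis} (\ref{lagrangian_lemma}), namely that $\bm{A}_1^*=\bm{U}\,\mathrm{diag}(\eta_1,\dots,\eta_{N_1})\bm{U}^\dagger$ globally maximizes $\mathcal{L}(\cdot,\lambda^*)$ over the cone---and here lies the main difficulty, since $\mathcal{L}(\cdot,\lambda^*)$ is \emph{not} concave. By (\ref{app:expad_logarithm}) it equals, up to a constant, $(1-\lambda^*)\log\det(\bm{I}+\bm{A}_1\bm{R}_{\bm{Y}_1|\bm{Y}_0})-\log\det(\bm{I}+\sigma_r^2\bm{A}_1)$, and the weight $1-\lambda^*<1$ on the first concave term destroys the concavity present at $\lambda^*=0$, so stationarity alone cannot certify a global maximum. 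I would proceed in two steps. Since $\log\det(\bm{I}+\sigma_r^2\bm{A}_1)$ depends only on the eigenvalues of $\bm{A}_1$, whereas $\log\det(\bm{I}+\bm{A}_1\bm{R}_{\bm{Y}_1|\bm{Y}_0})$ is maximized, at fixed eigenvalues of $\bm{A}_1$, by aligning its eigenbasis with that of $\bm{R}_{\bm{Y}_1|\bm{Y}_0}$ in matched order, any maximizer must take the form $\bm{U}\,\mathrm{diag}(\eta_1,\dots,\eta_{N_1})\bm{U}^\dagger$; this reduces $\mathcal{L}$ to the separable sum $\sum_j[(1-\lambda^*)\log(1+\eta_j s_j)-\log(1+\sigma_r^2\eta_j)]$. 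A scalar analysis of each summand then---using $s_j\geq\sigma_r^2$ (because $\bm{R}_{\bm{Y}_1|\bm{Y}_0}$ equals $\sigma_r^2\bm{I}$ plus a positive-semidefinite term) and the fact that each summand diverges to $-\infty$ as $\eta_j\to\infty$---places the maximizer at the interior stationary point (\ref{app:noiseallocation}) or at the boundary $\eta_j=0$. The alignment step, which rests on a majorization argument to exclude all non-commuting perturbations despite the absence of concavity, is the part I expect to require the most care.
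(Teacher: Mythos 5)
Your proof is correct and is precisely the standard weak-duality argument behind the result the paper simply cites as \cite[Proposition 3.3.4]{BERTSEKAS}: for every feasible $\bm{A}_1$ you chain $f\left(\bm{A}_1\right)\leq\mathcal{L}\left(\bm{A}_1,\lambda^*\right)\leq\mathcal{L}\left(\bm{A}_1^*,\lambda^*\right)=f\left(\bm{A}_1^*\right)$, using $\lambda^*\geq0$, the hypothesis (\ref{lagrangian_lemma}), and complementary slackness (which also secures feasibility of $\bm{A}_1^*$ when $\lambda^*>0$). Your concluding remarks on discharging the hypothesis---eigenbasis alignment via a majorization argument followed by a separable scalar analysis of each $\eta_j$---coincide with exactly how the paper proceeds in the remainder of Appendix \ref{append:prop2} (Lemma \ref{Lemma2} and equations (\ref{lagra85})--(\ref{app:indi_maxi3})).
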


\begin{lemma}\label{Lemma2}Let $\bm{A},\bm{B}\succeq0$, with ordered eigenvalues
$\bm{\Gamma}_A,\bm{\Gamma}_B$ respectively. Then,
\begin{eqnarray} \log\det\left(\bm{I}+\bm{A}\bm{B}\right)\leq
\log\det\left(\bm{I}+\bm{\Gamma}_A\bm{\Gamma}_B\right),
\end{eqnarray} with equality whenever $\bm{A}$ and $\bm{B}$ have
conjugate transpose eigenvectors.
\end{lemma}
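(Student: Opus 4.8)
The plan is to reduce the matrix inequality to a scalar statement about the eigenvalues of $\bm{A}\bm{B}$. First I would note that, although $\bm{A}\bm{B}$ need not be Hermitian, it is similar to the positive semidefinite matrix $\bm{A}^{1/2}\bm{B}\bm{A}^{1/2}$, so its eigenvalues are real and nonnegative; denote them in decreasing order by $\lambda_1\geq\cdots\geq\lambda_n\geq0$. Then $\log\det\left(\bm{I}+\bm{A}\bm{B}\right)=\sum_{i=1}^n\log\left(1+\lambda_i\right)$ and $\log\det\left(\bm{I}+\bm{\Gamma}_A\bm{\Gamma}_B\right)=\sum_{i=1}^n\log\left(1+\alpha_i\beta_i\right)$, where $\alpha_i,\beta_i$ are the ordered eigenvalues stored in $\bm{\Gamma}_A,\bm{\Gamma}_B$, so the claim becomes $\sum_i\log\left(1+\lambda_i\right)\leq\sum_i\log\left(1+\alpha_i\beta_i\right)$.

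The core of the argument is a log-majorization estimate: $\prod_{i=1}^k\lambda_i\leq\prod_{i=1}^k\alpha_i\beta_i$ for every $k=1,\ldots,n$, with equality at $k=n$. To establish it I would use the identity $\lambda_i\left(\bm{A}\bm{B}\right)=\sigma_i^2\left(\bm{B}^{1/2}\bm{A}^{1/2}\right)$, which follows from $\left(\bm{B}^{1/2}\bm{A}^{1/2}\right)^\dagger\left(\bm{B}^{1/2}\bm{A}^{1/2}\right)=\bm{A}^{1/2}\bm{B}\bm{A}^{1/2}$ together with the similarity of $\bm{A}^{1/2}\bm{B}\bm{A}^{1/2}$ to $\bm{A}\bm{B}$. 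Applying the classical multiplicative singular-value inequality $\prod_{i=1}^k\sigma_i\left(\bm{M}\bm{N}\right)\leq\prod_{i=1}^k\sigma_i\left(\bm{M}\right)\sigma_i\left(\bm{N}\right)$ with $\bm{M}=\bm{B}^{1/2}$ and $\bm{N}=\bm{A}^{1/2}$, and using $\sigma_i\left(\bm{A}^{1/2}\right)=\alpha_i^{1/2}$ and $\sigma_i\left(\bm{B}^{1/2}\right)=\beta_i^{1/2}$, gives the product bound after squaring; equality at $k=n$ follows at once from $\det\left(\bm{A}\bm{B}\right)=\prod_i\alpha_i\beta_i$.

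To pass from the products to the sums, I would take logarithms, so that $\left(\log\lambda_i\right)$ is majorized by $\left(\log\left(\alpha_i\beta_i\right)\right)$ in the ordinary additive sense. Karamata's inequality then yields $\sum_i g\left(\log\lambda_i\right)\leq\sum_i g\left(\log\left(\alpha_i\beta_i\right)\right)$ for any convex $g$; choosing $g\left(t\right)=\log\left(1+e^t\right)$, whose second derivative $e^t/\left(1+e^t\right)^2$ is positive, reproduces exactly $\sum_i\log\left(1+\lambda_i\right)\leq\sum_i\log\left(1+\alpha_i\beta_i\right)$, proving the inequality.

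For the equality case, when $\bm{A}$ and $\bm{B}$ are diagonalized by a common unitary, $\bm{A}=\bm{U}\bm{\Gamma}_A\bm{U}^\dagger$ and $\bm{B}=\bm{U}\bm{\Gamma}_B\bm{U}^\dagger$, the product collapses to $\bm{A}\bm{B}=\bm{U}\bm{\Gamma}_A\bm{\Gamma}_B\bm{U}^\dagger$, whence $\log\det\left(\bm{I}+\bm{A}\bm{B}\right)=\log\det\left(\bm{I}+\bm{\Gamma}_A\bm{\Gamma}_B\right)$ and the bound is attained. I expect the log-majorization step to be the main obstacle: correctly identifying $\lambda_i\left(\bm{A}\bm{B}\right)$ with the squared singular values of $\bm{B}^{1/2}\bm{A}^{1/2}$ and invoking the multiplicative (rather than the weaker additive or trace) singular-value inequality is where the substance lies, whereas the Karamata transfer and the equality case are routine.
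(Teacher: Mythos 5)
Your proof is correct and shares the paper's overall skeleton---reduce to the spectrum of $\bm{A}\bm{B}$, show that spectrum is log-majorized by $\bm{\Gamma}_A\bm{\Gamma}_B$, then transfer through a Schur-convexity argument---but where the paper disposes of both key steps by citation, you actually prove them, so the routes differ in their key lemmas. The paper invokes the Marshall--Olkin result (Theorem 9.H.1.d) for the log-majorization $\bm{\Gamma}_{AB}\prec_{\times}\bm{\Gamma}_A\bm{\Gamma}_B$ and Guan's theorem for the Schur-geometric-convexity of $\bm{X}\mapsto\sum_i\log\left(1+x_i\right)$; you instead derive the log-majorization from Horn's multiplicative singular-value inequality through the identity $\lambda_i\left(\bm{A}\bm{B}\right)=\sigma_i^2\left(\bm{B}^{1/2}\bm{A}^{1/2}\right)$, and you replace the Schur-geometric-convexity citation by Karamata's inequality applied to the convex increasing function $g\left(t\right)=\log\left(1+e^t\right)$, which is precisely the standard proof of the convexity fact the paper cites. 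What each buys: the paper's proof is shorter but rests on two black boxes; yours is self-contained, uses more classical ingredients, and---unlike the paper---explicitly verifies the equality case, which the paper asserts in the statement but never proves. Two cosmetic caveats if you want yours airtight: the similarity of $\bm{A}\bm{B}$ to $\bm{A}^{1/2}\bm{B}\bm{A}^{1/2}$ literally requires $\bm{A}$ invertible (for singular $\bm{A}$ use instead that $\bm{X}\bm{Y}$ and $\bm{Y}\bm{X}$ have the same spectrum, with $\bm{X}=\bm{A}^{1/2}$ and $\bm{Y}=\bm{A}^{1/2}\bm{B}$), and since you take logarithms of eigenvalues, zero eigenvalues should be handled by perturbing to $\bm{A}+\epsilon\bm{I}$, $\bm{B}+\epsilon\bm{I}$ and passing to the limit; neither affects the substance.
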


\begin{proof} It is known that $\log\det\left(\bm{I}+\bm{A}\bm{B}\right)=\log\det\left(\bm{I}+\bm{\Gamma}_{AB}\right)$,
where $\bm{\Gamma}_{AB}$ are the ordered eigenvalues of
$\bm{A}\bm{B}$. Those eigenvalues are logarithmically majorized
\cite[Definition 1.4]{GUAN07} by the product of the separate
eigenvalues of $\bm{A}$ and $\bm{B}$, i.e.,
$\bm{\Gamma}_{AB}\prec_{\times}\bm{\Gamma}_{A}\bm{\Gamma}_{B}$
\cite[Theorem 9.H.1.d]{MARSHALL79}. Let now the function
$f\left(\bm{X}\right)=\log\det\left(\bm{I}+\bm{X}\right)$ be
defined on the set of semi-definite positive diagonal matrices,
\textit{i.e.,} $f\left(\bm{X}\right)=\sum\log\left(1+x_i\right)$.
We may apply \cite[Theorem 1.6]{GUAN07} to prove that
$f\left(\bm{X}\right)$ is a Schur-geometrically-convex function.
Accordingly, provided that
$\bm{\Gamma}_{AB}\prec_{\times}\bm{\Gamma}_{A}\bm{\Gamma}_{B}$,
then $
\log\det\left(\bm{I}+\bm{\Gamma}_{AB}\right)\leq\log\det\left(\bm{I}+\bm{\Gamma}_{A}\bm{\Gamma}_{B}\right)$,
which concludes the proof.\end{proof}

Let us prove now that our pair $\left(\bm{A}_1^*,\lambda^*\right)$
satisfies (\ref{lagrangian_lemma}). The lagrangian is defined for
the problem as
\begin{eqnarray}\label{lagra85}\mathcal{L}\left(\bm{A}_1,\lambda^*\right)=\left(1-\lambda^*\right)\log\det\left(\bm{I}+\bm{A}_1\bm{R}_{\bm{Y}_1|\bm{Y}_0}\right)-\log\det\left(\bm{I}+\bm{A}_1\sigma_r^2\right)
+ \lambda^* \mathrm{R}.\end{eqnarray} Recall that $\lambda^*<1$
and $\bm{R}_{\bm{Y}_1|\bm{Y}_0}=\bm{U}\bm{S}\bm{U}^\dagger$. Then,
using Lemma \ref{Lemma2} we can bound:
\begin{eqnarray}\label{app:indi_maxi}
\max_{\bm{A}_1\succeq0}\mathcal{L}\left(\bm{A}_1,\lambda^*\right)&\leq&\max_{\bm{\eta}\succeq0}\left(1-\lambda^*\right)\log\det\left(\bm{I}+\bm{\eta}\bm{S}\right)-\log\det\left(\bm{I}+\bm{\eta}\sigma_r^2\right)+ \lambda^* \mathrm{R}\nonumber\\
&=&  \lambda^* \mathrm{R}+\sum_{j=1}^{N_1}
\max_{\eta_j\geq0}\left(1-\lambda^*\right)\log\left(1+\eta_js_j\right)-\log\left(1+\eta_j\sigma_r^2\right)
\end{eqnarray}where $\bm{\eta}$ is the diagonal matrix of ordered eigenvalues of
$\bm{A}_1$. The individual maximizations on $\eta_j$ in
(\ref{app:indi_maxi}) are not concave. However, the continuously
differentiable functions
$f_j\left(\eta_j\right)=\left(1-\lambda^*\right)\log\left(1+\eta_js_j\right)-\log\left(1+\eta_j\sigma_r^2\right)$
have only two stationary points, i.e.,: \begin{eqnarray}
\frac{df_j}{d\eta_j}=0\rightarrow\left\{\begin{array}{c}
  \eta_j = \infty \\
  \eta_j = \frac{1}{\lambda^*}\left(\frac{1}{\sigma_r^2}-\frac{1}{s_j}\right)-\frac{1}{\sigma_r^2} \\
\end{array}\right.
\end{eqnarray} Recalling that $0\leq\lambda^*< 1$, it is easy to show that $\lim_{\eta_j\rightarrow \infty}f_j\left(\eta_j\right)=-\infty$. Therefore $\eta_j=\infty$ is the
global minimum of the problem. Considering the other stationary
point, it can be shown that its second derivative is lower than
zero. Accordingly, it is a local maximum, unique because there is
no other. However,  we restricted the optimization to the values
$\eta_j\geq0$. Hence, functions $f_j\left(\eta_j\right)$ take
maximum at:
\begin{eqnarray} \eta_j^* =\left[
\frac{1}{\lambda^*}\left(\frac{1}{\sigma_r^2}-\frac{1}{s_j}\right)-\frac{1}{\sigma_r^2}\right]^+.
\end{eqnarray}
Plugging this optimal values into (\ref{app:indi_maxi}), we bound
\begin{eqnarray}\label{app:indi_maxi2}
\max_{\bm{A}_1\succeq0}\mathcal{L}\left(\bm{A}_1,\lambda^*\right)\leq
\lambda^*
\mathrm{R}+\left(1-\lambda^*\right)\sum_{j=1}^{N_1}\log\left(1+\eta_j^*s_j\right)-\sum_{i=1}^N\log\left(1+\eta_j^*\sigma_r^2\right)
\end{eqnarray} Furthermore, noticing that for
$\bm{A}_1^*=\bm{U}\bm{\eta}^*\bm{U}^\dagger$:
\begin{eqnarray}\label{app:indi_maxi3}
\mathcal{L}\left(\bm{A}_1^*,\lambda^*\right)=\lambda^*
\mathrm{R}+\left(1-\lambda^*\right)\sum_{j=1}^{N_1}\log\left(1+\eta_j^*s_j\right)-\sum_{i=1}^N\log\left(1+\eta_j^*\sigma_r^2\right),
\end{eqnarray} then, it is demonstrated that
$\bm{A}_1^*=\arg\max_{\bm{A}_1\succeq0}\mathcal{L}\left(\bm{A}_1,\lambda^*\right)$.
Hence, the general sufficient condition holds, and it is optimum.
Finally, $\bm{\Phi}_1^*=\left(\bm{A}_1^{*}\right)^{-1}$, which
concludes the proof.

\section{}

\subsection{Proof of Proposition \ref{Prop3}}\label{append:prop3}

In this Appendix, we solve the  non-convex optimization
$\bm{A}_n^*=\arg\max_{\bm{A}_n\succeq0}
\mathcal{L}\left(\bm{A}_1,\cdots,\bm{A}_N,\lambda\right)$. First,
recall that
$\log\det\left(\bm{I}+\textrm{diag}\left(\bm{A}_1,\cdots,\bm{A}_N\right)\bm{R}_{\bm{Y}_{1:N}|\bm{Y}_0}\right)$
is equal to
${I\left(\bm{Y}_{1:N};\hat{\bm{Y}}_{1:N}|\bm{Y}_0\right)}$ (as
shown in (\ref{app:rate_prop2}), changing
$\bm{\Phi}_n=\bm{A}_n^{-1}$ $\forall \ n$). Then:
\begin{eqnarray}\label{constraint_extended}
\log\det\left(\bm{I}+\textrm{diag}\left(\bm{A}_1,\cdots,\bm{A}_N\right)\bm{R}_{\bm{Y}_{1:N}|\bm{Y}_0}\right)
&=& {I\left(\bm{Y}_{1:N};\hat{\bm{Y}}_{1:N}|\bm{Y}_0\right)}\\
&=&
{I\left(\bm{Y}_{1:N};\hat{\bm{Y}}_{n}^c|\bm{Y}_0\right)+I\left(\bm{Y}_{1:N};\hat{\bm{Y}}_{n}|\bm{Y}_0,\hat{\bm{Y}}_{n}^c\right)}\nonumber\\
&=&
{I\left(\bm{Y}_{n}^c;\hat{\bm{Y}}_{n}^c|\bm{Y}_0\right)+I\left(\bm{Y}_{n};\hat{\bm{Y}}_{n}|\bm{Y}_0,\hat{\bm{Y}}_{n}^c\right)}\nonumber\\
&=&
\log\det\left(\bm{I}+\textrm{diag}\left(\bm{A}_1,\cdots,\bm{A}_{n-1},\bm{A}_{n+1},\cdots,\bm{A}_N\right)\bm{R}_{\bm{Y}_{n}^c|\bm{Y}_0}\right)\nonumber\\
&& +
\log\det\left(\bm{I}+\bm{A}_n\bm{R}_{\bm{Y}_{n}|\bm{Y}_0,\hat{\bm{Y}}_{n}^c}\right)\nonumber
\end{eqnarray} where second equality follows from the chain rule for
mutual information, and the third from the Markov chain in
Proposition \ref{Def_berger_tung}. Finally, the fourth equality
evaluates the mutual information as in (\ref{app:rate_prop2}),
with $\bm{\Phi}_n=\bm{A}_n^{-1}$. The conditional covariances are
computed in Appendix \ref{appen:covariances}. Later, using
(\ref{app:cond_cova_singleuser_YnYo_ync_final}) and equivalently
to (\ref{app:expad_logarithm}):
\begin{eqnarray}\label{app:expand_loga_N}
&&\log\det\left(\bm{I}+\frac{\bm{Q}}{\sigma_r^2}\bm{H}_{s,0}^\dagger\bm{H}_{s,0}+\bm{Q}\sum_{n=1}^N\bm{H}_{s,n}^\dagger\left(\bm{A}_n\sigma_r^2+\bm{I}\right)^{-1}\bm{A}_n\bm{H}_{s,n}
\right)\nonumber\\
&&\qquad
=\log\det\left(\bm{I}+\frac{\bm{Q}}{\sigma_r^2}\bm{H}_{s,0}^\dagger\bm{H}_{s,0}+\bm{Q}\sum_{j\neq
n}\bm{H}_{s,j}^\dagger\left(\bm{A}_j\sigma_r^2+\bm{I}\right)^{-1}\bm{A}_j\bm{H}_{s,j}\right)\nonumber\\
&&\qquad\qquad +
\log\det\left(\bm{I}+\bm{A}_n\bm{R}_{{\bm{Y}}_{n}|\hat{\bm{Y}}_{n}^c,\bm{Y}_0}\right)-\log\det\left(\bm{I}+\bm{A}_n\sigma_r^2\right).
\end{eqnarray}

Therefore, plugging (\ref{constraint_extended}) and
(\ref{app:expand_loga_N}) into (\ref{eq:f_dual}), we can expand
the function under study as:
\begin{eqnarray}\label{app:lagrangian}
\mathcal{L}\left(\bm{A}_1,\cdots,\bm{A}_N,\lambda\right)
=\log\det\left(\bm{I}+\frac{\bm{Q}}{\sigma_r^2}\bm{H}_{s,0}^\dagger\bm{H}_{s,0}+\bm{Q}\sum_{j\neq
n}^N\bm{H}_{s,j}^\dagger\left(\bm{A}_j\sigma_r^2+\bm{I}\right)^{-1}\bm{A}_j\bm{H}_{s,j}\right)\qquad \qquad \\
+
\log\det\left(\bm{I}+\bm{A}_n\bm{R}_{{\bm{Y}}_{n}|\hat{\bm{Y}}_{n}^c,\bm{Y}_0}\right)-\log\det\left(\bm{I}+\bm{A}_n\sigma_r^2\right)\qquad
\qquad  \qquad
\nonumber\\
-
\lambda\left(\log\det\left(\bm{I}+\textrm{diag}\left(\bm{A}_1,\cdots,\bm{A}_{n-1},\bm{A}_{n+1},\cdots,\bm{A}_N\right)\bm{R}_{\bm{Y}_{n}^c|\bm{Y}_0}\right)
+
\log\det\left(\bm{I}+\bm{A}_n\bm{R}_{\bm{Y}_{n}|\hat{\bm{Y}}_{n}^c,\bm{Y}_0}\right)-\mathrm{R}\right)\nonumber
\end{eqnarray}
In order to obtain $\bm{A}_n^*=\arg\max_{\bm{A}_n\succeq0}
\mathcal{L}\left(\bm{A}_1,\cdots,\bm{A}_N,\lambda\right)$, we
first notice that the following Lagrangian
\begin{eqnarray}\label{app:eq:lagra}
\bar{\mathcal{L}}\left(\bm{A}_n,\lambda\right)=\left(1-\lambda\right)\log\det\left(\bm{I}+\bm{A}_n\bm{R}_{\bm{Y}_{n}|\bm{Y}_0,\hat{\bm{Y}}_{n}^c}\right)-\log\det\left(\bm{I}+\bm{A}_n\sigma_r^2\right)
+ \lambda \mathrm{R}
\end{eqnarray}satisfies
$\arg\max_{\bm{A}_n\succeq0}\bar{\mathcal{L}}\left(\bm{A}_n,\lambda\right)=\arg\max_{\bm{A}_n\succeq0}\mathcal{L}\left(\bm{A}_1,\cdots,\bm{A}_N,\lambda\right)$,
and it is identical to the Lagrangian in (\ref{lagra85}).
Therefore, we can directly apply derivation
(\ref{lagra85})-(\ref{app:indi_maxi3}) to solve it:

Consider first $\lambda \geq 1$. For it,
$\left(1-\lambda\right)\log\det\left(\bm{I}+\bm{A}_n\bm{R}_{\bm{Y}_{n}|\bm{Y}_0,\hat{\bm{Y}}_{n}^c}\right)-\log\det\left(\bm{I}+\bm{A}_n\sigma_r^2\right)\leq
0$, $\forall \bm{A}_n\succeq0$. Therefore, it is readily shown
that:
\begin{eqnarray}
\bm{0}=\arg\max_{\bm{A}_n\succeq0}
\mathcal{L}\left(\bm{A}_1,\cdots,\bm{A}_N,\lambda\right) \ \
\textrm{for} \ \lambda \geq 1.
\end{eqnarray} Let now $\lambda < 1$. Applying
(\ref{lagra85})-(\ref{app:indi_maxi3}) we show that
\begin{eqnarray}
\bm{U}_n\bm{\eta}\bm{U}_n^\dagger=\arg\max_{\bm{A}_n\succeq0}
\mathcal{L}\left(\bm{A}_1,\cdots,\bm{A}_N,\lambda\right) \ \
\textrm{for} \ \lambda < 1,
\end{eqnarray}with
$\bm{R}_{\bm{Y}_{n}|\bm{Y}_0,\hat{\bm{Y}}_{n}^c}=\bm{U}_n\bm{S}\bm{U}_n^\dagger$,
and
\begin{eqnarray}\label{app:noiseallocation}
\eta_j=\left[\frac{1}{\lambda}\left(\frac{1}{\sigma_r^2}-\frac{1}{s_j}\right)-\frac{1}{\sigma_r^2}\right]^+,\
\ j=1,\cdots,N_n.
\end{eqnarray}
This concludes the proof.


\subsection{Solution of (\ref{eq:g_dual}) with $\lambda \geq
1$}\label{landamayor1}

Applying equivalent arguments to those in
(\ref{app:expad_logarithm}), we can rewrite the Lagrangian in
(\ref{eq:g_dual}) as:
\begin{eqnarray}\label{lagrangian_lambda_mayor_1}
\mathcal{L}\left(\bm{A}_1,\cdots,\bm{A}_N,\lambda\right)&=&\left(1-\lambda\right)\log\det\left(\bm{I}+\textrm{diag}\left(\bm{A}_1,\cdots,\bm{A}_N\right)\bm{R}_{\bm{Y}_{1:N}|\bm{Y}_0}\right)\nonumber\\
&&-\log\det\left(\bm{I}+\textrm{diag}\left(\bm{A}_1,\cdots,\bm{A}_N\right)\sigma_r^2\right)-
\lambda \mathrm{R}\nonumber ,
\end{eqnarray} It is clear that, for $\lambda \geq 1$, the
Lagrangian takes its optimal value at
$\left\{\bm{A}_1^*,\cdots,\bm{A}_N^*\right\}=\bm{0}$.

\bibliographystyle{IEEEbib}
\bibliography{bibliography}


\begin{figure}[]
\center
\includegraphics[width=5.5in]{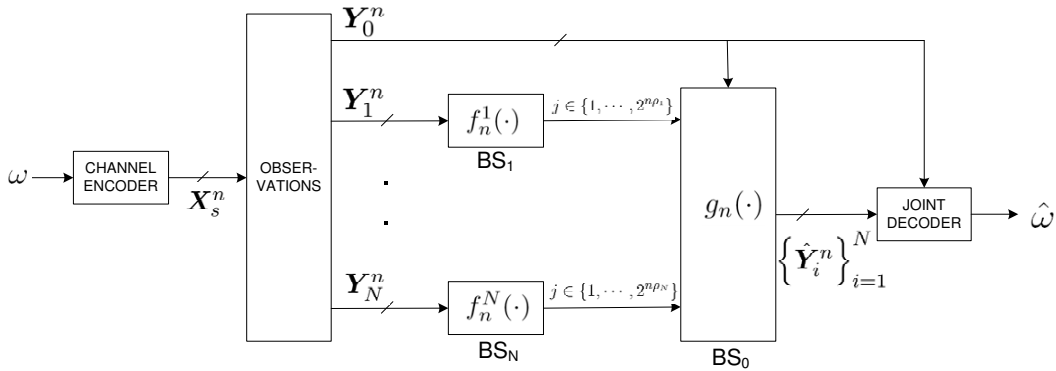}\vspace{-7mm}
\caption{Multiple-source compression with side information at the
decoder.} \label{fig:fig2}
\end{figure}



\begin{figure}[]
\vspace{-1mm} \center \subfigure[CDF versus R,
LOS]{\label{fig:fig4a}\includegraphics[width=3.3in]{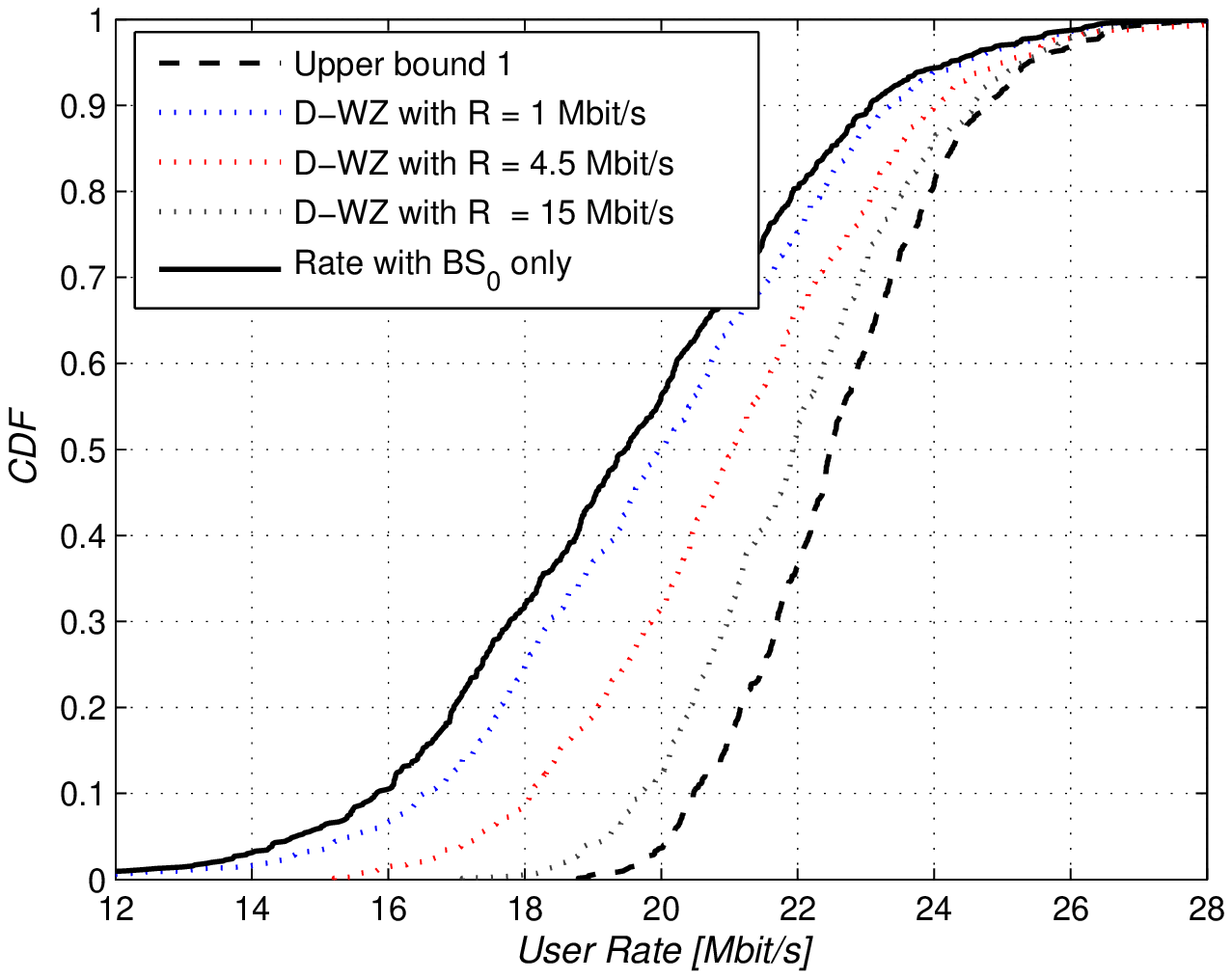}}
\hspace{-0.3in}\subfigure[CDF versus R,
N-LOS]{\label{fig:fig4b}\includegraphics[width=3.3in]{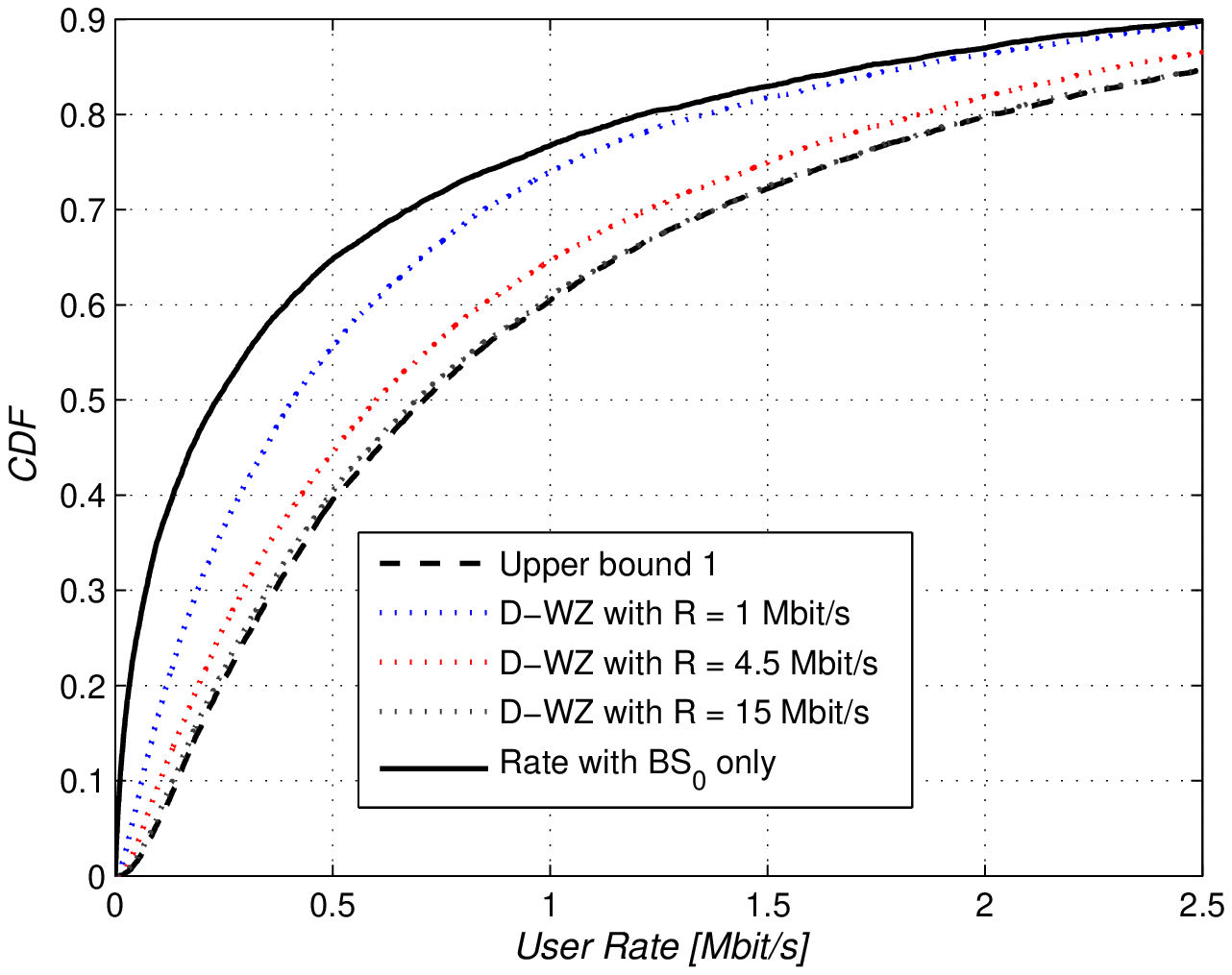}}\vspace{-2mm}
\caption{Single user capacity results with respect to the backhaul
rate. $\mathrm{BS}_1,\cdots,\mathrm{BS}_6$ cooperate with
$\mathrm{BS}_0$.}\label{fig:fig4}
\end{figure}
%
%
%

\begin{figure}[]
\vspace{-1mm} \center \subfigure[CDF versus N,
LOS]{\label{fig:fig6a}\includegraphics[width=3.3in]{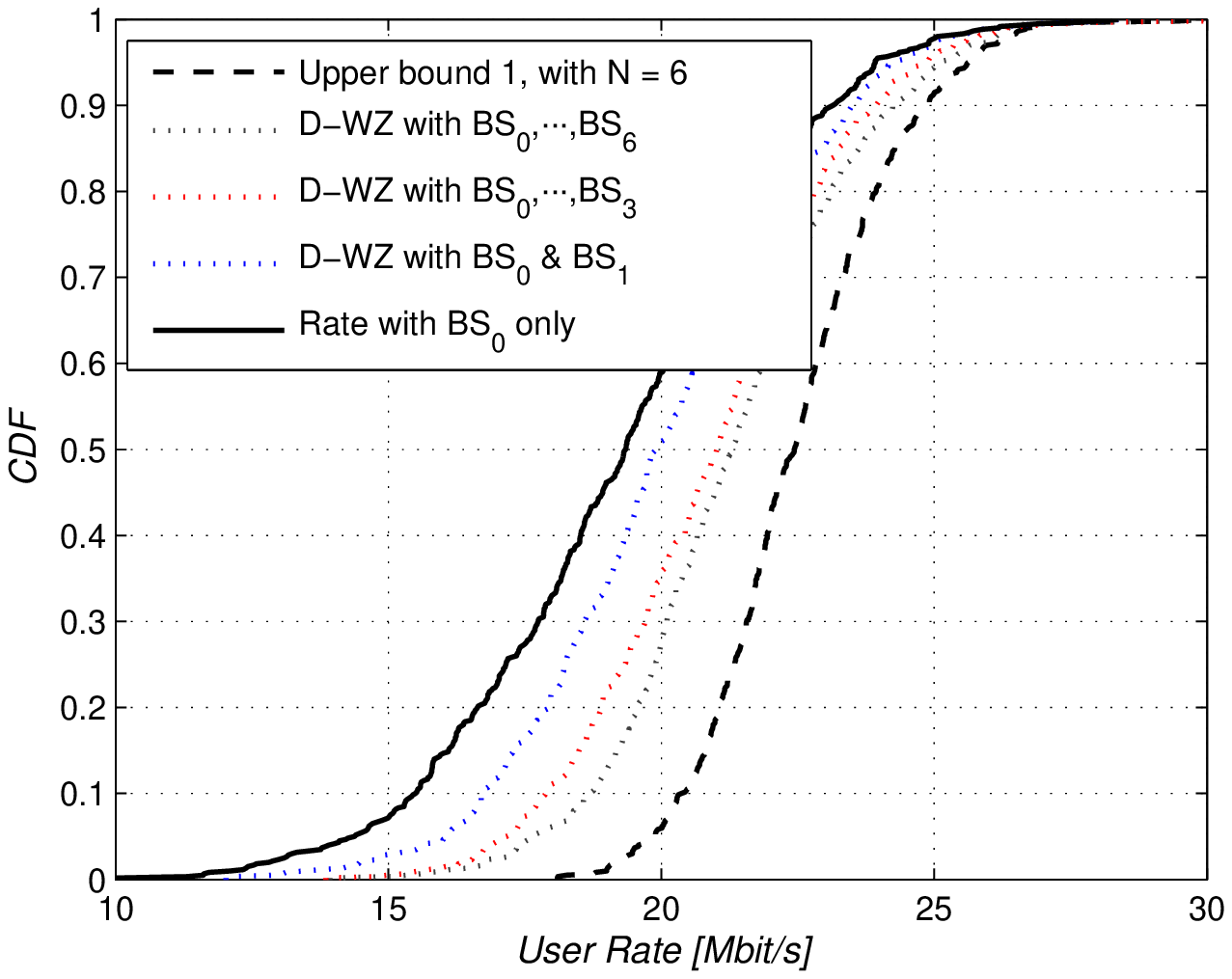}}
\hspace{-0.3in}\subfigure[CDF versus N,
N-LOS]{\label{fig:fig6b}\includegraphics[width=3.3in]{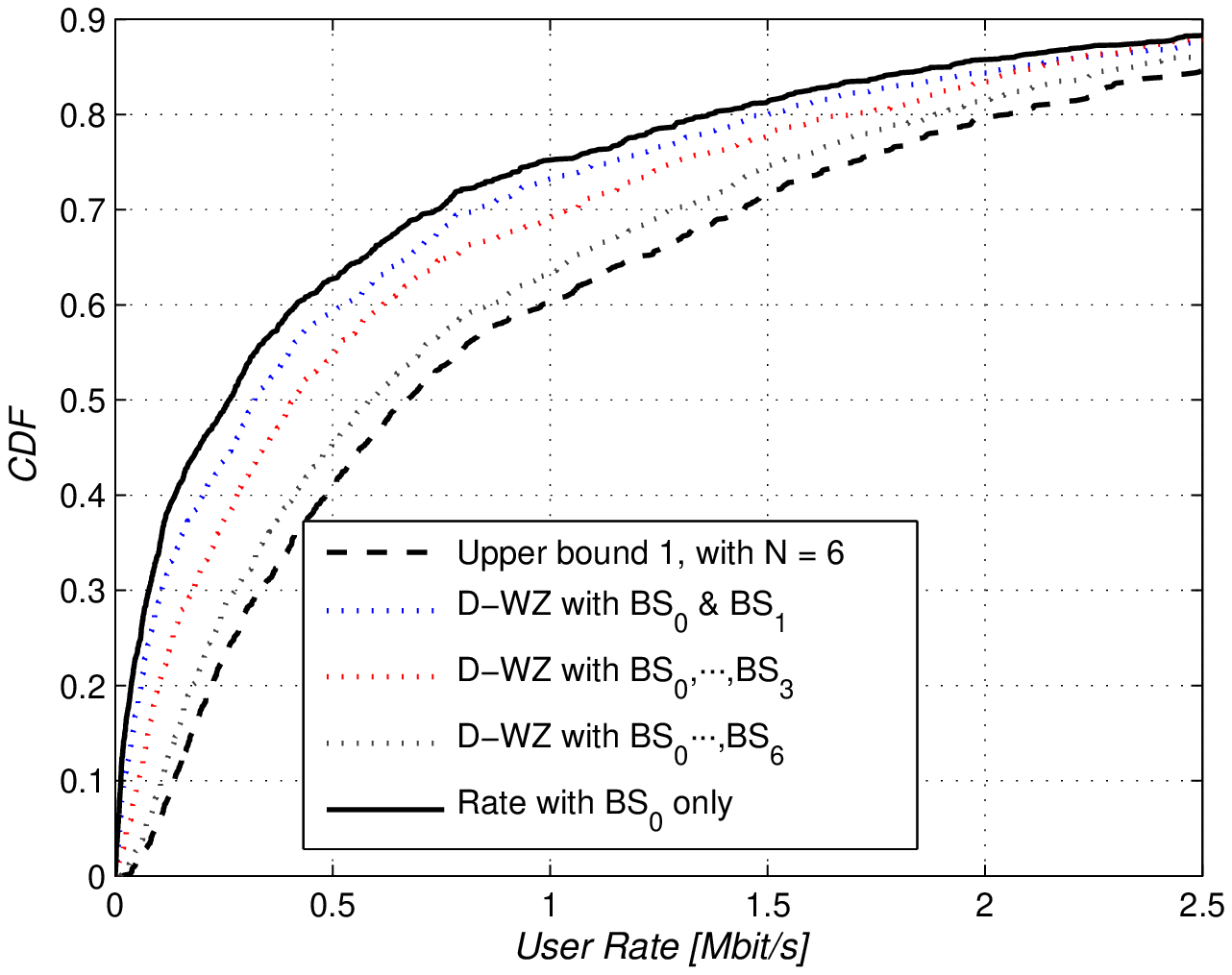}}\vspace{-2mm}
\caption{Single user capacity results with respect to the number
of Cooperative BS. Backhaul rate $\mathrm{R}=7$ Mbit/s
}\label{fig:fig6}
\end{figure}

\begin{figure}[]
\vspace{-10mm} \center
\includegraphics[width=3.3in]{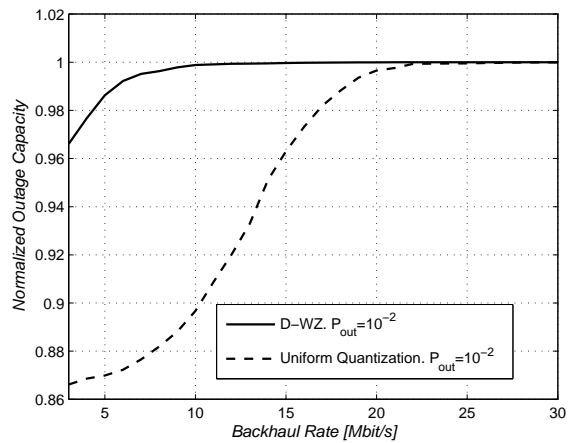} \vspace{-4mm}
\caption{Outage Capacity with D-WZ and with Quantization,
respectively, for different values of the backhaul rate
$\mathrm{R}$. LOS.} \label{fig:fig7}
\end{figure}


%
%
%
\begin{figure}[]
\vspace{-2mm} \center \subfigure[LOS
propagation]{\label{fig:fig8a}\includegraphics[width=3.3in]{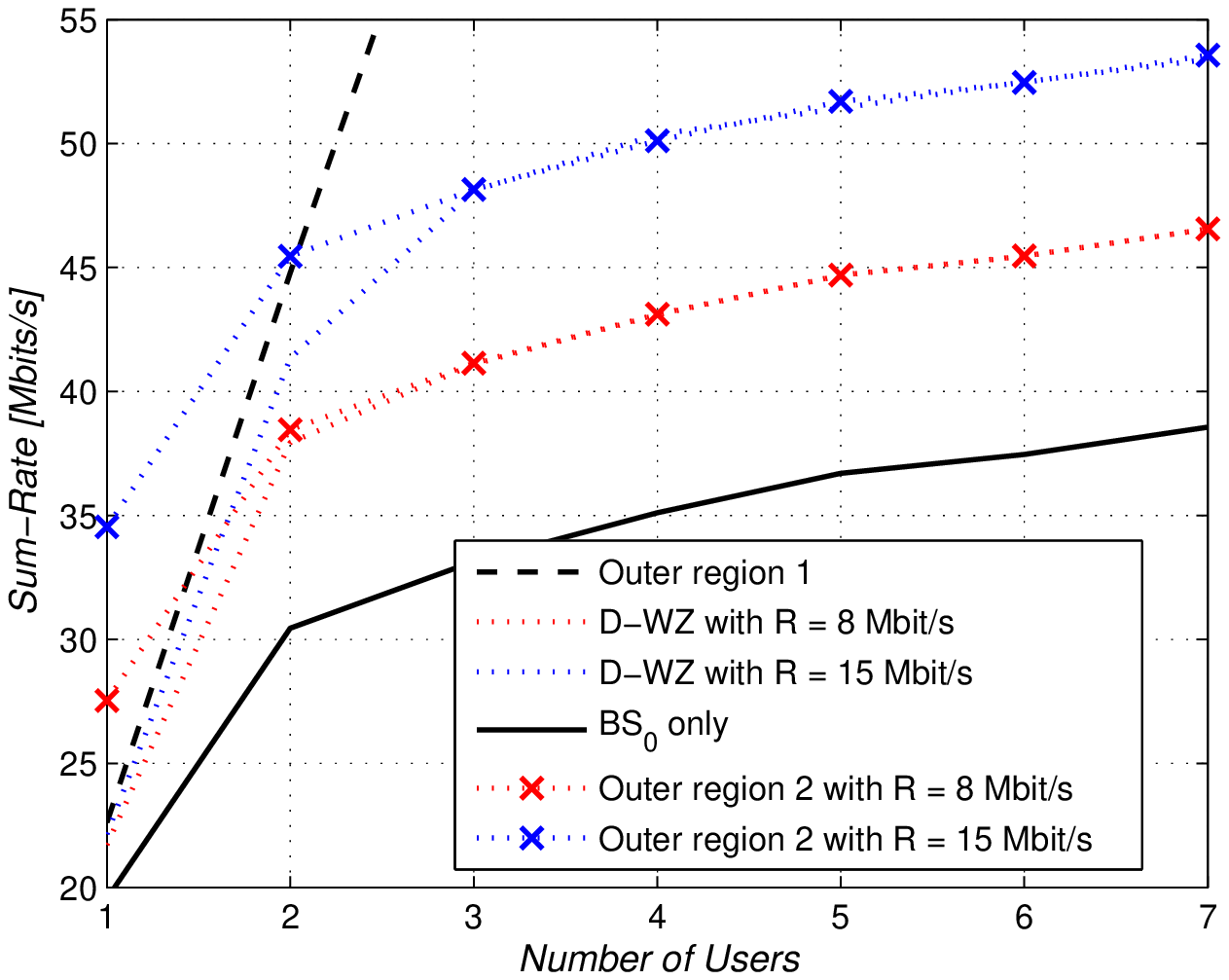}}
\hspace{-0.3in}\subfigure[N-LOS
propagation]{\label{fig:fig8b}\includegraphics[width=3.3in]{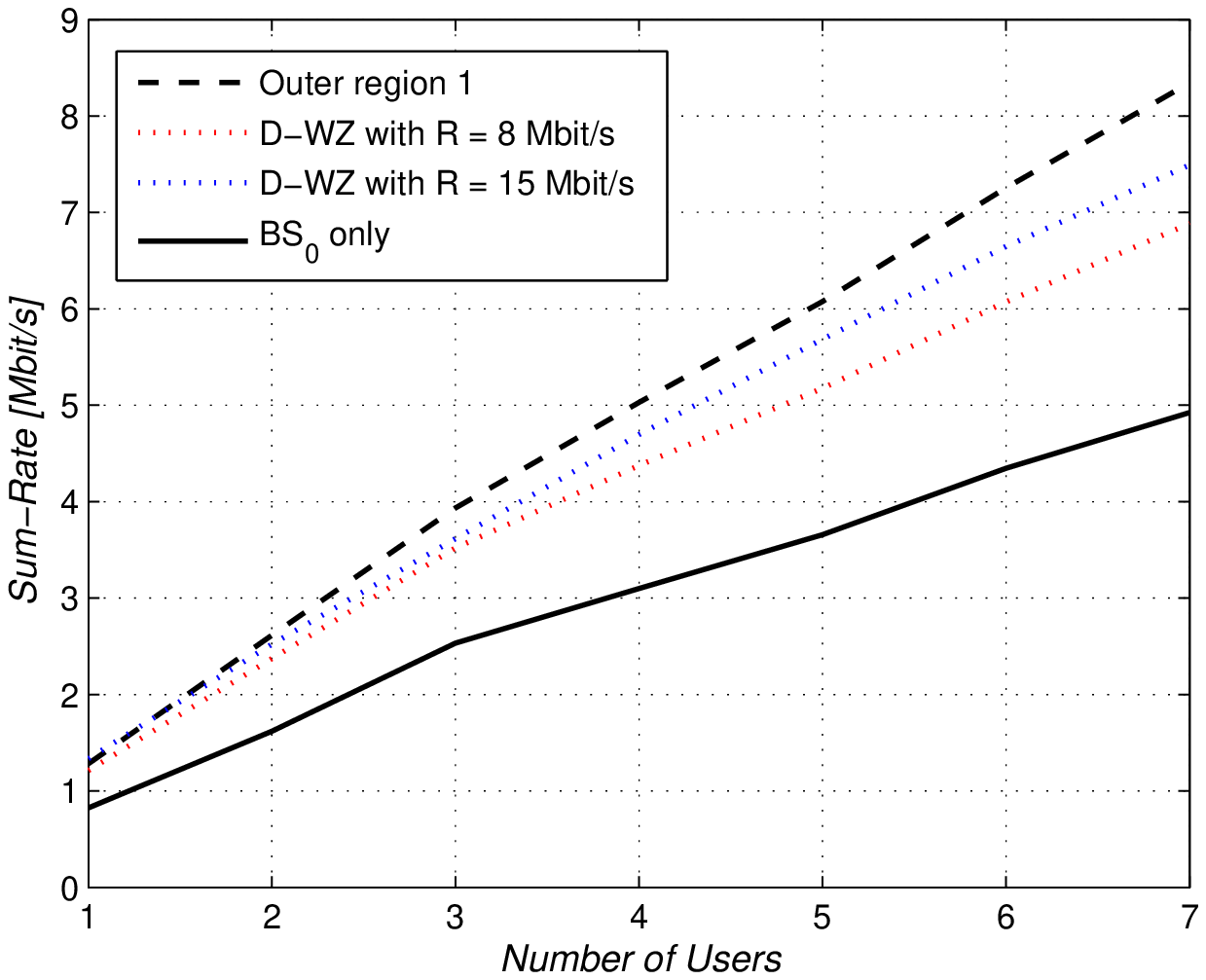}}\vspace{-3mm}
\caption{Sum-rate versus number of users.
$\mathrm{BS}_1,\cdots,\mathrm{BS}_6$ cooperate with
$\mathrm{BS}_0$. }\label{fig:fig8}
\end{figure}
%
%
%
\begin{figure}[]
\vspace{-2mm} \center \subfigure[LOS
propagation]{\label{fig:fig10}\includegraphics[width=3.3in]{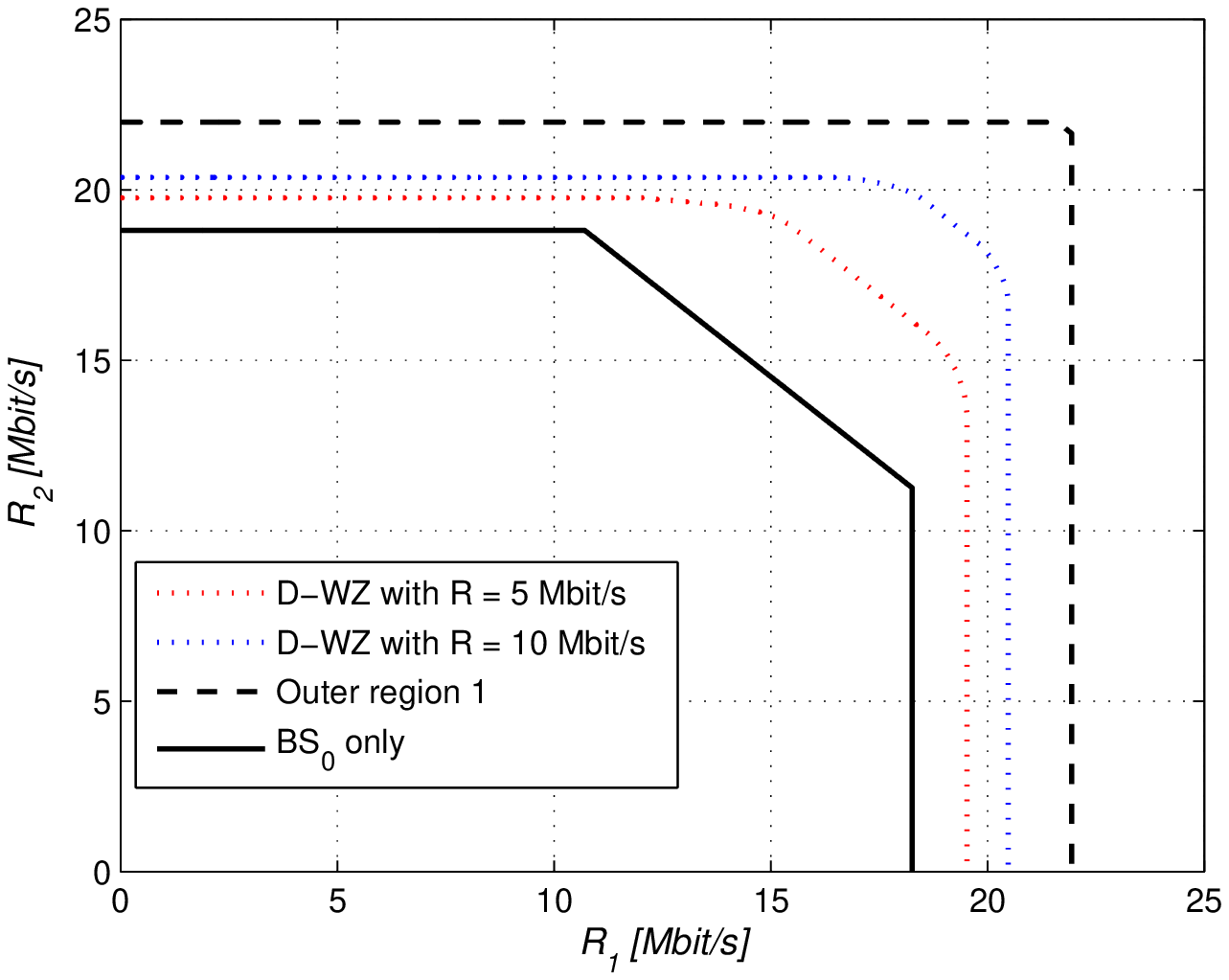}}
\hspace{-0.3in}\subfigure[N-LOS
propagation]{\label{fig:fig11}\includegraphics[width=3.3in]{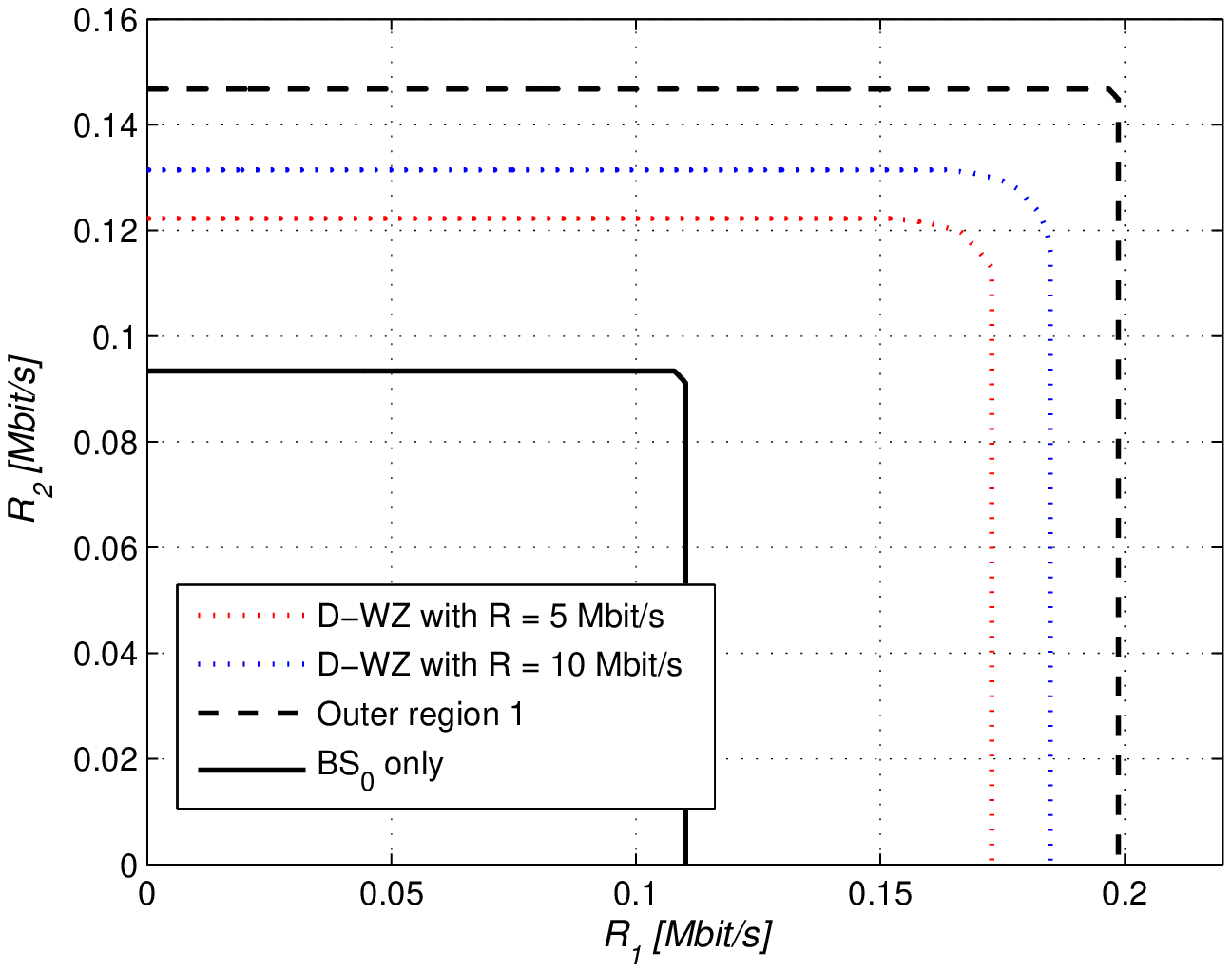}}\vspace{-3mm}
\caption{Rate region for different values of $\mathrm{R}$.
$\mathrm{BS}_1,\cdots,\mathrm{BS}_6$ cooperate with
$\mathrm{BS}_0$.}
\end{figure}
%
%

\end{document}